\renewcommand{\author}   { Andrea Simonetto }
\newcommand  {\subject}  { Deep Inference }
\renewcommand{\title}    { Indagini in Deep Inference }
\newcommand  {\keywords} { Deep Inference, Teoria della dimostrazione, Proof theory, Cut elimination }
\newtheorem{thm}{Teorema}[section]
\newtheorem{dfn}[thm]{Definizione}
\newtheorem{lem}[thm]{Lemma}
\newtheorem{cor}[thm]{Corollario}
\newtheorem{prop}[thm]{Proposizione}
\renewcommand\fs@boxed{\def\@fs@cfont{\bfseries}\let\@fs@capt\floatc@plain
  \def\@fs@pre{\setbox\@currbox\vbox{\hbadness10000
    \moveleft10.4pt\vbox{\advance\hsize by20.8pt
      \hrule \hbox to\hsize{\vrule\kern10pt
        \vbox{\kern10pt\box\@currbox\kern10pt}\kern10pt\vrule}\hrule}}}%
  \def\@fs@mid{\kern10pt}%
  \def\@fs@post{}\let\@fs@iftopcapt\iffalse}
\begin{document}
\begin{titlepage}
	\topmargin=0cm
	\begin{center}
		\mbox{{\large{\textsc{Alma Mater Studiorum $\cdot$ Universit\`a di Bologna}}}}
		\rule[0.1cm]{14.0cm}{0.1mm}
		\rule[0.5cm]{14.0cm}{0.6mm}
		{\small{\bf FACOLT\`A DI SCIENZE MATEMATICHE, FISICHE E NATURALI\\
		Corso di Laurea Magistrale in Informatica}}
	\end{center}
	\vspace{15mm}
	\begin{center}
		{\LARGE{\bf INDAGINI IN}}\\
		\vspace{3mm}
		{\LARGE{\bf DEEP}}\\
		\vspace{3mm}
		{\LARGE{\bf INFERENCE}}\\
		\vspace{19mm} {\large{\bf Tesi di Laurea in Tipi e Linguaggi di Programmazione}}
	\end{center}
	\vspace{40mm}\par\noindent
	\begin{minipage}[t]{0.52\textwidth}
		{\large{\bf Relatore:\\
		Chiar.mo Prof.\\
		Simone Martini}}
	\end{minipage}
	\hfill
	\begin{minipage}[t]{0.52\textwidth}\raggedleft
		{\large{\bf Presentata da:\\
		Andrea Simonetto}}
	\end{minipage}
	\vspace{20mm}
	\begin{center}
		{\large{\bf II Sessione\\
		Anno Accademico 2009/2010}}
	\end{center}
\end{titlepage}
\clearpage{\pagestyle{empty}\cleardoublepage}
\begin{titlepage}
	\thispagestyle{empty}
	\topmargin=5cm
	\raggedleft\large\emph{
		Alla venerata memoria \\
		di mio nonno, Gino Simonetto.} \\
	\vspace{3em}
	\raggedleft\large\emph{
		Wir m\"ussen wissen. Wir werden wissen. \\
		(David Hilbert)} \\
	\newpage
	\clearpage{\pagestyle{empty}\cleardoublepage}
\end{titlepage}
\pagenumbering{roman}

\phantomsection\addcontentsline{toc}{chapter}{Introduzione}
\chapter*{Introduzione}
\rhead[\fancyplain{}{\bfseries INTRODUZIONE}]{\fancyplain{}{\bfseries\thepage}}
\lhead[\fancyplain{}{\bfseries\thepage}]{\fancyplain{}{\bfseries INTRODUZIONE}}

La matematica ama parlare di s\'e stessa. Dalla teoria dei numeri -- ritenuta da Gauss la ``regina della matematica'' -- all'analisi -- l'hilbertiana ``sinfonia coerente dell'universo'' -- la matematica pura \`e abituata a porsi ad oggetto delle proprie speculazioni. Anche la \emph{logica matematica} possiede spiccate tendenze narcisistiche\footnote{Secondo Locke, la logica \`e ``l'anatomia del pensiero''. Occorre tuttavia precisare che usando il termine ``logica'', Locke intendeva quella parte di filosofia che studia il ragionamento e l'argomentazione; la logica matematica cominci\`o a fiorire solo un secolo e mezzo dopo la sua morte.}; nondimeno, vista la sua propensione ad affrontare questioni di fondamento, \`e certamente una delle branche che ama maggiormente parlare di matematica. Tra i vari contributi \`e doveroso citare, come esempi eccellenti, l'\emph{aritmetica di Peano} e la \emph{teoria assiomatica degli insiemi} di Zermelo--Fraenkel.

Uno dei concetti cardine di tutta la matematica \`e quello di \emph{dimostrazione}. Per studiare questi oggetti, i logici hanno sviluppato un'infrastruttura nota come \emph{teoria della dimostrazione}. In questa tesi faremo una panoramica sulla teoria della dimostrazione, concentrandoci su uno degli sviluppi pi\`u recenti, una tecnica nota come \emph{deep inference}.

La deep inference \`e una nuova \emph{metodologia} in proof theory utile a progettare \emph{famiglie di sistemi formali} (o \emph{formalismi}) con buone propriet\`a, quali:
\begin{itemize}
	\item l'\emph{efficienza} nella rappresentazione delle dimostrazioni: alcuni sistemi rendono disponibili dimostrazioni pi\`u brevi di quanto possano fare altri (questi aspetti sono studiati in \emph{complessit\`a delle dimostrazioni} o \emph{proof complexity});
	\item \emph{analiticit\`a}: alcuni sistemi vengono naturalmente con algoritmi di \emph{ricerca delle dimostrazioni} (o di \emph{proof search}) \emph{efficienti}, altri no, altri ancora solo con alcuni accorgimenti. L'analiticit\`a \`e la propriet\`a chiave per ottenere algoritmi di proof search efficienti;
	\item l'abilit\`a di esprimere dimostrazioni che sono matematicamente naturali, cio\`e senza artefatti sintattici ``amministrativi'' (si parla in questi casi di \emph{burocrazia delle dimostrazioni}). Uno dei problemi di ricerca principali in proof theory \`e trovare una buona corrispondenza tra le dimostrazioni e il loro significato. In particolare, il problema dell'\emph{identit\`a delle dimostrazioni} \`e prominente, e consiste nel trovare nozioni di equivalenza tra dimostrazioni non banali, supportate da semantiche appropriate alle dimostrazioni e ai sistemi formali.
\end{itemize}

I \emph{formalismi} controllano, in larga parte, la progettazione delle regole d'inferenza. Per esempio, la deduzione naturale prescrive che, per ogni connettivo, siano date due regole: una d'introduzione e una di eliminazione. In tutti i formalismi tradizionali (ma anche in quelli pi\`u moderni derivati dai primi), viene adottato un meccanismo noto come \emph{shallow inference} o \emph{inferenza di superficie}, nel quale le regole di inferenza operano sui connettivi pi\`u prossimi alla radice delle formule -- quando vengono viste come alberi, cio\`e quando ci si concentra sulla loro \emph{struttura sintattica}. 

L'inferenza di superficie \`e una metodologia molto naturale, poich\'e permette di procedere per induzione strutturale diretta sulle formule. Tuttavia non \`e ottimale riguardo alcune propriet\`a dei sistemi formali, quali quelle sopra menzionate. In particolare:
\begin{itemize}
	\item sembra sia incapace di fornire formalismi analitici che siano efficienti riguardo la complessit\`a delle dimostrazioni;
	\item le dimostrazioni tendono ad avere molta burocrazia, cio\`e rappresentazioni sintatticamente complesse degli argomenti matematici.
\end{itemize}

Inoltre la shallow inference fatica a relazionarsi con le logiche modali. Teorie logiche modali possono essere definite nei sistemi di Frege-Hilbert, ma ottenere analiticit\`a per essi \`e una sfida molto ardua, in alcuni casi ancora irrisolta. In pi\`u \`e altrettanto difficile (se non addirittura impossibile) esprimere sistemi formali per logiche non-commutative.

Nel seguito mostreremo alcuni tra i maggiori risultati di teoria della dimostrazione ottenuti mediante un formalismo deep inference, noto come \emph{calcolo delle strutture}. Il calcolo delle strutture \`e un contributo importante nello sviluppo della metodologia deep inference, per la sua semplicit\`a e la sua somiglianza coi formalismi tradizionali. Grazie al calcolo delle strutture, sono stati ottenuti i seguenti risultati:
\begin{itemize}
	\item la logica classica, intuizionista, lineare a alcune logiche modali possono essere espresse in sistemi che godono di analiticit\`a;
	\item \`e possibile esprimere logiche lineari munite di \emph{operazioni} \emph{non-commuta-tive} in sistemi analitici, e si dimostra che queste logiche non possono essere formalizzate analiticamente nel calcolo dei sequenti; inoltre questi sistemi logici sono in forte corrispondenza con le algebre di processo;
	\item sono state sviluppate tecniche nuove e generali di normalizzazione, e sono state scoperte \emph{nozioni del tutto nuove} di normalizzazione, in aggiunta alla tradizionale \emph{cut elimination};
	\item la maggior parte dei sistemi sviluppati sono costituiti interamente da regole d'inferenza \emph{locali}; una regola d'inferenza locale ha \emph{complessit\`a computazionale costante}. La localit\`a \`e una propriet\`a difficile da conseguire, e non \`e ottenibile nel calcolo dei sequenti per la logica classica;
	\item i sistemi ottenuti sono estremamente modulari; questo significa una \emph{forte indipendenza tra le regole d'inferenza};
	\item moti sistemi sono stati implementati, grazie a tecniche che producono regole d'inferenza atte a migliorare l'efficienza senza sacrificare le propriet\`a teoriche;
	\item tutti i sistemi ottenuti sono semplici, nel senso che le regole d'inferenza sono \emph{contenute e intelligibili}.
\end{itemize}

Il calcolo delle strutture \`e una generalizzazione di molti formalismi shallow inference, in particolare del calcolo dei sequenti. Questo significa che ogni dimostrazione data in questi formalismi shallow inference, pu\`o essere ``mimata'' nel calcolo delle strutture, preservandone la complessit\`a e senza perdita di propriet\`a strutturali.

\clearpage{\pagestyle{empty}\cleardoublepage}
\tableofcontents
\rhead[\fancyplain{}{\bfseries\leftmark}]{\fancyplain{}{\bfseries\thepage}}
\lhead[\fancyplain{}{\bfseries\thepage}]{\fancyplain{}{\bfseries INDICE}}
\clearpage{\pagestyle{empty}\cleardoublepage}

\listoffigures
\clearpage{\pagestyle{empty}\cleardoublepage}


\chapter{Formalismi e metodologie}
\lhead[\fancyplain{}{\bfseries\thepage}]{\fancyplain{}{\bfseries\rightmark}}
\pagenumbering{arabic}


La teoria della dimostrazione nacque sul finire del XIX secolo, ad opera di David Hilbert e dei suoi collaboratori. Fu un periodo in cui la matematica visse una crisi senza precedenti, intaccata in prossimit\`a dei suoi fondamenti logici dal manifestarsi di una serie di paradossi, proprio mentre la scuola intuizionista di Brouwer ne metteva in dubbio alcuni princ\`ipi filosofici basilari, fatti che sommati minavano alla base la maggior parte della matematica esistente. 

Tuttavia, forti evidenze empiriche suffragavano la matematica conosciuta, e molti matematici rifiutarono di abbandonarla o rifondarla: tra loro, Hilbert avanz\`o un programma di salvataggio completo. Egli propose di formulare la matematica classica come teoria formale assiomatica, e in seguito di provarne la \emph{consistenza} (ossia la non contraddittoriet\`a).

Prima della proposta di Hilbert, la consistenza di teorie assiomatiche veniva provata esibendo un ``modello'': data una teoria assiomatica, un \emph{modello} \`e un sistema di oggetti, presi da qualche altra teoria, tali da soddisfare gli assiomi, cio\`e, ad ogni oggetto o nozione primitiva della teoria assiomatica, viene fatto corrispondere un oggetto o una nozione dell'altra teoria, in modo tale che gli assiomi corrispondano a teoremi nell'altra teoria. Se l'altra teoria \`e consistente, anche quella assiomatica deve esserlo. Un esempio famoso \`e dato dalla dimostrazione di Beltrami (1868) della consistenza della geometria iperbolica: egli prov\`o che le rette nel piano non-euclideo della geometria iperbolica, potevano essere rappresentate dalle geodesiche su una superficie di curvatura costante e negativa nello spazio euclideo. Da questo concluse che il piano iperbolico dev'essere consistente, a patto che la geometria euclidea lo sia.

\`E chiaro che il metodo del modello \`e relativo. La teoria assiomatica \`e consistente solo se il suo modello lo \`e. Ma per provare l'assoluta consistenza della matematica classica, il metodo dei modelli non offriva speranze: nessuna teoria matematica era accettabile come modello, poich\'e da ognuna di esse sarebbe fatalmente riemerso il problema di partenza, cio\`e dimostrarne la consistenza.

Hilbert propose di affrontare il problema in maniera diretta: per provare la consistenza di una teoria, si deve dimostrare al suo interno una proposizione sulla teoria stessa, cio\`e un teorema su tutte le possibili dimostrazioni della teoria. La branca di matematica che si occupa di questi aspetti di formalizzazione e riflessione, venne battezzata da Hilbert ``metamatematica'' o ``teoria della dimostrazione''.

Purtroppo il sogno di Hilbert s'infranse nel 1931 con il Secondo Teorema d'Incompletezza di~\cite{God31}, che enuncia l'impossibilit\`a per sistemi abbastanza espressivi da formalizzare l'aritmetica di dimostrare la propria consistenza: purtroppo la quasi totalit\`a della matematica da salvare, passava per l'aritmetica. Tuttavia la teoria della dimostrazione sopravvisse a questo scossone, diventando importante in vari ambiti, tra i quali l'informatica.

In informatica, i dimostratori automatici di teoremi richiedono uno studio della struttura combinatoriale delle dimostrazioni, mentre nella programmazione logica la deduzione \`e usata come fondamento della computazione. Inoltre esistono forti connessioni tra sistemi logici e linguaggi di programmazione funzionali, e tecniche di proof theory sono state utilizzate per porre dei vincoli di complessit\`a computazionale ad alcuni di questi linguaggi (ad esempio in~\cite{Gir95a}).


Uno dei princ\`ipi fondamentali in proof theory \`e che la formalizzazione di una teoria richiede una totale astrazione dal significato, cio\`e un \emph{sistema formale} dovrebbe essere una mera manipolazione simbolica spogliata di ogni interpretazione semantica. Dato un sistema formale, distinguiamo il livello rigoroso del sistema stesso (o \emph{livello oggetto}), dal livello in cui esso viene studiato (il \emph{meta-livello}) espresso nel linguaggio della matematica intuitiva e informale. 

Inoltre, per essere convincenti, gli strumenti usati nelle meta-teorie dovrebbero essere ristretti a tecniche -- chiamate \emph{finitarie} dai formalisti, o, in un accezione pi\`u moderna, \emph{combinatorie} -- che impiegano solo oggetti intuitivi e processi effettivi (in accordo con la scuola intuizionista). Nessuna classe infinita di oggetti deve poter essere trattata come un ``tutto''; le prove di esistenza dovrebbero sempre esibire, almeno implicitamente, un testimone.

La proof theory \`e dunque una collezione di meta-teorie finitarie, espresse nel linguaggio ordinario e con l'ausilio di simboli matematici -- come variabili di meta-livello (o \emph{meta-variabili}) introdotte ove necessario -- tali da caratterizzare le propriet\`a dei vari sistemi formali. In questo capitolo introdurremo le nozioni di base della teoria della dimostrazione, partendo da insiemi finiti e generando quelli infiniti con procedure effettive, calcolabili.

Quella di cui abbiamo discusso finora, va oggi sotto il nome di teoria della dimostrazione \emph{strutturale}, cio\`e un'analisi combinatoriale della struttura delle dimostrazioni formali; gli strumenti centrali sono il \emph{Teorema di eliminazione del taglio} e quello di \emph{normalizzazione}.

Il percorso che seguiremo in questo capitolo \`e liberamente ispirato a~\cite{Kle52}, e si articola in quattro sezioni, le prime tre piuttosto standard, mentre la quarta aggiunta appositamente per trattare il tema della tesi, ossia la deep inference. Nell'ordine:

\begin{enumerate}
	\item si definir\`a uno strumento linguistico formale, in grado di produrre dei \emph{linguaggi oggetto} che costituiranno gli elementi base della logica da indagare;
	\item sar\`a introdotto un livello linguistico formale superiore (o \emph{meta-livello}), tale da permetterci di ragionare sui vari linguaggi oggetto, e saranno date le definizioni e gli strumenti di indagine basilari;
	\item verranno formalizzati i concetti di \emph{deduzione} e di \emph{dimostrazione}, a partire da un generico \emph{linguaggio oggetto}, usando gli strumenti del \emph{meta-livello}, e saranno presi in esame i \emph{formalismi} (i.e. le famiglie di sistemi) esprimibili con tali strumenti;
	\item si definirranno e si metteranno a confronto le due \emph{metodologie di inferenza}: di superficie e di profondit\`a (shallow \emph{versus} deep inference).
\end{enumerate}
Affronteremo il tutto sempre tenedo presente il vincolo di effettiva costruibilit\`a delle procedure e la caratterizzazione combinatoria delle tecniche e degli strumenti via via introdotti, in pieno stile formalista.

\section{Linguaggi formali} 

In questa sezione svilupperemo le basi di linguaggi formali che utilizzeremo da qui in avanti. Alcuni concetti saranno forniti in maniera intuitiva, altri in modo pi\`u preciso: per approfondimenti su linguaggi formali e grammatiche, si rimanda ad~\cite{Aho72, Aho06}.

\begin{dfn}[Alfabeti, stringhe, linguaggi, sottolinguaggi]
Un \emph{alfabeto} $\Sigma$ \`e un insieme finito non vuoto di \emph{simboli}. Una \emph{stringa su un alfabeto}\footnote{Per convenzione useremo le lettere minuscole prese dall'inizio dell'alfabeto inglese per denotare i simboli, e le lettere minuscole alla fine dell'alfabeto, di solito $w, x, y, z$, per denotare le stringhe.} $\Sigma$ \`e una sequenza finita di simboli scelti da $\Sigma$. La \emph{stringa vuota} $\epsilon$ \`e la stringa composta da zero simboli; essa \`e una stringa su qualunque alfabeto.

Siano $x = a_1 \cdots a_n$ e $y = b_1 \cdots b_m$ due stringhe su un alfabeto $\Sigma$: la loro \emph{concatenazione} (si denota giustapponendo $x$ a $y$) \`e la stringa $xy = a_1 \cdots a_n b_1 \cdots b_m$. In particolare, per ogni stringa $w$, si ha $\epsilon w = w \epsilon = w$. Dato un alfabeto $\Sigma$, definiamo:
$$
\begin{array}{lclll}
	\Sigma^0 & = & \{\epsilon\} & \qquad & \mbox{(Stringa di \emph{lunghezza} $0$)} \\
	\Sigma^{n+1} & = & \{ a w \:|\: a \in \Sigma, w \in \Sigma^n\} & \qquad & \mbox{(Stringhe di \emph{lunghezza} $n+1$)} \\
	\Sigma^* & = & \bigcup_{n \in \mathbb{N}} \Sigma^n & \qquad & \mbox{(Stringhe su $\Sigma$)}
\end{array}
$$

Un \emph{linguaggio} $\mathscr{L}$ \emph{su un alfabeto} $\Sigma$ \`e un'insieme di stringhe su $\Sigma$ (cio\`e $\mathscr{L} \subseteq \Sigma^*$). Infine, dato un linguaggio $\mathscr{L}$, si definisce \emph{sottolinguaggio di} $\mathscr{L}$ qualunque insieme di stringhe $\mathscr{L}' \subseteq \mathscr{L}$.
\end{dfn}

\begin{dfn}[Grammatiche, linguaggio generato]
Una grammatica \`e una quadrupla $G = (\Sigma, \mathcal{C}, S, \mathcal{P})$, dove:
\begin{itemize}
	\item $\Sigma$ \`e un alfabeto di \emph{simboli grammaticali};
	\item $\mathcal{C} \subseteq \Sigma$ \`e un insieme di simboli, detti \emph{categorie sintattiche} (o \emph{simboli non terminali}, in contrapposizione con gli altri simboli grammaticali $\Sigma{\smallsetminus}\mathcal{C}$, chiamati \emph{simboli terminali});
	\item $S \in \mathcal{C}$ \`e una particolare categoria sintattica, chiamata \emph{simbolo iniziale}, che rappresenta il linguaggio da definire;
	\item $\mathcal{P} \subseteq \Sigma^*{\times}\Sigma^*$ \`e un insieme di coppie di stringhe $(\alpha, \beta) \in \Sigma^*{\times}\Sigma^*$, chiamate \emph{produzioni grammaticali}. $\alpha$ \`e chiamata \emph{testa della produzione}, mentre $\beta$ \`e il \emph{corpo della produzione}.
\end{itemize}

Data una grammatica $(\Sigma, \mathcal{C}, S, \mathcal{P})$, la \emph{riscrittura ad un passo} ($\leadsto$) \`e un'applicazione di una delle produzioni in $\mathcal{P}$. Formalmente: siano $\alpha, \beta, x, y \in \Sigma^*$, allora:
$$
	x \alpha y \leadsto x \beta y \quad\mbox{ sse }\quad \mbox{esiste }(\alpha, \beta) \in \mathcal{P}
$$
La \emph{riscrittura multipasso} ($\leadsto^*$) \`e la chiusura riflessiva e transitiva di quella ad un passo:
$$
	w_1 \leadsto^* w_n \quad\mbox{ sse }\quad w_1 \leadsto \cdots \leadsto w_n \quad\mbox{per qualche $n \ge 0$}
$$
In particolare $w \leadsto^* w$ per ogni $w \in \Sigma^*$.

Il \emph{linguaggio generato da una grammatica} $G = (\Sigma, \mathcal{C}, S, \mathcal{P})$ (denotato con $\mathscr{L}_G$) \`e l'insieme delle stringhe di simboli terminali ottenibili tramite riscrittura multipasso a partire dal simbolo iniziale. In simboli:
$$
	\mathscr{L}_G = \{ w \in (\Sigma{\smallsetminus}\mathcal{C})^* \:|\: S \leadsto^* w\}
$$
\end{dfn}

\begin{dfn}[Grammatiche context-free, BNF]
Sia $G = (\Sigma, \mathcal{C}, S, \mathcal{P})$ una grammatica. Una \emph{regola di produzione} \`e \emph{context-free} se \`e della forma $(P, \beta)$ con $P \in \mathcal{C}$ e $\beta \in \Sigma^*$. Una \emph{grammatica} si dice \emph{context-free} se ogni sua regola di produzione \`e context-free.

Un modo compatto ed elegante per scrivere le regole di produzione grammaticale, \`e quello di usare la \emph{forma di Backus-Naur} o \mbox{BNF}~(\cite{Bac60}). Sia $G = (\Sigma, \mathcal{C}, S, \mathcal{P})$ una grammatica e sia:
$$
	\mathcal{P} = \{ (\alpha_1, \beta_{1,1}), \ldots, (\alpha_1, \beta_{1,n_1}), \ldots, (\alpha_k, \beta_{k,1}), \ldots, (\alpha_k, \beta_{k,n_k}), \ldots \}
$$
il suo insieme di produzioni. Allora $\mathcal{P}$ in BNF si rappresenta come segue:
\begin{eqnarray*}
	\alpha_1 & ::= & \beta_{1,1} \:|\: \beta_{1,2} \:|\: \cdots \:|\: \beta_{1,n_1} \\
	\cdots & \cdots & \cdots \\
	\alpha_k & ::= & \beta_{k,1} \:|\: \cdots \:|\: \beta_{k,n_k} \\
	\cdots & \cdots & \cdots 
\end{eqnarray*}
\end{dfn}


L'ultimo (ma non ultimo) strumento sintattico che consideriamo, serve per fare emergere le \emph{profonde simmetrie} che soggiacciono ai sistemi logici formali, ed \`e uno degli strumenti pi\`u usati in proof theory: il sequente. I sequenti sono una notazione sintattica, finalizzata ad inserire le formule logiche in ambienti adatti al ragionamento logico-deduttivo. Pi\`u precisamente:

\begin{dfn}[Sequente]\label{def:seq}
Dato un linguaggio $\mathscr{L}$, un \emph{sequente} \`e un espressione del tipo:
$$
	\Gamma \vdash \Delta
$$
dove $\Gamma, \Delta$ sono \emph{liste finite} (eventualmente vuote) di stringhe di $\mathscr{L}$ -- con le usuali operazioni definite sulle liste: $\Gamma, P$ \`e l'aggiunta di una stringa $P$ in coda ad una lista $\Gamma$, mentre $\Gamma, \Gamma'$ \`e la concatenazione delle liste $\Gamma$ e $\Gamma'$; non ci sono simboli per la lista vuota.

Il simbolo $\vdash$ \`e noto come \emph{turnstile} o \emph{tornello} e fu originariamente introdotto in~\cite{Fre79}.
\end{dfn}

L'idea intuitiva \`e che il sequente afferma (ipotizza) la deducibilit\`a di almeno una formula logica in $\Delta$ a partire dalle premesse in $\Gamma$. Se $\Gamma = P_1, \ldots, P_n$ e $\Delta = Q_1, \ldots, Q_m$, il sequente $\Gamma \vdash \Delta$ \`e da intendersi come:
\begin{center}
	\textbf{Se}\quad $P_1$ \emph{e} $\cdots$ \emph{e} $P_n$ \quad\textbf{allora}\quad $Q_1$ \emph{oppure} $\cdots$ \emph{oppure} $Q_m$
\end{center}
dove i significati di \emph{``Se \ldots~allora \ldots''}, \emph{``e''} ed \emph{``oppure''} devono essere resi espliciti in maniera formale.

Il sequente avente una lista vuota alla \emph{destra} del tornello ($\Gamma \vdash ~$), afferma l'\emph{inconsistenza delle premesse}, quello avente la lista vuota alla \emph{sinistra} del tornello ($~ \vdash \Delta$), afferma che $\Delta$ \`e un \emph{teorema}, ossia che \`e vero a prescindere da ogni premessa. Il \emph{sequente vuoto} (cio\`e avente liste vuote alla \emph{destra} ed alla \emph{sinistra} del tornello) \emph{afferma il falso} (se in un sistema logico-formale si \`e in grado di \emph{dimostrare il falso}, allora esso \`e \emph{inconsistente}).

\section{Meta-livello} 


In questa sezione introdurremo i principali strumenti meta-linguistici: in genere nei testi di logica questo aspetto \`e lasciato perlopi\`u ad un livello intuitivo. Ho cercato, con questa presentazione originale, di renderli pi\`u formali perch\'e, sebbene spesso sottovalutati, ritengo che offrano alcuni interessanti spunti di riflessione.

Osserviamo come, data una grammatica:
$$
	G = (\Sigma, \mathcal{C} = \{ S_1, \ldots, S_n \}, S_i, \mathcal{P})
$$
al variare di $i \in \{ 1, \ldots, n \}$ si producano linguaggi $\mathscr{L}_i$ diversi, seppur correlati tra loro, in funzione di quale categoria sintattica scegliamo come simbolo iniziale.

\begin{dfn}[Meta-variabili, meta-linguaggi, (sotto)formule]
Per ogni categoria sintattica $S_i$, definiamo un insieme finito $\mathscr{M}_i$ di \emph{meta-variabili} di categoria $S_i$, che sono dei ``segnaposti'' per una qualche stringa di $\mathscr{L}_i$. Grazie alle meta-variabili, possiamo imporre dei vincoli sulla forma delle stringhe di $\mathscr{L}_G$.

Il \emph{meta-livello linguistico} $\mathscr{L}_G'$ \`e quello in cui, giunti ad un certo passo di riscrittura, sostituiamo ad ogni occorrenza di simboli non terminali, una meta-variabile di categoria corrispondente. Il \emph{meta-alfabeto} \`e composto dai terminali e dalle meta-variabili:
$$
	\Sigma' = \Sigma{\smallsetminus}\mathcal{C} \cup \bigcup_{1 \le i \le n}\mathscr{M}_i
$$
mentre il \emph{meta-linguaggio} di $G$ \`e definito da:
$$
	\mathscr{L}_G' = \bigcup \{ \varphi(w) \:|\: S \leadsto^* w \}
$$
dove $\varphi(w)$ sta per ``qualunque sostituzione di metavariabili al posto dei simboli non terminali in $w$''. In simboli, se $a$ denota un terminale:
\begin{eqnarray*}
	\varphi & : & \Sigma^* \rightarrow \mathscr{P}(\Sigma'^*) \\
	\varphi(\epsilon) & = & \{ \epsilon \} \\
	\varphi(a w) & = & \{ a y \:|\: y \in \varphi(w) \} \\
	\varphi(S_i w) & = & \{ \sigma y \:|\: \sigma \in \mathscr{M}_i, y \in \varphi(w) \}
\end{eqnarray*}

Usiamo l'appellativo \emph{formula} per riferirci alle stringhe del meta-linguaggio $\mathscr{L}_G'$. Data una formula $w$ appartenente a $\mathscr{L}_G'$, per \emph{sottoformula} s'intende una qualunque porzione di $w$, che sia a sua volta compresa in $\mathscr{L}_G'$.
\end{dfn}

\`E immediato dimostrare che, data una grammatica $G$, il meta-linguaggio \`e pi\`u ricco del linguaggio, cio\`e che, per ogni $G$:
$$
	\mathscr{L}_G \subset \mathscr{L}_G'
$$
Infatti, al meta-linguaggio appartengono banalmente tutte le stringhe di $\mathscr{L}_G$ (se spingiamo la riscrittura fino a produrre stringhe di terminali, la funzione $\varphi$ non fa niente), mentre in $\mathscr{L}_G$ non ci sono le meta-variabili e quindi \`e strettamente incluso. 

Le meta-variabili si \emph{istanziano} a stringhe del linguaggio oggetto tramite \emph{unificazione}, grazie alla quale \`e anche possibile eseguire delle \emph{istanze parziali} tra meta-variabili e altre formule del meta-linguaggio.

Osserviamo che i linguaggi sono insiemi infiniti, cos\`i come il loro meta-livello genera insiemi infiniti. Tuttavia la base da cui sono generati questi insiemi (la grammatica) \`e finita e la procedura di generazione \`e concreta. Inoltre per grammatiche context-free verificare se una stringa appartiene o meno al linguaggio generato \`e un problema decidibile (in tempo polinomiale), e l'operazione di unificazione \`e anch'essa effettiva. Insomma, tutti gli strumenti dati fin qui sono \emph{finitari}, in pieno stile formalista.

Il meta-linguaggio ci permette di ragionare induttivamente (ricorsivamente) sulla struttura delle stringhe di un linguaggio. Normalmente l'induzione \`e concentrata sulla parte pi\`u esterna delle formule, cio\`e su quella di superficie: ma la metodologia deep inference aggiunge qualcosa in pi\`u.

\begin{dfn}[Contesti, saturazione, ordine]
Data una grammatica $G = (\Sigma, \mathcal{C}, S, \mathcal{P})$, il \emph{linguaggio dei contesti su} $G$, denotato con $\Xi_G$, \`e definito come il linguaggio generato dalla grammatica aumentata $(\Sigma \cup \{ \emptyctx \}, \mathcal{C}, S, \mathcal{P} \cup \{ (S, \emptyctx) \})$, dove $\emptyctx \not\in \Sigma$ \`e un nuovo simbolo terminale chiamato \emph{contesto vuoto}. Un \emph{contesto (generico)} \`e una stringa di $\mctx*{C} \in \Xi_G$ (si indica spesso con $\mctx{C}{\emptyctx}$ per enfatizzare il fatto che \`e un contesto).

Intuitivamente un contesto \`e una stringa di $\mathscr{L}_G$ con alcuni ``buchi'' (denotati da $\emptyctx$) che possono a loro volta essere riempiti con stringhe di $\mathscr{L}_G$. L'operazione di \emph{saturazione di un contesto} $\mctx*{C}$ \emph{con una stringa} $w \in \mathscr{L}_G$ si indica con $\mctx{C}{w}$, e consiste nella sostituzione testuale di $w$ al posto di tutte le occorrenze di $\emptyctx$ dentro $\mctx*{C}$; l'\emph{ordine di un contesto} (in simboli $\| \mctx*{C} \|$) \`e il numero di occorrenze di $\emptyctx$ al suo interno. Ambedue si definiscono formalmente per induzione sulla struttura di $\mctx*{C}$ come mostrato in Figura~\ref{fig:satordctx}.

\begin{figure}
\begin{center}
\begin{tabular}{lclclcl}
	\multicolumn{3}{c}{\textbf{Saturazione contesti}} & \hspace{5em} & \multicolumn{3}{c}{\textbf{Ordine contesti}} \\
	$\epsilon\{w\}$ & $=$ & $\epsilon$ & & $\| \epsilon \|$ & $=$ & $0$ \\
	$(\emptyctx y)\{w\}$ & $=$ & $w (y\{w\})$ & & $\| \emptyctx y \|$ & $=$ & $1 + \| y \|$ \\
	$(a y)\{w\}$ & $=$ & $a (y\{w\})$ & & $\| a y \|$ & $=$ & $\| y \|$
\end{tabular}
\end{center}
\caption{Definizioni induttive di \emph{saturazione} e \emph{ordine} di un contesto}
\label{fig:satordctx}
\end{figure}

Infine, sia $n \ge 0$ un numero naturale: il \emph{linguaggio dei contesti di ordine} $n$ \emph{su} $G$ (i.e. $\Xi_G^n$) \`e cos\`i definito:
$$
	\Xi_G^n = \{ \mctx*{C} \in \Xi_G \:|\: \| \mctx*{C} \| = n \}
$$
\end{dfn}

Data una grammatica $G = (\Sigma, \mathcal{C}, S, \mathcal{P})$, osserviamo che si ha $\Xi_G^0 = \mathscr{L}_G$, poich\'e per produrre le stringhe di $\Xi_G^0$ non si usa mai la regola di produzione aggiuntiva $(S, \emptyctx)$, ma solo quelle in $\mathcal{P}$, esattamente come accade per $\mathscr{L}_G$. Usando un argomento, analogo \`e possibile dimostrare che:
$$
	\mathscr{L}_G = \{ \mctx{C}{w} \:|\: \mctx*{C} \in \Xi_G^1, w \in \mathscr{L}_G \}
$$
Infatti i contesti di $\Xi_G^1$ sono generati usando una (e una sola) volta la regola di produzione aggiuntiva $(S, \emptyctx)$; in $G$, dove questa regola non \`e presente, tutto quello che \`e possibile fare \`e riscrivere $S$ con una delle altre produzioni di $\mathcal{P}$ per $S$, che equivale a sostituire \emph{quella} occorrenza di $S$ con una delle stringhe del linguaggio generato da $G$, cio\`e proprio $\mathscr{L}_G$.

Per $n > 1$ non \`e possibile ottenere risultati analoghi su $\Xi_G^n$, poich\'e se da un lato \`e possibile riscrivere occorrenze diverse di $S$ in modi diversi, dall'altro la saturazione di un contesto ammette un solo parametro in $\mathscr{L}_G$ (che viene replicato sempre uguale $n$ volte). Inoltre $n = 0$ \`e un caso triviale, perch\'e la saturazione dei contesti in $\Xi_G^0$ non produce effetti (non si fanno sostituzioni). L'unico caso degno di nota \`e $n = 1$: esso rappresenta il punto di contatto tra il concetto di saturazione di un contesto -- i.e. ``sostituzione di \emph{una} variabile (fresca)'' -- e quello pi\`u generale di riscrittura: saturazione di un contesto e riscrittura multipasso coincidono per $n \le 1$, cio\`e quando il processo di riscrittura \`e sostituito da quello di saturazione \emph{al pi\`u in un singolo punto}.

I contesti di ordine $1$ su una grammatica sono uno strumento molto potente che ci consente di focalizzare l'attenzione su una porzione specifica di una stringa del linguaggio che dipende dalla sua struttura sottostante, astraendoci dal resto. Per la loro rilevanza, d'ora in poi quando parleremo di \emph{contesti} intenderemo sempre quelli di ordine $1$. 

Anche i contesti sono strumenti di meta-livello, perch\'e trascendono il linguaggio oggetto, per permetterci di ragionare su esso. Per di pi\`u, un contesto pu\`o essere saturato con una qualunque formula di meta-livello: considerare arbitrari contesti ci permette di ragionare su classi di formule molto estese, ossia su formule \emph{immerse} in contesti arbitrari, cio\`e ad \emph{arbitrari livelli di profondit\`a}, concetto cardine di tutta la deep inference. Usando i contesti e il meta-linguaggio, possiamo ragionare per induzione strutturale sulla stringhe del linguaggio \emph{a qualsiasi livello di profondit\`a}. Inoltre, anche i contesti si possono \emph{unificare} con una procedura effettiva.

Al meta-livello ragioniamo su sequenti definiti sul meta-linguaggio. Le definizioni e le tecniche viste in precedenza per singole formule, si estendono in maniera naturale alle liste di formule e ai sequenti: in particolare, \`e possibile l'\emph{unificazione di sequenti} e considerare sequenti composti da formule \emph{immerse in contesti arbitrari} (con procedure effettive).

\section{Sistemi formali e formalismi} 

Le dimostrazioni sono l'oggetto di studio della proof theory; in questa sezione esplicitiamo la nozione di dimostrazione. Il termine ``dimostrare'' deriva dal latino \emph{demonstrare}, composto dalla radice \emph{de-} (di valore intensivo) e da \emph{monstrare} (``mostrare'', ``far vedere''), da cui il significato di \emph{rendere manifesto con fatti e con segni certi}. In matematica una dimostrazione \`e un \emph{processo di deduzione} che, partendo da \emph{premesse} assunte come valide (ipotesi) o da proposizioni dimostrate in virt\`u di tali premesse, determina la necessaria validit\`a di una nuova \emph{proposizione} in funzione della (sola) \emph{coerenza formale} del ragionamento. Le proposizioni saranno dunque stringhe appartenenti ad un linguaggio formale; il processo di deduzione sar\`a scandito dalla corretta applicazione di alcune regole di base in qualche modo riconosciute come elementari e la coerenza formale dovr\`a essere opportunamente formalizzata e funger\`a da argomento a sostegno della bont\`a delle regole scelte.

\begin{dfn}[Regole, derivazioni, dimostrazioni]
Un \emph{sistema formale} \`e una coppia $(\mathscr{L}, \mathscr{S})$ composta da un \emph{linguaggio} $\mathscr{L}$ (generato da qualche grammatica $G$, tipicamente -- ma non necessariamente -- context-free) e da un insieme di regole d'inferenza (o \emph{sistema di deduzione}) $\mathscr{S}$. Date le formule $P_1, \ldots, P_n, Q \in \mathscr{L}'$, una \emph{regola di inferenza} $\mrule{\rho}$ \`e un'espressione della forma:
\begin{center}
	\AxiomC{$P_1$}
	\AxiomC{$\cdots$}
	\AxiomC{$P_n$}
	\RightLabel{$\mrule{\rho}$}
	\TrinaryInfC{$Q$}
	\DisplayProof{}
\end{center}
dove $P_1, \ldots, P_n$ sono chiamate \emph{premesse della regola $\mrule{\rho}$} mentre $Q$ ne \`e la \emph{conclusione}. Una regola di inferenza senza premesse (i.e. avente $n=0$) \`e chiamata \emph{assioma}, mentre, per $n>0$, \`e detta \emph{regola d'inferenza propria}. In genere le premesse e la conclusione di $\mrule{\rho}$ sono formule (o sequenti) aventi una certa forma di superficie -- ed eventualmente, nell'approccio deep inference, immerse in arbitrari contesti. A questo modo di procedere, cio\`e di specificare la \emph{forma} delle (eventuali) ipotesi e della conclusione delle regole d'inferenza, ci si riferisce spesso in letteratura col termine \emph{schema} (p.e. dicendo ``schema d'assioma'').

Un \emph{passo d'inferenza} o \emph{applicazione} o \emph{istanza} di una regola d'inferenza $\mrule{\rho}$ \`e un'espressione della forma:
\begin{center}
	\AxiomC{$P_1'$}
	\AxiomC{$\cdots$}
	\AxiomC{$P_n'$}
	\RightLabel{$\mrule{\rho}$}
	\TrinaryInfC{$Q'$}
	\DisplayProof{}
\end{center}
dove $P_1', \ldots, P_n', Q' \in \mathscr{L}'$ sono formule ottenute rispettivamente per unificazione (anche parziale, al meta-livello) con $P_1, \ldots, P_n, Q \in \mathscr{L}'$. Le stringhe $P_1', \ldots, P_n'$ sono chiamate \emph{premesse dell'applicazione di $\mrule{\rho}$} mentre $Q'$ ne \`e la \emph{conclusione}. Indicheremo anche il nome della regola accanto alla barra orizzontale di derivazione, quando questo sar\`a d'aiuto alla comprensione e non sar\`a fonte d'ambiguit\`a.

Una \emph{derivazione} $\Phi$ da una lista di premesse $P_1, \ldots, P_n$ ad una conclusione $Q$ \`e un albero di istanze di regole in $\mathscr{S}$, avente $Q$ come radice e $P_1, \ldots, P_n$ come foglie, e indicato con:
$$
	\vltreeder{\Phi, \mathscr{S}}{Q}{P_1}{\cdots}{P_n}
$$
Nel seguito ometteremo $\Phi$ e/o $\mathscr{S}$ quando questo non comporter\`a ambiguit\`a.

Infine, una \emph{dimostrazione} \`e una derivazione avente per come premesse $P_1, \ldots, P_n$ solo istanze di assiomi. La indicheremo con:
$$
	\vlderivation{\vlpd{\Phi}{\mathscr{S}}{Q}}
$$
omettendo $\Phi$ e/o $\mathscr{S}$ quando e se necessario.
\end{dfn}

La \emph{derivabilit\`a} in un sistema formale (``da un insieme di formule $\Gamma$ \`e possibile derivare la formula $P$'', o anche ``$P$ \`e derivabile da $\Gamma$'') \`e un concetto sintattico, cos\`i come lo \`e la \emph{dimostrabilit\`a} -- in contrapposizione al concetto di \emph{verit\`a} e a quello di \emph{modello}, che sono invece di natura semantica. Finora non abbiamo mai parlato di verit\`a: in questa sede non ci occuperemo degli aspetti semantici legati ai sistemi formali, che non sono oggetto di studio di proof theory, rimandando per questi ad~\cite{Abr92, Bar77, Cha73}.

Il concetto di derivabilit\`a si estende anche alle regole dei sistemi formali: un regola \`e derivabile quando \`e ottenibile tramite altre regole. Ma una regola pu\`o anche essere \emph{ammissibile} (o \emph{eliminabile}): questo accade quando la sua presenza all'interno del sistema non altera l'insieme di formule dimostrabili, ossia eliminando la regola dal sistema, si riescono a dimostrare \emph{le stesse cose}. Questo vale anche per le regole derivabili, ma mentre in quel caso era sufficiente sostituire la regola con la sua derivazione, per regole ammissibili occorre ristrutturare l'albero di prova.

\begin{dfn}[Regole derivabili e ammissibili]
Una regola $\mrule{\rho}$ \`e \emph{derivabile} per un sistema $\mathscr{S}$ se, per ogni istanza di $\mrule{\rho}$:
$$
	\AxiomC{$P_1$}
	\AxiomC{$\cdots$}
	\AxiomC{$P_n$}
	\RightLabel{$\mrule{\rho}$}
	\TrinaryInfC{$Q$}
	\DisplayProof{}
$$
esiste una derivazione:
$$
	\vltreeder{\mathscr{S}}{Q}{P_1}{\cdots}{P_n}
$$

Una regola $\mrule{\rho}$ \`e \emph{ammissibile} (o \emph{eliminabile}) per un sistema $\mathscr{S}$ se, per ogni dimostrazione $\vlderivation{\vlpd{}{\mathscr{S}\cup\{\mrule{\rho}\}}{Q}}$ esiste una dimostrazione $\vlderivation{\vlpd{}{\mathscr{S}}{Q}}$.
\end{dfn}

In teoria della dimostrazione ci concentriamo (al meta-livello) sulle propriet\`a dei sistemi formali, cio\`e le propriet\`a di cui godono le dimostrazioni espresse in qualche sistema formale. Ma \emph{quale} sistema formale? Seguendo~\cite{TroSch96}, possiamo raggrupparli in alcune grandi famiglie chiamate \emph{formalismi}:
\begin{itemize}
	\item \emph{Sistemi assiomatici} o \emph{sistemi alla Frege-Hilbert}~(\cite{HilAck28, Fre79}): in questo approccio accettiamo un numero molto ristretto di regole d'inferenza proprie (p.e. nella logica proposizionale solo una, il \emph{modus ponens}) mentre il resto del sistema deduttivo sar\`a composto da assiomi; le derivazioni in questo formalismo furono originariamente concepite per rispecchiare le dimostrazioni espresse in linguaggio naturale, obiettivo ambizioso e scarsamente raggiunto da questo approccio in cui le dimostrazioni tendono invece ad essere molto dettagliate e ``pedanti''. \`E l'approccio pi\`u datato e grazie ad esso \`e stato possibile formalizzare con successo sistemi deduttivi rilevanti quali: la logica classica, quella intuizionista ed alcune logiche modali;
	\item \emph{Deduzione naturale}: introdotta nel celebre~\cite{Gen35} (assieme, come vedremo, al \emph{calcolo dei sequenti}) questa famiglia di sistemi \`e pensata per mimare il ragionamento logico-deduttivo umano (da qui l'aggettivo ``naturale''). Non ci sono assiomi e le formule possono essere composte e decomposte (usando il gergo tecnico, rispettivamente \emph{introdotte} ed \emph{eliminate}).

	Un'operazione comune nella pratica matematica \`e quella di \emph{ragionare per assunzioni}: la deduzione naturale mette a disposizione un artificio per compiere questa operazione, e i sistemi espressi in deduzione naturale godono di buone propriet\`a, relativamente semplici da dimostrare (vedi il classico~\cite{Pra65});

	\item \emph{Calcolo dei sequenti}: dovuto a Gentzen, questo \`e lo strumento preferito in teoria della dimostrazione per le ottime propriet\`a di cui gode. Fa tipicamente uso di pochi assiomi (p.e. nella logica proposizionale solo uno, l'assioma \emph{identit\`a}) e molte regole d'inferenza proprie, che permettono di comporre nuove formule a partire dalle premesse (usando la terminologia della deduzione naturale, sono presenti le regole di \emph{introduzione} ma non quelle di \emph{eliminazione}).
\end{itemize}

Volendo aderire ad una visione ``proof theoretical'', ci concentreremo in seguito sul calcolo dei sequenti. \`E tuttavia doveroso osservare che questi formalismi consentono di definire sistemi formali aventi il medesimo potere espressivo: in altre parole, nessuno prevale a priori sugli altri, dipende dal \emph{setting} in cui ci poniamo. 

Inoltre questi sono solo i formalismi ``classici''; esistono altre famiglie di sistemi formali, che possono essere usate per mettere in evidenza altri aspetti importanti del processo deduttivo e delle dimostrazioni. Tra questi \`e doveroso citare le \emph{Proof Nets}, introdotte in~\cite{Gir87} allo scopo di far emergere alcune simmetrie dei sistemi formali che erano ``oscurate'' dal calcolo dei sequenti.

Le tre famiglie sopra descritte adottano tutte shallow inference come specifica delle regole d'inferenza (per quanto questa sia una scelta del tutto arbitraria). Ma l'approccio deep inference apre la via ad (almeno) un quarto formalismo:
\begin{itemize}
	\item \emph{Calcolo delle strutture}: introdotto in~\cite{Gug02}, i sistemi deduttivi in calcolo delle strutture constano di un piccolo numero di assiomi e di regole d'inferenza proprie, e godono di una notevole quantit\`a di propriet\`a, sostanzialmente estendendo quelle studiate per il calcolo dei sequenti. Le nuove prospettive aperte nell'ambito del calcolo delle strutture, ne fanno uno strumento di grande interesse e in continuo sviluppo da parte della comunit\`a scientifica.
\end{itemize}

\newpage

\section{Metodologie: shallow \emph{versus} deep inference} 

Le metodologie guidano la progettazione dei sistemi deduttivi e il processo di inferenza: le due metodologie conosciute allo stato dell'arte sono \emph{shallow} e \emph{deep inference} e le esamineremo a turno. 

\begin{dfn}[Shallow inference]
Per \emph{shallow inference} o \emph{inferenza di superficie}, intendiamo la \emph{metodologia} d'inferenza che interpreta l'insieme delle regole d'inferenza come \emph{schemi} che disciplinano il comportamento della deduzione \emph{in funzione del connettivo principale} delle formule.
\end{dfn}

\begin{figure}[tbhp]
\begin{minipage}[t]{.64\textwidth}
	\begin{center}
	\textbf{Regole d'inferenza}
	$$
		P \vdash P \quad \mrule{ax}
	$$
	\begin{tabular}{cc}
		\AxiomC{$\Gamma, P \vdash R$}
		\RightLabel{$\mrule[l.1]{\wedge}$}
		\UnaryInfC{$\Gamma, P \wedge Q \vdash R$}
		\DisplayProof{} &
		\AxiomC{$\Gamma, Q \vdash R$}
		\RightLabel{$\mrule[l.2]{\wedge}$}
		\UnaryInfC{$\Gamma, P \wedge Q \vdash R$}
		\DisplayProof{} \\\\
		\AxiomC{$\Gamma, P \vdash Q$} %
		\RightLabel{$\mrule[r]{\rightarrow}$} %
		\UnaryInfC{$\Gamma \vdash P \rightarrow Q$} %
		\DisplayProof{} &
		\AxiomC{$\Gamma \vdash P$} %
		\AxiomC{$\Gamma \vdash Q$} %
		\RightLabel{$\mrule[r]{\wedge}$} %
		\BinaryInfC{$\Gamma \vdash P \wedge Q$} %
		\DisplayProof{} \\\\
		\multicolumn{2}{c}{ %
			\AxiomC{$\Gamma \vdash P$} %
			\AxiomC{$\Gamma, Q \vdash R$} %
			\RightLabel{$\mrule[l]{\rightarrow}$} %
			\BinaryInfC{$\Gamma, P \rightarrow Q \vdash R$} %
			\DisplayProof{}}
	\end{tabular}
	\end{center}
\end{minipage}
\begin{minipage}[t]{.35\textwidth}
	\begin{center}
	\textbf{Gramm. linguaggio}
	$$
		P ::= a \:|\: P \wedge P \:|\: P \rightarrow P
	$$
	\footnotesize{(con $a \in \mathcal{A}$ infinit\`a numerabile di simboli proposizionali)}
	\end{center}
	\begin{center}
	\textbf{Struttura formula}
	\begin{picture}(120,85)
		\thinlines
		\put(58,71){\makebox(0,0){$\rightarrow$}}
		\put(28,44){\line(4,3){30}}
		\put(25,38){\makebox(0,0){$\wedge$}}
		\put(25,32){\line(-1,-1){18}}
		\put(25,32){\line(1,-1){18}}
		\put(4,7){\makebox(0,0){$a$}}
		\put(46,7){\makebox(0,3){$b$}}
		\put(88,44){\line(-4,3){30}}
		\put(91,38){\makebox(0,0){$\wedge$}}
		\put(91,32){\line(-1,-1){18}}
		\put(91,32){\line(1,-1){18}}
		\put(70,7){\makebox(0,3){$b$}}
		\put(112,7){\makebox(0,0){$a$}}
	\end{picture}
	\end{center}
\end{minipage}
\caption{Sistema formale in shallow inference ed esempio di formula}
\label{fig:sfef}
\end{figure}

Essendo l'approccio pi\`u datato, \`e anche il pi\`u usato in letteratura, come in \emph{deduzione naturale} (i sistemi \textsf{NK} ed \textsf{NJ} usano shallow inference) e nel \emph{calcolo dei sequenti} (sistemi \textsf{LK}, \textsf{LJ}). Ad esempio: per derivare $\vdash (a \wedge b) \rightarrow (b \wedge a)$ con le regole d'inferenza in Figura~\ref{fig:sfef}, consideriamo la struttura della formula $(a \wedge b) \rightarrow (b \wedge a)$ ed osserviamo che il connettivo principale \`e $\rightarrow$. A questo punto l'unica regola applicabile (i.e. istanziabile, ricordiamo che le regole d'inferenza sono \emph{schemi}) in shallow inference \`e $\mrule[r]{\rightarrow}$. In questo modo otteniamo:
\begin{center}
	\AxiomC{$a \wedge b \vdash b \wedge a$}
	\UnaryInfC{$\vdash (a \wedge b) \rightarrow (b \wedge a)$}
	\DisplayProof{}
\end{center}
Procedendo in maniera analoga osserviamo che ci sono tre regole applicabili per derivare $a \wedge b \vdash b \wedge a$, cio\`e:
\begin{itemize}
	\item $\mrule[l.1]{\wedge}$ produce la derivazione:
	\begin{center}
		\AxiomC{$a \vdash b \wedge a$}
		\UnaryInfC{$a \wedge b \vdash b \wedge a$}
		\UnaryInfC{$\vdash (a \wedge b) \rightarrow (b \wedge a)$}
		\DisplayProof{}
	\end{center}
	da cui \`e applicabile solo $\mrule[r]{\wedge}$ che produce una derivazione bloccata (cio\`e un albero le cui foglie non sono istanze di assiomi, n\'e sono derivabili dalle regole del sistema);
	\item $\mrule[l.2]{\wedge}$ analogo al precedente;
	\item $\mrule[r]{\wedge}$ produce la derivazione:
	\begin{center}
		\alwaysNoLine
		\AxiomC{$\mrule{1}$}
		\UnaryInfC{$a \wedge b \vdash b$}
		\AxiomC{$\mrule{2}$}
		\UnaryInfC{$a \wedge b \vdash a$}
		\alwaysSingleLine
		\BinaryInfC{$a \wedge b \vdash b \wedge a$}
		\UnaryInfC{$\vdash (a \wedge b) \rightarrow (b \wedge a)$}
		\DisplayProof{}
	\end{center}
	in cui \`e possibile applicare $\mrule[l.1]{\wedge}$ o $\mrule[l.2]{\wedge}$ sia alla formula $\mrule{1}$ che alla $\mrule{2}$. L'unica combinazione che porta ad una conclusione -- cio\`e in cui ogni foglia \`e un'istanza di $\mrule{ax}$ -- \`e un'applicazione di $\mrule[l.2]{\wedge}$ a $\mrule{1}$ e di $\mrule[l.1]{\wedge}$ a $\mrule{2}$, ottenendo cos\`i:
	\begin{center}
		\AxiomC{$b \vdash b$}
		\UnaryInfC{$a \wedge b \vdash b$}
		\AxiomC{$a \vdash a$}
		\UnaryInfC{$a \wedge b \vdash a$}
		\BinaryInfC{$a \wedge b \vdash b \wedge a$}
		\UnaryInfC{$\vdash (a \wedge b) \rightarrow (b \wedge a)$}
		\DisplayProof{}
	\end{center}
\end{itemize}

La procedura descritta nell'esempio \`e nota come \emph{proof search} ed \`e automatizzabile (p.e. si pu\`o basare sulla risoluzione come avviene in \textsf{PROLOG}) per sistemi \emph{in cui la regola di taglio \`e ammissibile}.

\begin{dfn}[Deep inference] Per \emph{deep inference} o \emph{inferenza di profondit\`a} intendiamo la metodologia in cui le regole d'inferenza si possono applicare ad arbitrari contesti, e quindi ad arbitrari livelli di profondit\`a, in contrapposizione a quanto avviene nell'inferenza di superficie o \emph{shallow inference}. Il ruolo dei contesti \`e quello di permettere l'accesso alla struttura delle formule senza dover usare alberi di derivazione (cio\`e senza decomposizione strutturale delle formule). Per questa ragione le regole d'inferenza in deep inference hanno al pi\`u una premessa: pertanto le derivazioni prendono la forma di liste. Per enfatizzare il fatto che le derivazioni sono \emph{alberi degeneri} (i.e. ogni nodo ha al pi\`u un figlio), usiamo la notazione: 
$$
	\vlderd{\Phi}{\mathscr{S}}{Q}{P}
$$
per indicare una derivazione $\Phi$ che usa le regole in $\mathscr{S}$, e avente premessa $P$ e conclusione $Q$.
\end{dfn}

Dimostriamo l'analogo della formula di prima, usando il sistema deep inference in Figura~\ref{fig:sf_cos}. Invece dell'implicazione, qui abbiamo solo la negazione sugli atomi, quindi la formula di prima diventa: $(\mneg{a} \vee \mneg{b}) \vee (b \wedge a)$.

\begin{figure}[t!]
\begin{minipage}[t]{\textwidth}
	\begin{minipage}[t]{.61\textwidth}
		\begin{center}
		\textbf{Regole logiche}
		$$
			\mt \qquad \mrule{ax}
		$$
		\end{center}
		\begin{center}
		\begin{tabular}{ccc}
			$\vlinf{}{\mvlrule{id}}{\mctx{C}{a \vee \mneg{a}}}{\mctx{C}{\mt}}$
			& &
			$\vlinf{}{\mvlrule{s}}{\mctx{C}{(P \wedge Q) \vee R}}{\mctx{C}{P \wedge (Q \vee R)}}$ \\\\
		\end{tabular}
		\end{center}
	\end{minipage}
	\begin{minipage}[t]{.38\textwidth}
		\begin{center}
		\textbf{Grammatica linguaggio}
		$$
			P ::= \mt \:|\: \mf \:|\: a \:|\: \mneg{a} \:|\: P \vee P \:|\: P \wedge P
		$$
		\footnotesize{(con $a \in \mathcal{A}$ infinit\`a numerabile di simboli proposizionali)}
		\end{center}
	\end{minipage}
\end{minipage}
\begin{minipage}[t]{\textwidth}
	\begin{minipage}[t]{.61\textwidth}
		\begin{center}
		\textbf{Regole strutturali}
		\end{center}
		\begin{tabular}{ccc}
			$\vlinf{}{\mvlrule[\mt]{\wedge}}{\mctx{C}{P \wedge \mt}}{\mctx{C}{P}}$
			& &
			$\vlinf{}{\mvlrule[\mf]{\vee}}{\mctx{C}{P \vee \mf}}{\mctx{C}{P}}$ \\\\
			$\vlinf{}{\mvlrule[{com}]{\wedge}}{\mctx{C}{Q \wedge P}}{\mctx{C}{P \wedge Q}}$
			& &
			$\vlinf{}{\mvlrule[{as}]{\wedge}}{\mctx{C}{P \wedge (Q \wedge R)}}{\mctx{C}{(P \wedge Q) \wedge R}}$ \\\\
			$\vlinf{}{\mvlrule[{com}]{\vee}}{\mctx{C}{Q \vee P}}{\mctx{C}{P \vee Q}}$
			& &
			$\vlinf{}{\mvlrule[{as}]{\vee}}{\mctx{C}{P \vee (Q \vee R)}}{\mctx{C}{(P \vee Q) \vee R}}$
		\end{tabular}
	\end{minipage}
	\begin{minipage}[t]{.38\textwidth}
		\begin{center}
		\textbf{Dimostrazione d'esempio}
		\end{center}
		$$
			\qquad
			\vlderivation{
				\vlin{}{\mvlrule[{com}]{\vee}}
					{(\mneg{a} \vee \mneg{b}) \vee (b \wedge a)}{
					\vlin{}{\mvlrule[{as}]{\vee}}
						{(b \wedge a) \vee (\mneg{a} \vee \mneg{b})}{
						\vlin{}{\mvlrule{s}}
							{((b \wedge a) \vee \mneg{a}) \vee \mneg{b}}{
							\vlin{}{\mvlrule[{com}]{\wedge}}
								{(b \wedge (a \vee \mneg{a})) \vee \mneg{b}}{
								\vlin{}{\mvlrule{s}}
									{((a \vee \mneg{a}) \wedge b) \vee \mneg{b}}{
										\vlin{}{\mvlrule{id}}
										{(a \vee \mneg{a}) \wedge (b \vee \mneg{b})}{
											\vlin{}{\mvlrule{id}}
												{(a \vee \mneg{a}) \wedge \mt}{
												\vlin{}{\mvlrule[\mt]{\wedge}}
													{\mt \wedge \mt}
													{\vlhy{\mt}}}}}}}}}
			}
		$$
	\end{minipage}
\end{minipage}
\caption{Sistema formale in deep inference ed esempio di dimostrazione}
\label{fig:sf_cos}
\end{figure}

In questo caso abbiamo raggruppato le regole d'inferenza in ``regole logiche'' (simili a quelle viste prima) e ``regole strutturali''. Quest'ultime servono a formalizzare il fatto che i connettivi di congiunzione e di disgiunzione godono della propriet\`a commutativa -- rispettivamente regole $\mrule[com]{\wedge}$ e $\mrule[com]{\vee}$ -- e di quella associativa -- regole $\mrule[as]{\wedge}$ e $\mrule[as]{\vee}$ -- e che inoltre l'atomo $\mt$ (risp. $\mf$) \`e elemento neutro per il connettivo di congiunzione (risp. disgiunzione). A parte per la propriet\`a commutativa, che \`e intrinsecamente simmetrica, per le altre bisognerebbe specificare anche le regole opposte; ad esempio, per $\mrule[as]{\wedge}$ servirebbe una regola:
$$
	\vlderivation{
		\vlin{}{\mvlrule*[{as}]{\wedge}{-1}}
			{\mctx{C}{(P \wedge Q) \wedge R}}
			{\vlhy{\mctx{C}{P \wedge (Q \wedge R)}}}
	}
$$
Per questa ragione, in deep inference, si \`e soliti \emph{sostituire le regole strutturali con una relazione d'equivalenza} tra formule.

Come possiamo vedere, la dimostrazione della medesima formula di prima \`e completamente mutata: innanzitutto osserviamo che le regole induttive, nel calcolo delle strutture, hanno sempre e solo una premessa ed una conclusione. Questo fa s\`i che le dimostrazioni si sviluppino solo in altezza, collassando il lavoro strutturale svolto dagli alberi, all'interno dei contesti. Inoltre il sistema formale \`e meno rigido di quello visto nell'esempio precedente, nel senso che la metodologia deep inference permette un'applicazione pi\`u capillare delle regole d'inferenza, e quindi in generale un maggior grado di libert\`a e di non-determinismo.

\chapter{Logica classica proposizionale}
In questo capitolo presenteremo una serie di definizioni e risultati tradizionali in proof theory di logica classica proposizionale, e studieremo le propriet\`a formali del suo corrispettivo in deep inference, chiamato Sistema \textsf{SKS}.

Il Sistema che andremo a studiare (chiamato \textsf{LKp}) \`e il frammento proposizionale del Sistema \textsf{LK} di~\cite{Gen35}, il cui nome \`e l'acronimo di \emph{Logik Klassische} ossia ``logica classica'' (mentre la ``p'' sta appunto per proposizionale). Il linguaggio $\mathscr{L}_{\mathsf{LKp}}$ \`e quello dei sequenti in Definizione~\ref{def:seq}, aventi per formule quelle generate dalla grammatica:
$$
	G_{\mathsf{LKp}} = (\{ \neg, \wedge, \vee, \rightarrow, [a{-}z], S \}, \{ S \}, S, \mathcal{P})
$$
dove $[a{-}z]$ \`e una notazione abbreviata per i caratteri dell'alfabeto inglese, e le produzioni in $\mathcal{P}$, sono:
$$
	S ::= [a{-}z]{+} \:|\: \neg S \:|\: S \wedge S \:|\: S \vee S \:|\: S \rightarrow S
$$
dove $[a{-}z]{+}$ appartiene alla \emph{notazione EBNF} (\emph{BNF estesa}) e significa semplicemente ``qualunque stringa non vuota di caratteri alfabetici''. Solitamente s'introduce un'ulteriore generalizzazione, considerando la produzione $(S,a)$ dove $a$ \`e una meta-variabile appartenente ad un insieme $\mathcal{A}$ composto da un'infinit\`a numerabile di stringhe (alfabetiche, alfa-numeriche, indicizzate con apici, pedici, \ldots), chiamate genericamente ``simboli proposizionali''. Osserviamo che le stringhe di $\mathcal{A}$ sarebbero facilmente ottenibili da una grammatica context-free, questa semplificazione serve solo ad alleggerire la notazione, mentre preserva intatto il carattere finitario del linguaggio oggetto.

Il sistema deduttivo $\mathscr{S}_{\mathsf{LKp}}$ \`e dato usando forma di superficie dei sequenti. Le regole d'inferenza si possono dividere in quattro gruppi: assiomi, taglio, regole strutturali e regole logiche. 

Le regole strutturali sono di fondamentale importanza, perch\'e permettono di manipolare l'ordine ed il numero delle formule del sequente. Sono tre:
\begin{enumerate}
	\item L'\emph{ordine} delle premesse (e delle conclusioni) \emph{non \`e rilevante}. Da qui otteniamo le regole di \emph{permutazione}:
	\begin{center}
		\AxiomC{$\Gamma, P, Q, \Gamma' \vdash \Delta$}
		\RightLabel{$\mrule[l]{perm}$}
		\UnaryInfC{$\Gamma, Q, P, \Gamma' \vdash \Delta$}
		\DisplayProof{}
		\qquad
		\AxiomC{$\Gamma \vdash \Delta, P, Q, \Delta'$}
		\RightLabel{$\mrule[r]{perm}$}
		\UnaryInfC{$\Gamma \vdash \Delta, Q, P, \Delta'$}
		\DisplayProof{}
	\end{center}
	\item Assumere due volte la stessa premessa (o la stessa conclusione) \`e equivalente ad assumerla una volta sola. Questa osservazione ci conduce alle regole di \emph{contrazione}:
	\begin{center}
		\AxiomC{$\Gamma, P, P \vdash \Delta$}
		\RightLabel{$\mrule[l]{cont}$}
		\UnaryInfC{$\Gamma, P \vdash \Delta$}
		\DisplayProof{}
		\qquad
		\AxiomC{$\Gamma \vdash Q, Q, \Delta$}
		\RightLabel{$\mrule[r]{cont}$}
		\UnaryInfC{$\Gamma \vdash Q, \Delta$}
		\DisplayProof{}
	\end{center}
	\item \`E sempre possibile sia aggiungere nuove ipotesi (rafforzare l'antecedente), sia aggiungere nuove conclusioni (indebolire il conseguente). In generale il sequente ne risulter\`a indebolito (in un caso serve un'ipotesi in pi\`u affinch\'e funzioni, nell'altro a parit\`a di ipotesi dimostra una cosa pi\`u vaga, con pi\`u possibili conseguenze). Questa \`e pertanto chiamata regola di \emph{indebolimento}:
	\begin{center}
		\AxiomC{$\Gamma \vdash \Delta$}
		\RightLabel{$\mrule[l]{w}$}
		\UnaryInfC{$\Gamma, P \vdash \Delta$}
		\DisplayProof{}
		\qquad
		\AxiomC{$\Gamma \vdash \Delta$}
		\RightLabel{$\mrule[r]{w}$}
		\UnaryInfC{$\Gamma \vdash Q, \Delta$}
		\DisplayProof{}
	\end{center}
\end{enumerate}

Per individuare gli assiomi, ci poniamo la seguente domanda: quando si pu\`o sostenere che un sequente $\Gamma \vdash \Delta$ \`e \emph{evidentemente} vero? Chiaramente quando $\Gamma \cap \Delta \not= \varnothing$, cio\`e quando almeno una delle premesse in $\Gamma$ compare tra le conclusioni in $\Delta$. Questo sar\`a l'unico assioma del nostro Sistema, non ci sono altri criteri evidenti per passare dalle premesse alle conclusioni senza fare inferenza. In virt\`u delle regole strutturali, sappiamo che l'ordine non conta: pertanto dimostrare che esiste un $P \in \Gamma$ tale che $P \in \Delta$, si pu\`o scrivere: $P, \Gamma \vdash P, \Delta$. Inoltre possiamo sempre applicare la regola d'indebolimento a sinistra e a destra, per ottenere alfine:
$$
	P \vdash P \quad \mrule{ax}
$$

Quella di taglio \`e un'altra regola fondamentale, che ci si aspetta che sia soddisfatta da ogni sistema deduttivo. Il suo scopo \`e garantire la \emph{componibilit\`a} delle dimostrazioni; questa propriet\`a \`e ampiamente sfruttata nella pratica matematica: per provare un teorema complesso, si pu\`o cominciare dimostrando dei lemmi pi\`u semplici, che possono essere poi composti per ottenere il risultato cercato. La formulazione della \emph{regola di taglio} \`e pertanto la seguente:
\begin{center}
	\AxiomC{$\Gamma \vdash P, \Delta$}
	\AxiomC{$\Gamma', P \vdash \Delta'$}
	\RightLabel{$\mrule{cut}$}
	\BinaryInfC{$\Gamma, \Gamma' \vdash \Delta, \Delta'$}
	\DisplayProof{}
\end{center}

Infine le regole logiche sono quelle che specificano il comportamento dei connettivi logici. Come abbiamo gi\`a avuto modo di menzionare, nel calcolo dei sequenti \`e solo possibile \emph{introdurre} nuovi connettivi ma mai di \emph{eliminarli}. Questo fatto \`e alla base di una propriet\`a molto importante, detta \emph{della sottoformula} (vedi Definizione ~\ref{def:subformula}). Le regole d'introduzione dei connettivi saranno classificate in \emph{destre} (indicate con una ``$r$'' a pedice) o \emph{sinistre} (indicate con ``$l$'') a seconda che permettano d'introdurre il connettivo a destra oppure a sinistra del tornello. Vediamole rapidamente, ci sono quattro connettivi nel nostro linguaggio:
\begin{enumerate}
	\item \textbf{Congiunzione}: introdurre una congiunzione a sinistra significa rafforzare la premessa aggiungendo un'ipotesi. Come abbiamo detto in precedenza, il significato intuitivo del sequente va specificato formalmente, e queste regole chiarificano come le formule a sinistra del turnstile siano da considerarsi in congiunzione tra loro:
	\begin{center}
		\AxiomC{$\Gamma, P \vdash \Delta$}
		\RightLabel{$\mrule[l.1]{\wedge}$}
		\UnaryInfC{$\Gamma, P \wedge Q \vdash \Delta$}
		\DisplayProof{}
		\qquad
		\AxiomC{$\Gamma, Q \vdash \Delta$}
		\RightLabel{$\mrule[l.2]{\wedge}$}
		\UnaryInfC{$\Gamma, P \wedge Q \vdash \Delta$}
		\DisplayProof{}
	\end{center}
	A destra invece, cio\`e per concludere che una congiunzione $P \wedge Q$ vale, sotto un certo insieme di ipotesi, dobbiamo aver dimostrato separatamente i due rami $P$ e $Q$ a partire dalle stesse assunzioni, cio\`e:
	\begin{center}
		\AxiomC{$\Gamma \vdash P, \Delta$}
		\AxiomC{$\Gamma \vdash Q, \Delta$}
		\RightLabel{$\mrule[r]{\wedge}$}
		\BinaryInfC{$\Gamma \vdash P \wedge Q, \Delta$}
		\DisplayProof{}
	\end{center}

	\item \textbf{Disgiunzione}: il ragionamento e le regole seguono in maniera perfettamente simmetrica quanto visto per la congiunzione. Non c'\`e da sorprendersi, poich\'e la disgiunzione \`e il connettivo duale alla congiunzione, e poich\'e il sequente \`e fatto in modo da rispettare naturalmente tale simmetria. A destra del tornello, le formule sono da considerarsi in disgiunzione tra loro, e quindi abbiamo:
	\begin{center}
		\AxiomC{$\Gamma \vdash P, \Delta$}
		\RightLabel{$\mrule[r.1]{\vee}$}
		\UnaryInfC{$\Gamma \vdash P \vee Q, \Delta$}
		\DisplayProof{}
		\qquad
		\AxiomC{$\Gamma \vdash Q, \Delta$}
		\RightLabel{$\mrule[r.2]{\vee}$}
		\UnaryInfC{$\Gamma \vdash P \vee Q, \Delta$}
		\DisplayProof{}
	\end{center}
	mentre a sinistra, se da un set comune di ipotesi $\Gamma$ unito ad un'ipotesi $P$ riusciamo a concludere che valgono certe conclusioni $\Delta$, e indipendentemente, dallo stesso set di premesse $\Gamma$ unito stavolta ad una formula $Q$, siamo in grado di concludere le medesime conclusioni $\Delta$, allora da $\Gamma$ e $P \vee Q$ possiamo concludere che vale $\Delta$, cio\`e:
	\begin{center}
		\AxiomC{$\Gamma, P \vdash \Delta$}
		\AxiomC{$\Gamma, Q \vdash \Delta$}
		\RightLabel{$\mrule[l]{\vee}$}
		\BinaryInfC{$\Gamma, P \vee Q \vdash \Delta$}
		\DisplayProof{}
	\end{center}

	\item \textbf{Implicazione}: se la virgola a sinistra e a destra del tornello si comportano rispettivamente come una congiunzione e come una disgiunzione, il tornello stesso \`e l'implicazione. Questo fatto \`e reso evidente dalla regola d'introduzione destra della freccia. Infatti, la regola \`e:
	\begin{center}
		\AxiomC{$\Gamma, P \vdash Q, \Delta$}
		\RightLabel{$\mrule[r]{\rightarrow}$}
		\UnaryInfC{$\Gamma \vdash P \rightarrow Q, \Delta$}
		\DisplayProof{}
	\end{center}
	cio\`e afferma che se otteniamo una certa conclusione $Q$ supponendo $\Gamma$ e $P$, con solo $\Gamma$ \`e possibile concludere che ``se vale $P$ allora $Q$'', cio\`e proprio $P \rightarrow Q$. Le altre conclusioni in $\Delta$ non giocano alcun ruolo intuitivo per questa regola, se non di preservare una certa omogeneit\`a nella forma del sequente. Abbiamo visto come una formula $P$ sia in grado di passare da sinistra a destra del tornello, tramutandosi in un'implicazione. Anche il passaggio inverso \`e possibile:
	\begin{center}
		\AxiomC{$\Gamma \vdash P, \Delta$}
		\AxiomC{$\Gamma, Q \vdash \Delta$}
		\RightLabel{$\mrule[l]{\rightarrow}$}
		\BinaryInfC{$\Gamma, P \rightarrow Q \vdash \Delta$}
		\DisplayProof{}
	\end{center}
	Qui si \`e dimostrato che da $\Gamma$ si deriva $P$ e, indipendentemente, che da $\Gamma$ unito all'ipotesi aggiuntiva $Q$ si conclude $\Delta$. Allora da $\Gamma$ e supponendo che $P$ implichi $Q$ \`e possibile concludere $\Delta$.
	\item \textbf{Negazione}: in virt\`u di quanto visto finora, il comportamento della negazione dovrebbe risultare piuttosto semplice. Infatti, se consideriamo una singola formula, passare da una parte all'altra del tornello significa introdurre una negazione (la negazione di una formula \`e equivalente ad un'implicazione in cui dalla validit\`a della formula si conclude il falso). Formalmente:
	\begin{center}
		\AxiomC{$\Gamma \vdash P, \Delta$}
		\RightLabel{$\mrule[l]{\neg}$}
		\UnaryInfC{$\Gamma, \neg P \vdash \Delta$}
		\DisplayProof{}
		\qquad
		\AxiomC{$\Gamma, P \vdash \Delta$}
		\RightLabel{$\mrule[r]{\neg}$}
		\UnaryInfC{$\Gamma \vdash \neg P, \Delta$}
		\DisplayProof{}
	\end{center}
\end{enumerate}

Facciamo un esempio di \emph{regola derivabile} nel Sistema \textsf{LKp}: scriviamo la regola \emph{destra di congiunzione} e la regola di \emph{destra di congiunzione generalizzata}:
\begin{center}
	\AxiomC{$\Gamma \vdash P, \Delta$}
	\AxiomC{$\Gamma \vdash Q, \Delta$}
	\RightLabel{$\mrule[r]{\wedge}$}
	\BinaryInfC{$\Gamma \vdash P \wedge Q, \Delta$}
	\DisplayProof{}
	\qquad
	\AxiomC{$\Gamma \vdash P, \Delta$}
	\AxiomC{$\Gamma' \vdash Q, \Delta'$}
	\RightLabel{$\mrule*[r]{\wedge}{gen}$}
	\BinaryInfC{$\Gamma, \Gamma' \vdash P \wedge Q, \Delta, \Delta'$}
	\DisplayProof{}
\end{center}
\begin{itemize}
	\item $\mrule[r]{\wedge}$ \`e banalmente derivabile da $\mrule*[r]{\wedge}{gen}$, infatti basta porre $\Gamma'=\Gamma$ e $\Delta'=\Delta$ per ottenere:
	\begin{center}
		\AxiomC{$\Gamma \vdash P, \Delta$}
		\AxiomC{$\Gamma \vdash Q, \Delta$}
		\RightLabel{$\mrule*[r]{\wedge}{gen}$}
		\BinaryInfC{$\Gamma, \Gamma \vdash P \wedge Q, \Delta, \Delta$}
		\alwaysDoubleLine
		\UnaryInfC{$\Gamma \vdash P \wedge Q, \Delta$}
		\DisplayProof{}
	\end{center}
	dove la doppia barra orizzontale indica un certo numero applicazioni di regole strutturali, in questo caso \emph{permutazione} e \emph{contrazione}. D'ora in avanti useremo sempre questa convenzione.
	\item $\mrule*[r]{\wedge}{gen}$ \`e derivabile da $\mrule[r]{\wedge}$:
	\begin{center}
		\AxiomC{$\Gamma \vdash P, \Delta$}
		\alwaysDoubleLine
		\UnaryInfC{$\Gamma, \Gamma' \vdash P, \Delta, \Delta'$}
		\AxiomC{$\Gamma' \vdash Q, \Delta'$}
		\UnaryInfC{$\Gamma, \Gamma' \vdash Q, \Delta, \Delta'$}
		\RightLabel{$\mrule[r]{\wedge}$}
		\alwaysSingleLine
		\BinaryInfC{$\Gamma, \Gamma' \vdash P \wedge Q, \Delta, \Delta'$}
		\DisplayProof{}
	\end{center}
\end{itemize}

Ragionamenti analoghi valgono per la regole sinistre di disgiunzione e implicazione (generalizzate). Per quanto riguarda le regole \emph{ammissibili}, avremo modo nel seguito di dimostrare \emph{l'ammissibilit\`a della regola di taglio} in \textsf{LKp}.

\begin{dfn}[Propriet\`a della sottoformula]\label{def:subformula}
Si dice che \emph{una regola d'inferenza $\mrule{\rho}$ gode della propriet\`a della sottoformula} sse per ogni sua istanza:
\begin{center}
	\AxiomC{$P_1$}
	\AxiomC{$\cdots$}
	\AxiomC{$P_n$}
	\TrinaryInfC{$Q$}
	\DisplayProof{} 
\end{center}
si ha che $P_1, \ldots, P_n$ sono sottoformule di $Q$. Questa definizione si estende naturalmente alle regole del calcolo dei sequenti, imponendo che tutte le formule nelle premesse (sia a destra che a sinistra del turnstile) siano sottoformule di quelle presente nel sequente conclusione.

Inoltre si dice che \emph{un sistema formale gode della propriet\`a della sottoformula} quando tutte le sue regole d'inferenza ne godono.
\end{dfn}

La propriet\`a della sottoformula \`e molto rilevante, perch\'e conferisce ai sistemi una natura ``costruttiva'', il che ha molte importanti ripercussioni sulla meccanizzazione del processo inferenziale e sulla proof search. Un risultato classico \`e il seguente:

\begin{thm}[Consistenza]
Sia dato un sistema formale, espresso mediante il calcolo dei sequenti, non triviale (che non contiene il sequente vuoto tra gli assiomi). Allora, se gode della propriet\`a della sottoformula, esso \`e consistente (cio\`e non permette di derivare il sequente vuoto).
\end{thm}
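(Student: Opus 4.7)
Il piano \`e di procedere per assurdo, supponendo che esista una dimostrazione del sequente vuoto, e di mostrare che ci\`o implica la presenza del sequente vuoto tra gli assiomi del sistema, contraddicendo l'ipotesi di non-trivialit\`a. L'idea chiave \`e che la propriet\`a della sottoformula \emph{propaga verso l'alto} la povert\`a di formule: se un sequente conclusione \`e vuoto, non vi sono formule di cui le premesse possano contenere sottoformule, e dunque anche le premesse devono essere vuote.

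Pi\`u in dettaglio, supponiamo per assurdo che il sistema $(\mathscr{L}, \mathscr{S})$ soddisfi l'ipotesi della propriet\`a della sottoformula e permetta di derivare il sequente vuoto mediante una dimostrazione $\Phi$. Procederei allora per induzione strutturale sull'altezza dell'albero di prova $\Phi$, dimostrando il seguente lemma ausiliario: \emph{se in $\Phi$ compare come conclusione di una sottoderivazione un sequente vuoto, allora tutti i sequenti di quella sottoderivazione sono vuoti.} Il caso base (foglia) \`e immediato. Per il passo induttivo, considero l'ultima istanza di regola $\mrule{\rho}$ usata, con conclusione il sequente vuoto $~ \vdash ~$ e premesse $\Gamma_1 \vdash \Delta_1, \ldots, \Gamma_n \vdash \Delta_n$; per la propriet\`a della sottoformula, ogni formula presente in qualche $\Gamma_i$ o $\Delta_i$ dev'essere sottoformula di una formula della conclusione, ma la conclusione non contiene alcuna formula, e quindi nessuna $\Gamma_i$ o $\Delta_i$ pu\`o contenerne: tutte le premesse sono dunque sequenti vuoti, e l'ipotesi induttiva si applica a ciascuna sottoderivazione.

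Applicando il lemma all'intera $\Phi$, ottengo che \emph{tutte} le foglie di $\Phi$ sono istanze di assiomi aventi come conclusione il sequente vuoto; in particolare, almeno un'istanza di assioma del sistema ha forma $~ \vdash ~$, il che significa che il sistema contiene il sequente vuoto come schema d'assioma (o come sua istanza), in contraddizione con l'ipotesi di non-trivialit\`a. Pertanto il sequente vuoto non pu\`o essere dimostrato, e il sistema \`e consistente.

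L'unico punto su cui vale la pena essere cauti \`e la formulazione precisa della propriet\`a della sottoformula nel caso in cui una regola abbia zero formule nella conclusione: occorre leggere la Definizione~\ref{def:subformula} come ``\emph{ogni} formula nelle premesse \`e sottoformula di \emph{qualche} formula della conclusione'', di modo che, in assenza di formule nella conclusione, l'insieme delle formule ammissibili nelle premesse sia vuoto. Con questa lettura (del tutto naturale e coerente con la definizione data), l'argomento induttivo scorre senza intoppi e la parte realmente delicata della dimostrazione si riduce a questa osservazione semantica sulla quantificazione vacua.
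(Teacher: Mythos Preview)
La tua dimostrazione \`e corretta e segue essenzialmente la stessa idea del testo: la propriet\`a della sottoformula forza le premesse di un sequente vuoto ad essere anch'esse vuote, da cui la contraddizione con la non-trivialit\`a. La differenza \`e solo di dettaglio: il testo d\`a l'argomento in un solo passo (guardando l'ultima regola applicata), mentre tu rendi esplicita l'induzione sull'altezza della derivazione, che \`e in realt\`a pi\`u rigorosa e chiarisce come si arrivi fino alle foglie assiomi.
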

\begin{proof}
La dimostrazione \`e immediata, poich\'e, se il sequente vuoto non \`e fra gli assiomi del sistema, dev'essere derivato con una regola d'inferenza propria $\mrule{\rho}$. Ma per la propriet\`a della sottoformula, la regola d'inferenza $\mrule{\rho}$ pu\`o avere per premesse solo sottoformule di quelle nel sequente vuoto, cio\`e non pu\`o avere premesse, ma $\mrule{\rho}$ \`e propria per ipotesi. Pertanto il sequente vuoto non \`e derivabile ed il sistema \`e consistente.
\end{proof}

Da una rapida ispezione alle regole del Sistema \textsf{LKp}, ci accorgiamo che la propriet\`a della sottoformula vale per tutte le regole d'inferenza tranne che per la regola di taglio. Infatti $\mrule{cut}$ introduce un'arbitraria formula $P$ tra le premesse. Onde preservare la propriet\`a della sottoformula, seguiamo i passi di Gentzen, dimostrando uno dei teoremi centrali della proof theory, noto come ``Gentzen Hauptsatz'', che ci garantisce che la regola di taglio \`e ammissibile all'interno del Sistema.

\section{Eliminazione del taglio}

Ci accingiamo a dimostrare una propriet\`a essenziale per la logica classica (e non solo), chiamata Hauptsatz, o teorema di eliminazione del taglio. L'Hauptsatz presumibilmente traccia il confine tra la logica e la nozione pi\`u ampia di sistema formale. Per sottolinearne l'importanza, Girard usa il motto:
\begin{quote}
	\emph{``A sequent calculus without cut elimination is like a car without engine''}~--~\cite{Gir95}
\end{quote}

\begin{dfn}[Grado, altezza derivazioni]
Il \emph{grado di una formula} $\delta(P)$ \`e definito per induzione strutturale come segue:
\begin{itemize}
	\item $\delta(a) = 1$ \qquad per $a$ simbolo proposizionale
	\item $\delta(P \wedge Q) = \delta(P \vee Q) = \delta(P \rightarrow Q) = 1 + \max\{\delta(P), \delta(Q)\}$
	\item $\delta(\neg P) = 1 + \delta(P)$
\end{itemize}

Il \emph{grado di un'applicazione della regola di taglio} \`e definito come il grado della formula che elimina.

Il \emph{grado} $\delta(\Phi)$ \emph{di una derivazione} \`e il massimo tra i gradi delle regole di taglio che vi compaiono. In particolare $\delta(\Phi) = 0$ se $\Phi$ non fa uso della regola di taglio.

Infine, l'\emph{altezza} $h(\Phi)$ \emph{di una derivazione} \`e quella associata all'albero $\Phi$: se la regola conclusiva di $\Phi$ ha come premesse le derivazioni $\Phi_1, \ldots, \Phi_n$, allora $h(\Phi) = 1 + \max\{h(\Phi_1), \ldots, h(\Phi_n)\}$ (mentre se $n = 0$, cio\`e se $\Phi$ \`e istanza di un assioma, allora $h(\Phi) = 0$).
\end{dfn}

\begin{lem}~\label{lem:cut_LK1}
Sia $\Phi$ una derivazione della forma seguente:
\begin{center}
	\AxiomC{$P_1$}
	\AxiomC{$\cdots$}
	\AxiomC{$P_n$}
	\RightLabel{$\mrule[l]{\rho}$}
	\TrinaryInfC{$\Gamma \vdash P, \Delta$}
	\AxiomC{$P_{n+1}$}
	\AxiomC{$\cdots$}
	\AxiomC{$P_{n+m}$}
	\RightLabel{$\mrule[r]{\rho}$}
	\TrinaryInfC{$\Gamma', P \vdash \Delta'$}
	\RightLabel{$\mrule{cut}$}
	\BinaryInfC{$\Gamma, \Gamma' \vdash \Delta, \Delta'$}
	\DisplayProof{}
\end{center}
in cui $\mrule[l]{\rho}$ (la premessa di sinistra del cut) \`e una regola logica ``destra'', mentre $\mrule[r]{\rho}$ (premessa destra del cut) \`e una regola logica ``sinistra'', tali da introdurre entrambe la formula $P$. Allora esiste una derivazione:
$$
	\vltreeder{\Psi}{\Gamma, \Gamma' \vdash \Delta, \Delta'}{P_1'}{\cdots}{P_k'}
$$
con $\{P_1', \ldots, P_k'\} \subseteq \{P_1, \ldots, P_{n+m}\}$ e tale che $\delta(\Psi) < \delta(\Phi)$.
\end{lem}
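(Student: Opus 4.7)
La dimostrazione segue l'approccio classico di Gentzen ai ``passi chiave'' (\emph{key cases}) dell'eliminazione del taglio. L'idea \`e di sfruttare il fatto che, se $P$ viene introdotta immediatamente sopra il taglio sia da una regola destra che da una regola sinistra, allora queste due regole devono operare sul medesimo connettivo principale di $P$, e di conseguenza tutta l'informazione necessaria per ricomporre le conclusioni \`e gi\`a presente nelle derivazioni delle sotto-formule immediate di $P$. Procederei dunque per \emph{analisi per casi} sul connettivo principale di $P$.

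I casi da considerare sono quattro -- $P = A \wedge B$, $P = A \vee B$, $P = A \to B$, $P = \neg A$ -- alcuni dei quali si sdoppiano a seconda della scelta di $\mrule[r]{\rho}$ (per la congiunzione a destra del turnstile pu\`o essere $\mrule[l.1]{\wedge}$ o $\mrule[l.2]{\wedge}$, analogamente per la disgiunzione). Per ciascun caso, costruirei esplicitamente $\Psi$ andando a ``pescare'' le sotto-derivazioni delle premesse di $\mrule[l]{\rho}$ e $\mrule[r]{\rho}$ e ricombinandole con uno o due tagli sulle sotto-formule immediate di $P$, seguiti da un numero finito di applicazioni di regole strutturali ($\mrule[l]{perm}, \mrule[r]{perm}, \mrule[l]{cont}, \mrule[r]{cont}$) per normalizzare i contesti. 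Per esempio:
\begin{itemize}
	\item Caso $P = A \wedge B$ con $\mrule[r]{\rho} = \mrule[l.1]{\wedge}$: la sotto-derivazione di sinistra ha fra le sue premesse immediate $\Gamma \vdash A, \Delta$ (quella di $\Gamma \vdash B, \Delta$ viene scartata, e con essa le relative foglie, donde l'inclusione \emph{propria} $\{P_1', \ldots, P_k'\} \subsetneq \{P_1, \ldots, P_{n+m}\}$); si applica un taglio su $A$ con la sotto-derivazione di $\Gamma', A \vdash \Delta'$.
	\item Caso $P = A \to B$: si hanno $\Gamma, A \vdash B, \Delta$ sopra $\mrule[r]{\to}$ e $\Gamma' \vdash A, \Delta'$ e $\Gamma', B \vdash \Delta'$ sopra $\mrule[l]{\to}$; si compongono due tagli, uno su $B$ e l'altro su $A$, applicando infine $\mrule{cont}$ per ridurre le duplicazioni di $\Gamma'$ e $\Delta'$.
	\item Caso $P = \neg A$: singolo taglio su $A$ tra $\Gamma' \vdash A, \Delta'$ e $\Gamma, A \vdash \Delta$.
\end{itemize}
I restanti casi ($\vee$ nelle sue due varianti, e $\wedge$ con $\mrule[l.2]{\wedge}$) si trattano in modo perfettamente simmetrico.

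La verifica della condizione sul grado segue per costruzione: i nuovi tagli introdotti in $\Psi$ hanno come formula di taglio una sotto-formula immediata di $P$, e quindi per la definizione induttiva di $\delta$ hanno grado pari a $\delta(P) - 1$ (o inferiore), strettamente minore di $\delta(P) \le \delta(\Phi)$; le sotto-derivazioni riutilizzate non subiscono modifiche, e in particolare non introducono nuovi tagli. Di conseguenza $\delta(\Psi) < \delta(\Phi)$, come richiesto.

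Non prevedo reali ostacoli concettuali: si tratta essenzialmente di un esercizio di bookkeeping sintattico. L'unica parte un po' fastidiosa sar\`a la gestione esplicita delle regole strutturali nel caso dell'implicazione, dove i contesti $\Gamma'$ e $\Delta'$ vengono duplicati dall'applicazione dei due tagli consecutivi e richiedono dunque permutazioni e contrazioni per essere ricondotti alla forma $\Gamma, \Gamma' \vdash \Delta, \Delta'$. Si noti che tale manipolazione \`e lecita proprio grazie alle propriet\`a strutturali del Sistema \textsf{LKp}, che garantiscono l'ammissibilit\`a (anzi, la presenza esplicita) di tali regole.
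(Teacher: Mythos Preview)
La tua proposta \`e corretta e segue esattamente lo stesso schema della dimostrazione nel testo: analisi per casi sul connettivo principale di $P$, con costruzione esplicita di $\Psi$ tramite uno o due tagli sulle sotto-formule immediate seguiti da regole strutturali, e verifica del grado per ciascun caso. Anche l'osservazione sul caso dell'implicazione come l'unico che richieda due tagli e una gestione non banale dei contesti coincide con quanto fatto nel testo.
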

\begin{proof}
Procediamo per casi sulla premessa sinistra dell'applicazione di $\mrule{cut}$. Il fatto di concentrarci sulla premessa sinistra \`e del tutto irrilevante, poich\'e grazie alla simmetria delle regole logiche del Sistema \textsf{LKp}, se nella premessa sinistra s'introduce un certo connettivo (con una regola logica ``destra''), questo dovr\`a essere introdotto anche nella premessa di destra (con una regola logica simmetrica ``sinistra'').
\begin{enumerate}
	\item\label{proof:cut_lem1:case:1} $\mrule[r]{\wedge}$ e $\mrule[l.1]{\wedge}$: qui abbiamo $P = Q \wedge R$.
	\begin{center}
		\AxiomC{$\Gamma \vdash Q, \Delta$}
		\AxiomC{$\Gamma \vdash R, \Delta$}
		\RightLabel{$\mrule[r]{\wedge}$}
		\BinaryInfC{$\Gamma \vdash Q \wedge R, \Delta$}
		\AxiomC{$\Gamma', Q \vdash \Delta'$}
		\RightLabel{$\mrule[l.1]{\wedge}$}
		\UnaryInfC{$\Gamma', Q \wedge R \vdash \Delta'$}
		\RightLabel{$\mrule{cut}$}
		\BinaryInfC{$\Gamma, \Gamma' \vdash \Delta, \Delta'$}
		\DisplayProof{}
	\end{center}
	La derivazione $\Phi$ sopra si trasforma in $\Psi$ come segue:
	\begin{center}
		\AxiomC{$\Gamma \vdash Q, \Delta$}
		\AxiomC{$\Gamma', Q \vdash \Delta'$}
		\RightLabel{$\mrule{cut}$}
		\BinaryInfC{$\Gamma, \Gamma' \vdash \Delta, \Delta'$}
		\DisplayProof{}
	\end{center}
	Osserviamo che $\delta(\Psi) = \delta(\Phi) - \delta(R)$, cio\`e il grado di $\Psi$ \`e diminuito di un fattore $\delta(R)>0$.

	\item\label{proof:cut_lem1:case:2} $\mrule[r]{\wedge}$ e $\mrule[l.2]{\wedge}$: in maniera simmetrica, qui abbiamo:
	\begin{center}
		\AxiomC{$\Gamma \vdash Q, \Delta$}
		\AxiomC{$\Gamma \vdash R, \Delta$}
		\RightLabel{$\mrule[r]{\wedge}$}
		\BinaryInfC{$\Gamma \vdash Q \wedge R, \Delta$}
		\AxiomC{$\Gamma', R \vdash \Delta'$}
		\RightLabel{$\mrule[l.2]{\wedge}$}
		\UnaryInfC{$\Gamma', Q \wedge R \vdash \Delta'$}
		\RightLabel{$\mrule{cut}$}
		\BinaryInfC{$\Gamma, \Gamma' \vdash \Delta, \Delta'$}
		\DisplayProof{}
	\end{center}
	che si trasforma nuovamente in $\Psi$ di grado inferiore:
	\begin{center}
		\AxiomC{$\Gamma \vdash R, \Delta$}
		\AxiomC{$\Gamma', R \vdash \Delta'$}
		\RightLabel{$\mrule{cut}$}
		\BinaryInfC{$\Gamma, \Gamma' \vdash \Delta, \Delta'$}
		\DisplayProof{}
	\end{center}

	\item $\mrule[r.1]{\vee}$ e $\mrule[l]{\vee}$: qui abbiamo $P = Q \vee R$. Questo \`e il duale del caso \ref{proof:cut_lem1:case:1}:
	\begin{center}
		\AxiomC{$\Gamma \vdash Q, \Delta$}
		\RightLabel{$\mrule[r.1]{\vee}$}
		\UnaryInfC{$\Gamma \vdash Q \vee R, \Delta$}
		\AxiomC{$\Gamma', Q \vdash \Delta'$}
		\AxiomC{$\Gamma', R \vdash \Delta'$}
		\RightLabel{$\mrule[l]{\vee}$}
		\BinaryInfC{$\Gamma', Q \vee R \vdash \Delta'$}
		\RightLabel{$\mrule{cut}$}
		\BinaryInfC{$\Gamma, \Gamma' \vdash \Delta, \Delta'$}
		\DisplayProof{}
	\end{center}
	che si trasforma in $\Psi$ come segue:
	\begin{center}
		\AxiomC{$\Gamma \vdash Q, \Delta$}
		\AxiomC{$\Gamma', Q \vdash \Delta'$}
		\RightLabel{$\mrule{cut}$}
		\BinaryInfC{$\Gamma, \Gamma' \vdash \Delta, \Delta'$}
		\DisplayProof{}
	\end{center}
	col grado di $\Psi$ diminuito di un fattore $\delta(R)$.
	\item $\mrule[r.2]{\vee}$ e $\mrule[l]{\vee}$: $P = Q \vee R$, caso simmetrico al precendente e duale a \ref{proof:cut_lem1:case:2}, produciamo una derivazione $\Psi$ avente grado pari a $\delta(\Phi) - \delta(Q)$, a partire da $\Phi$:
	\begin{center}
		\AxiomC{$\Gamma \vdash R, \Delta$}
		\RightLabel{$\mrule[r.2]{\vee}$}
		\UnaryInfC{$\Gamma \vdash Q \vee R, \Delta$}
		\AxiomC{$\Gamma', Q \vdash \Delta'$}
		\AxiomC{$\Gamma', R \vdash \Delta'$}
		\RightLabel{$\mrule[l]{\vee}$}
		\BinaryInfC{$\Gamma', Q \vee R \vdash \Delta'$}
		\RightLabel{$\mrule{cut}$}
		\BinaryInfC{$\Gamma, \Gamma' \vdash \Delta, \Delta'$}
		\DisplayProof{}
	\end{center}
	nel modo seguente:
	\begin{center}
		\AxiomC{$\Gamma \vdash R, \Delta$}
		\AxiomC{$\Gamma', R \vdash \Delta'$}
		\RightLabel{$\mrule{cut}$}
		\BinaryInfC{$\Gamma, \Gamma' \vdash \Delta, \Delta'$}
		\DisplayProof{}
	\end{center}

	\item $\mrule[r]{\neg}$ e $\mrule[l]{\neg}$: qui abbiamo $P = \neg Q$. La derivazione $\Phi$ pertanto \`e:
	\begin{center}
		\AxiomC{$\Gamma, Q \vdash \Delta$}
		\RightLabel{$\mrule[r]{\neg}$}
		\UnaryInfC{$\Gamma \vdash \neg Q, \Delta$}
		\AxiomC{$\Gamma'\vdash Q, \Delta'$}
		\RightLabel{$\mrule[l]{\neg}$}
		\UnaryInfC{$\Gamma', \neg Q \vdash \Delta'$}
		\RightLabel{$\mrule{cut}$}
		\BinaryInfC{$\Gamma, \Gamma' \vdash \Delta, \Delta'$}
		\DisplayProof{}
	\end{center}
	Costruiamo $\Psi$ scambiando le premesse di $\Phi$ e applicando direttamente il taglio, per ottenere una derivazione di grado $\delta(\Phi) - 1$, come segue:
	\begin{center}
		\AxiomC{$\Gamma'\vdash Q, \Delta'$}
		\AxiomC{$\Gamma, Q \vdash \Delta$}
		\RightLabel{$\mrule{cut}$}
		\BinaryInfC{$\Gamma', \Gamma \vdash \Delta', \Delta$}
		\alwaysDoubleLine
		\UnaryInfC{$\Gamma, \Gamma' \vdash \Delta, \Delta'$}
		\DisplayProof{}
	\end{center}

	\item $\mrule[r]{\rightarrow}$ e $\mrule[l]{\rightarrow}$: $P = Q \rightarrow R$. Allora $\Phi$:
	\begin{center}
		\AxiomC{$\Gamma, Q \vdash R, \Delta$}
		\RightLabel{$\mrule[r]{\rightarrow}$}
		\UnaryInfC{$\Gamma \vdash Q \rightarrow R, \Delta$}
		\AxiomC{$\Gamma' \vdash Q, \Delta'$}
		\AxiomC{$\Gamma', R \vdash \Delta'$}
		\RightLabel{$\mrule[l]{\rightarrow}$}
		\BinaryInfC{$\Gamma', Q \rightarrow R \vdash \Delta'$}
		\RightLabel{$\mrule{cut}$}
		\BinaryInfC{$\Gamma, \Gamma' \vdash \Delta, \Delta'$}
		\DisplayProof{}
	\end{center}
	si trasforma in $\Psi$ come segue:
	\begin{center}
		\AxiomC{$\Gamma' \vdash Q, \Delta'$}
		\AxiomC{$\Gamma, Q \vdash R, \Delta$}
		\RightLabel{$\mrule{cut}$}
		\BinaryInfC{$\Gamma', \Gamma \vdash \Delta', R, \Delta$}
		\alwaysDoubleLine
		\UnaryInfC{$\Gamma, \Gamma' \vdash R, \Delta, \Delta'$}
		\AxiomC{$\Gamma', R \vdash \Delta'$}
		\UnaryInfC{$\Gamma, \Gamma', R \vdash \Delta, \Delta'$}
		\alwaysSingleLine
		\RightLabel{$\mrule{cut}$}
		\BinaryInfC{$\Gamma, \Gamma' \vdash \Delta, \Delta'$}
		\DisplayProof{}
	\end{center}
	osserviamo che in quest'ultimo caso il problema \`e stato risolto usando \emph{due} tagli, entrambi di grado inferiore.
\end{enumerate}
\end{proof}

\begin{dfn}[Rimozione]
Sia $P$ una formula e $\Gamma$ una lista di formule: allora $\Gamma{\smallsetminus}P$ denota $\Gamma$ in cui \emph{tutte le occorrenze} della formula $P$ sono state \emph{rimosse}.
\end{dfn}

Il seguente lemma dice che una (eventuale) applicazione della regola di taglio finale pu\`o essere eliminata. La sua complessa formulazione tiene conto delle regole strutturali che possono interferire col taglio.

\begin{lem}\label{lem:cut_LK2}
Sia $P$ una formula di grado $d$, e siano $\Phi, \Phi'$ rispettivamente le dimostrazioni di $\Gamma \vdash \Delta$ e di $\Gamma' \vdash \Delta'$ ambedue di grado minore di $d$. Allora \`e possibile costruire una dimostrazione $\Psi$ di $\Gamma, \Gamma'{\smallsetminus}P \vdash \Delta{\smallsetminus}P, \Delta'$ di grado minore di $d$.
\end{lem}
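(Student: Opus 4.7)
The plan is to argue by induction on $h(\Phi)+h(\Phi')$, with a case analysis on the final rule of each derivation. First I dispose of the easy cases: if $P\notin\Delta$ then $\Delta\smallsetminus P=\Delta$, and the desired $\Psi$ is obtained from $\Phi$ by a sequence of $\mrule[l]{w}$ and $\mrule[r]{w}$ steps (plus permutations) to weaken in $\Gamma'\smallsetminus P$ on the left and $\Delta'$ on the right; this adds no cut, so the degree stays $<d$. The case $P\notin\Gamma'$ is symmetric. Henceforth assume $P$ occurs both in $\Delta$ and in $\Gamma'$, and let $\mrule{\rho}$ and $\mrule{\rho'}$ be the final rules of $\Phi$ and $\Phi'$.

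If $\mrule{\rho}$ is not a right-logical rule whose principal formula is $P$ (so it is a structural rule, a left-rule, or a right-logical rule whose principal formula differs from $P$), then every occurrence of $P$ in $\Delta$ already sits in each premise of $\mrule{\rho}$, possibly duplicated by $\mrule[r]{cont}$ or absent in a premise of $\mrule[r]{w}$, which is harmless since we are removing \emph{all} $P$'s. I apply the inductive hypothesis to each premise of $\mrule{\rho}$ paired with $\Phi'$ — the combined height being strictly smaller — and then reassemble with $\mrule{\rho}$, interposing a $\mrule[r]{w}$ when $\mrule{\rho}$ is a right-logical rule whose principal formula $\phi\neq P$ happens to have $P$ as an immediate subformula (so the IH has inadvertently stripped it from the premise it needs). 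A symmetric argument covers the situation when $\mrule{\rho}$ is key but $\mrule{\rho'}$ is not a left-logical rule with principal $P$, by permuting the cut through $\mrule{\rho'}$ instead.

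The crucial case is when both $\mrule{\rho}$ and $\mrule{\rho'}$ are key, each introducing a fresh $P$. Here I adapt Lemma~\ref{lem:cut_LK1}: apply the inductive hypothesis to each premise of $\mrule{\rho}$ paired with the whole $\Phi'$, and symmetrically to each premise of $\mrule{\rho'}$ paired with $\Phi$. The resulting sequents expose the immediate subformulas of $P$ on the relevant sides while being already $P$-free elsewhere, and they can be combined by fresh cuts on those subformulas — one cut for $P=\neg Q$ or $P=Q\wedge R$, two cascaded cuts for $P=Q\rightarrow R$ — exactly following the patterns prescribed case-by-case in Lemma~\ref{lem:cut_LK1}. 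All these new cuts have degree strictly less than $d$ because subformulas of $P$ have degree smaller than $\delta(P)=d$, and the IH-produced sub-derivations have degree $<d$ by assumption, so the total degree of $\Psi$ remains $<d$; a final clean-up by $\mrule[l]{cont}$, $\mrule[r]{cont}$, and permutations yields the sequent $\Gamma,\Gamma'\smallsetminus P\vdash\Delta\smallsetminus P,\Delta'$. The main obstacle I foresee is precisely this last case: one must verify connective by connective that the rearrangement of Lemma~\ref{lem:cut_LK1} still composes correctly after the IH has transformed the sub-derivations, and keep precise track of the multiplicities of $P$-occurrences in the contexts so that the concluding sequent matches the stated conclusion on the nose.
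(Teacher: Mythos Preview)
Your proposal is correct and follows essentially the same induction on $h(\Phi)+h(\Phi')$ as the paper, with the same case split (axiom, structural, non-key logical, key logical on both sides). The only organizational difference is in the key case: you apply the inductive hypothesis to the premises and then directly recombine them with cuts on the immediate subformulas of $P$, whereas the paper first reapplies $\mrule{\rho}$ and $\mrule{\rho'}$ to the IH-transformed premises to obtain proofs $\Upsilon$ and $\Upsilon'$ of the target sequent with one surviving occurrence of $P$ on each side, cuts these on $P$ (a cut of degree exactly $d$), and then invokes Lemma~\ref{lem:cut_LK1} as a black box to lower that cut. Your route is a bit more direct but requires the connective-by-connective check you flag at the end; the paper's route is more modular since Lemma~\ref{lem:cut_LK1} absorbs that case analysis once and for all. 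Both reach the same derivations. Incidentally, the subtlety you note about weakening back a subformula occurrence of $P$ stripped by the IH is a point the paper's case 5 glosses over.
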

\begin{proof}
$\Psi$ \`e costruito per induzione su $h(\Phi) + h(\Phi')$, ma sfortunatamente non in maniera simmetrica rispetto $\Phi$ e $\Phi'$: ad un certo punto la preferenza sar\`a data a $\Phi$ od a $\Phi'$, e $\Psi$ sar\`a irreversibilmente affetta da questa scelta.

Siano $\Phi$ e $\Phi'$ rispettivamente:
$$
\begin{array}{ccc}
	\vlderivation{
		\vliiin{}{\mrule{\rho}}
			{\Gamma \vdash \Delta}
			{\vlpd{\Phi_1}{}{\Gamma_1 \vdash \Delta_1}}
			{\vlhy{\cdots}}
			{\vlpd{\Phi_n}{}{\Gamma_n \vdash \Delta_n}}
	}
	& \qquad &
	\vlderivation{
		\vliiin{}{\mrule{\rho'}}
			{\Gamma' \vdash \Delta'}
			{\vlpd{\Phi_1'}{}{\Gamma_1' \vdash \Delta_1'}}
			{\vlhy{\cdots}}
			{\vlpd{\Phi_m'}{}{\Gamma_m' \vdash \Delta_m'}}
	}
\end{array}
$$
e siano $i \in \{1, \ldots, n\}$ e $j \in \{1, \ldots, m\}$. Ci sono vari casi da considerare:
\begin{enumerate}
	\item $\Phi$ \`e un assioma. Ci sono due sottocasi:
	\begin{enumerate}
		\item $\Phi$ prova $P \vdash P$. Allora la dimostrazione $\Psi$ di $P, \Gamma'{\smallsetminus}P \vdash \Delta'$ \`e ottenuta da $\Phi'$ mediante l'applicazione di regole strutturali.
		\item $\Phi$ prova $Q \vdash Q$, con $Q \not= P$. Anche in questo caso applichiamo regole strutturali a $\Phi'$ per ottenere $Q, \Gamma'{\smallsetminus}Q \vdash Q, \Delta'$.
	\end{enumerate}
	\item $\Phi'$ \`e un assioma. Questo caso \`e del tutto analogo al precedente; \`e interessante notare che se $\Phi$ e $\Phi'$ sono entrambi assiomi, abbiamo arbitrariamente privilegiato $\Phi$ (e questo potrebbe avere delle ripercussioni sulla complessit\`a di $\Psi$).
	\item $\mrule{\rho}$ \`e una regola strutturale. L'ipotesi induttiva per $\Phi_1$ e $\Phi'$ ci danno una dimostrazione $\Psi_1$ per $\Gamma_1, \Gamma'{\smallsetminus}P \vdash \Delta_1{\smallsetminus}P, \Delta'$. Allora $\Psi$ \`e ottenuto da $\Psi_1$ mediante regole strutturali. Questo \`e possibile perch\'e, qualunque sia la regola strutturale $\mrule{\rho}$, questa gode della propriet\`a della sottoformula, e quindi $\Gamma_1$ \`e composto esclusivamente di sottoformule di $\Gamma$, cos\`i come $\Delta_1{\smallsetminus}P$ \`e composto solo di sottoformule di $\Delta{\smallsetminus}P$. Quindi per ottenere il sequente conclusivo di $\Psi$, non dovr\`a essere tolta alcuna formula presente nella conclusione di $\Psi_1$, ma al massimo lo si dovr\`a \emph{indebolire}.
	\item $\mrule{\rho'}$ \`e una regola strutturale: analogo al precedente. 
	\item $\mrule{\rho}$ \`e una regola logica, tranne una regola logica destra che introduce $P$. L'ipotesi induttiva per $\Phi_i$ e $\Phi'$ ci da $n$ dimostrazioni $\Psi_i$ di $\Gamma_i, \Gamma'{\smallsetminus}P \vdash \Delta_i{\smallsetminus}P, \Delta'$. Poich\'e la regola $\mrule{\rho}$ non introduce nuove occorrenze di $P$ a destra del tornello, questa \`e applicabile alle $\Psi_i$ per ottenere $\Psi$: $\Gamma, \Gamma'{\smallsetminus}P \vdash \Delta{\smallsetminus}P, \Delta'$.
	\item $\mrule{\rho'}$ \`e una regola logica: analogo al precedente. 
	\item Sia $\mrule{\rho}$ che $\mrule{\rho'}$ sono regole logiche: $\mrule{\rho}$ \`e una regola logica destra che introduce $P$, mentre $\mrule{\rho'}$ \`e una regola logica sinistra che introduce $P$. Questo \`e l'ultimo caso rimanente, nonch\'e l'unico interessante, ed \`e simmetrico. Per ipotesi induttiva, applicata a:
	\begin{enumerate}
		\item $\Phi_i$ e $\Phi'$, otteniamo le dimostrazioni $\Psi_i$ di $\Gamma_i, \Gamma'{\smallsetminus}P \vdash \Delta_i{\smallsetminus}P, \Delta'$; ora, applicando $\mrule{\rho}$ alle $\Psi_i$, e usando delle regole strutturali, otteniamo la dimostrazione $\Upsilon$ di $\Gamma, \Gamma'{\smallsetminus}P \vdash P, \Delta{\smallsetminus}P, \Delta'$;
		\item $\Phi$ e $\Phi_j'$, otteniamo le dimostrazioni $\Psi_j'$ di $\Gamma, \Gamma_j'{\smallsetminus}P \vdash \Delta{\smallsetminus}P, \Delta_j'$; ora, applicando $\mrule{\rho'}$ alle $\Psi_j'$, e con l'ausilio di regole strutturali, otteniamo la dimostrazione $\Upsilon'$ di $\Gamma, \Gamma'{\smallsetminus}P, P \vdash \Delta{\smallsetminus}P, \Delta'$.
	\end{enumerate}
	Ora abbiamo due dimostrazioni, $\Upsilon$ e $\Upsilon'$, che si concludono come richiesto, se non per un'occorrenza di troppo della formula $P$. Applicando la regola di taglio ad $\Upsilon$ e $\Upsilon'$, otteniamo una dimostrazione $\Upsilon''$ di:
$$
	\vlderivation{
		\vliin{}{\mrule{cut}}
			{\Gamma, \Gamma'{\smallsetminus}P, \Gamma, \Gamma'{\smallsetminus}P \vdash \Delta{\smallsetminus}P, \Delta', \Delta{\smallsetminus}P, \Delta'}
			{\vlpd{\Upsilon}{}{\Gamma, \Gamma'{\smallsetminus}P \vdash P, \Delta{\smallsetminus}P, \Delta'}}
			{\vlpd{\Upsilon'}{}{\Gamma, \Gamma'{\smallsetminus}P, P \vdash \Delta{\smallsetminus}P, \Delta'}}
	}
$$
che con semplici manipolazioni strutturali \`e riducibile a: 
$$
	\Gamma, \Gamma'{\smallsetminus}P \vdash \Delta{\smallsetminus}P, \Delta'
$$
Tuttavia il grado del taglio usato in $\Upsilon''$ \`e troppo elevato (\`e proprio di grado $d$). Ma questo \`e precisamente il caso in cui si applica il Lemma~\ref{lem:cut_LK1}, grazie al quale il taglio in $\Upsilon''$ pu\`o essere rimpiazzato con una derivazione di grado minore di $d$, e avente la stessa conclusione, dalla quale, mediante regole strutturali, possiamo ottenere $\Psi$.
\end{enumerate}
\end{proof}

Il prossimo lemma, che ci condurr\`a al risultato finale, afferma che \`e sempre possibile trasformare una dimostrazione in modo tale da diminuirne il grado. Formalmente:

\begin{lem}\label{lem:cut_LK3}
Sia $\Phi$ una dimostrazione di grado $d > 0$ per un certo sequente. Allora \`e possibile costruire una dimostrazione $\Psi$ per il medesimo sequente, avente grado inferiore.
\end{lem}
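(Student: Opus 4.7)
Il piano è procedere per induzione sulla struttura (equivalentemente, sull'altezza) di $\Phi$. Poiché $\delta(\Phi) = d > 0$, in $\Phi$ è presente almeno un'applicazione della regola di taglio di grado $d$; l'idea è quella di spingere l'attenzione dal basso verso l'alto, trasformando ricorsivamente le sottodimostrazioni in dimostrazioni di grado $< d$, e utilizzando poi il Lemma~\ref{lem:cut_LK2} per neutralizzare l'eventuale taglio finale di grado massimo, una volta che entrambe le sue premesse abbiano grado strettamente minore di $d$.

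Caso base: se $\Phi$ è una singola istanza di assioma, allora $\delta(\Phi) = 0$, contraddicendo $d > 0$; il caso base è quindi vacuo. Passo induttivo: sia $\mrule{\rho}$ l'ultima regola di $\Phi$ e siano $\Phi_1, \ldots, \Phi_n$ le dimostrazioni delle sue premesse. Ogni $\Phi_i$ ha grado al più $d$: per quelle con grado pari a $d$ si applica l'ipotesi induttiva, ottenendo una dimostrazione $\Psi_i$ dello stesso sequente di grado $< d$; per le restanti poniamo $\Psi_i = \Phi_i$. Se $\mrule{\rho}$ non è un taglio (assioma, regola strutturale o regola logica), la si riapplica alle $\Psi_i$: la dimostrazione risultante ha grado $\max_i \delta(\Psi_i) < d$. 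Se $\mrule{\rho}$ è un taglio su una formula $P$ con $\delta(P) < d$, lo si riapplica a $\Psi_1, \Psi_2$: il nuovo taglio contribuisce con $\delta(P) < d$ al grado, che resta dunque $\max(\delta(\Psi_1), \delta(\Psi_2), \delta(P)) < d$.

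Rimane il caso in cui $\mrule{\rho}$ è un taglio su una formula $P$ con $\delta(P) = d$: qui risiede il nocciolo dell'argomento. Ora $\Psi_1$ dimostra $\Gamma \vdash P, \Delta$ e $\Psi_2$ dimostra $\Gamma', P \vdash \Delta'$, entrambe di grado $< d$, il che è esattamente l'ipotesi richiesta dal Lemma~\ref{lem:cut_LK2} con formula $P$ di grado $d$. Applicandolo, otteniamo una dimostrazione $\Upsilon$ di $\Gamma, \Gamma'{\smallsetminus}P \vdash \Delta{\smallsetminus}P, \Delta'$ di grado $< d$. Il sequente conclusivo di $\Upsilon$ può differire da quello desiderato $\Gamma, \Gamma' \vdash \Delta, \Delta'$ solo per alcune occorrenze di $P$ rimosse da $\Gamma'$ e da $\Delta$; queste vengono ripristinate con un numero finito di applicazioni della regola di indebolimento (accompagnate, se necessario, da permutazioni), operazioni che non incrementano il grado. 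L'ostacolo principale è proprio questo caso $\delta(P) = d$: ricomporre ingenuamente $\Psi_1$ e $\Psi_2$ mediante un taglio su $P$ reintrodurrebbe un taglio di grado $d$, vanificando la riduzione. È precisamente questa ostruzione che il Lemma~\ref{lem:cut_LK2} è stato costruito per dissolvere, assorbendone in sé la complessità combinatoria; la presente dimostrazione si riduce così a un'opera di orchestrazione dei casi e di corretta propagazione dell'ipotesi induttiva.
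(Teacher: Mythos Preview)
La tua dimostrazione è corretta e segue sostanzialmente lo stesso schema della dimostrazione nel testo: induzione sull'altezza di $\Phi$, applicazione dell'ipotesi induttiva alle sottodimostrazioni, e uso del Lemma~\ref{lem:cut_LK2} seguito da regole strutturali nel caso critico in cui l'ultima regola sia un taglio di grado $d$. La tua suddivisione dei casi è leggermente più esplicita (distingui il taglio di grado $<d$ dalle altre regole, e tratti separatamente le sottodimostrazioni che hanno già grado $<d$), ma l'architettura dell'argomento è identica.
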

\begin{proof}
Per induzione sull'altezza $h(\Phi)$ della dimostrazione iniziale. Sia $\mrule{\rho}$ l'ultima regola applicata in $\Phi$ e siano $\Phi_i$ le premesse di $\mrule{\rho}$. Abbiamo due casi:
\begin{enumerate}
	\item $\mrule{\rho}$ non \`e un taglio di grado $d$. Per ipotesi induttiva, abbiamo $\Psi_i$ di grado minore di $d$, a cui possiamo applicare $\mrule{\rho}$ per ottenere $\Psi$;
	\item $\mrule{\rho}$ \`e un taglio di grado $d$:
	$$
		\vlderivation{
			\vliin{}{\mrule{cut}}
				{\Gamma, \Gamma' \vdash \Delta, \Delta'}
				{\vlpd{\Phi_1}{}{\Gamma \vdash P, \Delta}}
				{\vlpd{\Phi_2}{}{\Gamma', P \vdash \Delta'}}
		}
	$$
	Osserviamo che poich\'e il grado di questo $\mrule{cut}$ \`e $d$, abbiamo $\delta(P) = d$. Per ipotesi induttiva:
	$$
	\begin{array}{ccc}
		\vlderivation{\vlpd{\Psi_1}{}{\Gamma \vdash P, \Delta}} 
		& \qquad & 
		\vlderivation{\vlpd{\Psi_2}{}{\Gamma', P \vdash \Delta'}} 
	\end{array}
	$$
	hanno grado minore di $d$, e possiamo applicarvi il Lemma~\ref{lem:cut_LK2} per produrre $\Gamma, \Gamma'{\smallsetminus}P \vdash \Delta{\smallsetminus}P, \Delta'$ di grado inferiore a $d$; con alcune applicazioni di regole strutturali, otteniamo infine $\Psi$.
\end{enumerate}
\end{proof}

\begin{thm}[Gentzen Hauptsatz]\label{thm:haupt_lk}
La regola di taglio \`e ammissibile nel Sistema \textsf{LKp}.
\end{thm}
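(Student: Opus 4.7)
Il grosso del lavoro \`e gi\`a stato svolto nei tre lemmi preparatori: il Lemma~\ref{lem:cut_LK3} in particolare fornisce la macchina chiave, ovvero la capacit\`a di \emph{decrescere} strettamente il grado di una dimostrazione qualora esso sia positivo. Il piano \`e quindi di procedere per induzione sul grado $\delta(\Phi)$ di una dimostrazione $\Phi$ arbitraria del sequente $\Gamma \vdash \Delta$ nel sistema $\mathsf{LKp}$ (che include la regola di taglio). L'obiettivo \`e produrre una dimostrazione $\Phi^\ast$ dello stesso sequente con $\delta(\Phi^\ast) = 0$, cio\`e priva di occorrenze di $\mrule{cut}$.

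Il caso base \`e triviale: se $\delta(\Phi) = 0$, per definizione $\Phi$ non contiene alcuna istanza della regola di taglio, e non c'\`e nulla da fare (si pone $\Phi^\ast = \Phi$). Nel passo induttivo, assumiamo come ipotesi che per ogni dimostrazione di grado strettamente minore di $d > 0$ sia possibile costruire una dimostrazione cut-free dello stesso sequente. Data $\Phi$ con $\delta(\Phi) = d$, applichiamo il Lemma~\ref{lem:cut_LK3}, ottenendo una dimostrazione $\Psi$ dello stesso sequente $\Gamma \vdash \Delta$ con $\delta(\Psi) < d$; l'ipotesi induttiva su $\Psi$ chiude l'argomento, producendo la dimostrazione cut-free cercata.

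L'unico ostacolo serio \`e gi\`a stato superato nei lemmi precedenti, in particolare nel caso simmetrico del Lemma~\ref{lem:cut_LK2}, dove l'impiego ausiliario del Lemma~\ref{lem:cut_LK1} ha permesso di \emph{spezzare} il taglio principale in due tagli di grado strettamente inferiore -- \`e quello il cuore combinatorio dell'intera Hauptsatz. A questo livello non rimane alcuna difficolt\`a tecnica: la dimostrazione si riduce a una semplice induzione sui naturali (il grado), con il passo garantito dal Lemma~\ref{lem:cut_LK3}. Vale la pena osservare che l'induzione \`e ben fondata perch\'e $\delta$ assume valori in $\mathbb{N}$ e decresce strettamente ad ogni iterazione, dimostrando cos\`i la terminazione effettiva della procedura di eliminazione del taglio -- aspetto coerente con il carattere finitario dell'intero apparato proof-theoretical esposto nel capitolo.
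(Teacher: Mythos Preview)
La tua dimostrazione \`e corretta e segue esattamente l'approccio del paper: iterare il Lemma~\ref{lem:cut_LK3} per ridurre progressivamente il grado a zero. Il paper condensa il tutto in una sola frase, mentre tu esplichi l'induzione sul grado in modo pi\`u dettagliato, ma la sostanza \`e identica.
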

\begin{proof}
\`E sufficiente iterare l'applicazione del lemma precedente per trasformare una dimostrazione di grado strettamente positivo, in una di grado nullo, e quindi esente da applicazioni della regola di taglio.
\end{proof}

Il processo di eliminazione dei tagli fa esplodere l'altezza delle dimostrazioni. Infatti il Lemma~\ref{lem:cut_LK2} fa crescere l'altezza della prova in modo lineare nel caso peggiore (di un fattore $\kappa = 4$, senza considerare le applicazioni delle regole strutturali). Il Lemma~\ref{lem:cut_LK3} comporta una crescita esponenziale nel caso pessimo, cio\`e ridurre il grado di $1$ pu\`o accrescere l'albero di prova da $h$ a $\kappa^h$, poich\'e usando il Lemma~\ref{lem:cut_LK2} moltiplichiamo per $\kappa$ ad ogni unit\`a di altezza.

Quindi, mettendo tutto assieme, applicare l'Hauptsatz comporta una crescita iperesponenziale. Partendo da una dimostrazione di grado $d$ e altezza $h$ se ne ottiene una avente altezza $\mathcal{H}(d, h)$, dove:
\begin{eqnarray*}
	\mathcal{H}(0, h) & = & h \\
	\mathcal{H}(d+1, h) & = & \kappa^{\mathcal{H}(d, h)}
\end{eqnarray*}

L'Hauptsatz -- in varie forme, come la normalizzazione nel $\lambda$-calcolo -- \`e utilizzabile come fondamento teorico per la computazione. Per esempio, consideriamo un editor di testo: pu\`o essere visto come un insieme di lemmi generici (corrispondenti alle varie procedure di formattazione, impaginazione, \ldots) che possono essere applicati a input concreti, come una pagina scritta da qualche utente. Il numero di input possibili \`e chiaramente infinito e infatti i lemmi sono fatti per trattare infiniti casi; ma quando eseguiamo il programma su un certo input -- ad esempio per produrre in output una visualizzazione del testo -- i riferimenti a queste infinit\`a scompaiono. Concretamente, questa eliminazione dell'infinito \`e effettuata sostituendo sistematicamente le variabili (gli input dei lemmi) con il testo inserito dall'utente, in altre parole, eseguendo il programma. 

Questo \`e esattamente quello che fa l'algoritmo di eliminazione del taglio. Ecco perch\'e la struttura della procedure di cut elimination \`e cos\`i importante (osservazione fatta nel Lemma~\ref{lem:cut_LK2}). La strategia adottata nel ristrutturare la dimostrazione, effettuando le sostituzioni, produce delle scelte che sono, in generale, irreversibili. Questo pu\`o essere un problema, e si pu\`o risolvere ad esempio usando, al posto del calcolo dei sequenti, la deduzione naturale, che gode della propriet\`a di \emph{confluenza} (o \emph{propriet\`a di Church-Rosser}), la quale garantisce che \emph{le scelte fatte sono sempre reversibili}. Purtroppo la deduzione naturale soffre di altri problemi, e specialmente non gode della propriet\`a della sottoformula, e non si relaziona bene con la simmetria classica (ha molte premesse ma una sola conclusione). L'approccio deep inference pu\`o offrire diversi vantaggi nei confronti di ambedue questi formalismi.

\section{Deep inference e simmetria} 

L'eliminazione del taglio \`e un'idea centrale della proof theory. Se spostiamo tutto alla destra del turnstile e applichiamo qualche regola strutturale, la regola di taglio diventa:
$$
	\vliinf{}{\mvlrule*{cut}{1}}
		{\vdash \Delta, \Delta'}
		{\vdash P, \Delta}
		{\vdash \neg P, \Delta'}
$$
Quando letta dal basso verso l'alto, la regola di taglio introduce una formula arbitraria $P$, insieme alla sua negazione $\neg P$. Osserviamo ora la regola d'identit\`a, manipolata nuovamente per portare tutto a destra del tornello:
$$
	\vdash P, \neg P \qquad \mrule*{id}{1}
$$
Ci accorgiamo che, quando letta dall'alto al basso, anch'essa introduce una formula arbitraria assieme alla sua negazione. \`E chiaro che le due regole sono intimamente correlate. Tuttavia, la loro dualit\`a \`e oscurata dal fatto che le simmetrie verticali sono nascoste nel calcolo dei sequenti: le derivazioni sono alberi, e gli alberi sono verticalmente asimmetrici.

Per rivelare la dualit\`a tra le due regole, occorre ripristinare questa simmetria verticale. La forma ad albero delle derivazioni nel calcolo dei sequenti \`e dovuta alla presenza di regole d'inferenza con due premesse. Per esempio la regola destra di congiunzione, nella versione ad un lato diventa:
$$
	\vliinf{}{\mvlrule*[r]{\wedge}{1}}
		{\vdash P \wedge Q, \Delta, \Delta'}
		{\vdash P, \Delta}
		{\vdash Q, \Delta'}
$$
in cui \`e presente un'asimmetria: due premesse ma solo una conclusione. O per meglio dire: un connettivo nella conclusione, ma nessuno tra le premesse.

Questa asimmetria pu\`o essere riparata. Sappiamo che la virgola a destra del turnstile corrisponde alla disgiunzione, e che i diversi rami dell'albero di derivazione corrispondono a congiunzioni; pertanto la regola $\mrule*[r]{\wedge}{1}$ pu\`o essere riscritta come:
$$
	\vlinf{}{\mvlrule*[r]{\wedge}{1.1}}
		{\vdash (P \wedge Q) \vee \Delta \vee \Delta'}
		{\vdash (P \vee \Delta) \wedge (Q \vee \Delta')}
$$

In tal modo andiamo ad indentificare una parte del livello oggetto (i connettivi tra le formule) con il meta-livello (i rami dell'albero di derivazione). Cos\`i facendo rendiamo il sistema ``incompleto'', poich\'e uno degli scopi degli alberi di derivazione \`e quello di permettere alle regole d'inferenza di essere applicate in profondit\`a, seguendo la struttura sintattica delle formule. Consideriamo la derivazione:
$$
	\vlderivation{
		\vltr{}
			{\vdash \mctx{C}{P \wedge Q}}
			{\vlhy{\cdots}}
			{\vlhy{}}
			{
				\vliin{}{(\wedge_r^1)}
					{\vdash P \wedge Q, \Delta, \Delta'}
					{\vlhy{\vdash P, \Delta}}
					{\vlhy{\vdash Q, \Delta'}}
			}
	}
$$
in cui la conclusione contiene la sottoformula $P \wedge Q$. Leggendola dal basso all'alto, il motivo per cui la regola $\mrule*[r]{\wedge}{1}$ pu\`o essere applicata, \`e che le applicazioni nella derivazione sottostante decompongono il contesto $\mctx{C}{\emptyctx}$ e ne distribuiscono il contenuto tra le foglie dell'albero di derivazione.

Se vogliamo eliminare la forma ad albero delle derivazioni per ottenere un sistema completamente simmetrico, dobbiamo in qualche modo riconferire alle derivazioni l'abilit\`a di accedere alle sottoformule: questo pu\`o essere fatto direttamente, usando la metodologia deep inference. In questo modo, l'assioma d'identit\`a e la regola di taglio diventano:
$$
	\vlinf{}{\mvlrule{id}}{P \vee \neg P}{\mt}
	\qquad\qquad
	\vlinf{}{\mvlrule{cut}}{\mf}{P \wedge \neg P}
$$
da cui \`e evidente il carattere duale delle due: una pu\`o essere ottenuta dall'altra scambiando e negando la premessa e la conclusione. A questa nozione di dualit\`a ci si riferisce con l'aggettivo \emph{contrappositiva}.

Avremo modo di osservare una profonda simmetria, tutte le regole d'inferenza si raggrupperanno in coppie duali, come identit\`a e taglio. Questa dualit\`a si estender\`a naturalmente alle derivazioni: per ottenere la duale di una derivazione, baster\`a negare ogni formula e ``girare la derivazione sottosopra'', cio\`e leggerla dal basso verso l'alto.

\subsection{Sistema SKS generalizzato} 

Presentiamo un sistema formale per la logica classica proposizionale, che da una parte segue la tradizione del calcolo dei sequenti, in particolare possiede una regola di taglio e la sua ammissibilit\`a \`e dimostrata, mentre dall'altra, in contrasto col calcolo dei sequenti, ha regole che si applicano a profondit\`a arbitraria nelle formule e le derivazioni sono alberi degeneri (i.e. liste, le regole hanno al pi\`u una premessa). In questo Sistema potremo osservare una simmetria verticale nelle regole che mancava nel calcolo dei sequenti.

\begin{dfn}[Linguaggio \textsf{SKSg}, equivalenza] Sia $\mathcal{P}$ un insieme infinito enumerabile di \emph{simboli proposizionali}. L'insieme degli \emph{atomi} $\mathcal{A}$ \`e cos\`i definito:
$$
	\mathcal{A} = \{p, \mneg{p} \:|\: p \in \mathcal{P} \}
$$
dove $\mneg{\cdot}$ \`e una \emph{funzione di negazione primitiva sui simboli proposizionali}. La negazione si estende facilmente a tutti gli atomi definendo $\mneg{\mneg{p}} = p$ per ogni simbolo proposizionale negato $\mneg{p}$.

Siano $\mt, \mf \not\in \mathcal{A}$ simboli costanti o \emph{unit\`a} (che denotano rispettivamente \emph{il vero} ed \emph{il falso}) e sia $a \in \mathcal{A}$. Il \emph{linguaggio di \textsf{SKSg}} \`e definito dalle seguenti regole BNF di produzione:
$$
\begin{array}{llll}
	T & ::= & \mt \:|\: \mf \:|\: a & \quad\mbox{(termini)} \\
	P & ::= & T \:|\: \mand{P, P} \:|\: \mpar{P, P} & \quad\mbox{(formule)} 
\end{array}
$$
dove $\mand{P_1, P_2}$ e $\mpar{P_1, P_2}$ denotano rispettivamente la \emph{congiunzione} e la \emph{disgiunzione} delle formule $P_1$ e $P_2$. Come prima, dato un contesto $\mctx{C}{\emptyctx}$ e una formula $P$, indichiamo con $\mctx{C}{P}$ la formula ottenuta saturando il contesto $\mctx*{C}$ con la formula $P$. Ad esempio, sia $\mctx{C}{\emptyctx} = \mpar{a, \mand{\emptyctx, c}}$: allora $\mctx{C}{\mneg{b}} = \mpar{a, \mand{\mneg{b}, c}}$ mentre $\mctx{C}{\mand{b_1, b_2}} = \mpar{a, \mand{\mand{b_1, b_2}, c}}$; in quest'ultimo caso possiamo adottare la convenzione di \emph{omettere le parentesi graffe attorno ai termini composti} e pertanto di scrivere semplicemente $\mctx*{C}\mand{b_1, b_2}$.

Consideriamo due formule equivalenti quando appartengono alla relazione indotta dalle equazioni in Figura~\ref{fig:skseq}.

Infine, una formula \`e in \emph{forma normale negata} quando la negazione occorre solo sui simboli proposizionali.
\end{dfn}

\begin{figure}[t!]
\begin{minipage}[t]{.5\textwidth}
	\textbf{Associativit\`a}
	\begin{center}
		$\mpar{\mpar{P, Q}, R} = \mpar{P, \mpar{Q, R}}$ \\
		$\mand{\mand{P, Q}, R} = \mand{P, \mand{Q, R}}$ 
	\end{center}
	\textbf{Unit\`a}
	\begin{center}
	\begin{tabular}{ccc}
		$\mpar{\mt, \mt} = \mt$ & \quad & $\mpar{\mf, P} = P$ \\
		$\mand{\mf, \mf} = \mf$ & \quad & $\mand{\mt, P} = P$
	\end{tabular}
	\end{center}
	\textbf{Chiusura contestuale}
	\begin{center}
		\AxiomC{$P = Q$}
		\UnaryInfC{$\mctx{C}{P} = \mctx{C}{Q}$}
		\DisplayProof{}
	\end{center}
	\vspace{.25em}
\end{minipage}
\begin{minipage}[t]{.5\textwidth}
	\textbf{Commutativit\`a}
	\begin{center}
		$\mpar{P, Q} = \mpar{Q, P}$ \\
		$\mand{P, Q} = \mand{Q, P}$
	\end{center}
	\textbf{Negazione}
	\begin{center}
	\begin{tabular}[c]{ccc}
		$\mneg{\mf}$ & $=$ & $\mt$ \\
		$\mneg{\mt}$ & $=$ & $\mf$ \\
		$\mneg{\mpar{P, Q}}$ & $=$ & $\mand{\mneg{P}, \mneg{Q}}$ \\
		$\mneg{\mand{P, Q}}$ & $=$ & $\mpar{\mneg{P}, \mneg{Q}}$ \\
		$\mneg{\mneg{P}}$ & $=$ & $P$
	\end{tabular}
	\end{center}
\end{minipage}
\textbf{Equivalenza}
\begin{center}
	$P = P$
	\qquad\qquad
	\AxiomC{$P = Q$}
	\UnaryInfC{$Q = P$}
	\DisplayProof{}
	\qquad\qquad
	\AxiomC{$P = Q$}
	\AxiomC{$Q = R$}
	\BinaryInfC{$P = R$}
	\DisplayProof{}
\end{center} 
\caption{Equivalenza tra formule di \textsf{SKSg}}
\label{fig:skseq}
\end{figure}

In virt\`u della propriet\`a associativa, adottiamo la seguente convenzione:
\begin{eqnarray*}
	\mpar{P_1, P_2, \ldots, P_n} & = & \mpar{P_1, \mpar{P_2, \mpar{\ldots, P_n} \ldots}} \\
	\mand{P_1, P_2, \ldots, P_n} & = & \mand{P_1, \mand{P_2, \mand{\ldots, P_n} \ldots}}
\end{eqnarray*}
cio\`e scriviamo rispettivamente liste di disgiunzioni e congiunzioni senza curarci di come le sottoformule siano associate tra loro, assumendo quando non specificato che associno a destra.

Osserviamo inoltre che l'equivalenza ci permette di spingere la negazione all'interno delle formule fino a livello degli atomi, scambiando ogni volta congiunzione e disgiunzione in stile De Morgan.

\begin{dfn}[Dualit\`a, simmetria]
Il \emph{duale di una regola} d'inferenza si ottiene scambiando la premessa con la conclusione e negando ambedue. Ad esempio:
$$
	\vlinf{}{\mruledn{i}}{\mpar{P, \mneg{P}}}{\mt}
	\qquad\qquad
	\vlinf{}{\mruleup{i}}{\mf}{\mand{P, \mneg{P}}}
$$

Un \emph{sistema deduttivo} \`e \emph{simmetrico} se per ogni regola d'inferenza esso contiene anche la duale.
\end{dfn}

Il Sistema deduttivo \textsf{SKSg} \`e riportato in Figura~\ref{fig:sksg_cos}. Il suo nome \`e un acronimo, in cui la prima ``S'' indica che \`e simmetrico, la ``K'' sta per ``Klassisch'' (come nel Sistema \textsf{LK}) e la ``S'' finale dice che il Sistema \`e espresso nel calcolo delle strutture (il termine ``struttura'' \`e usato per indicare una lista di formule in congiunzione o in disgiunzione). La ``g'' minuscola indica che il Sistema \`e \emph{generalizzato}, che significa che le regole non sono ristrette alla forma atomica.

\`E possibile dimostrare~(\cite{Bru04}) che questo sistema formale cattura tutte le dimostrazioni esprimibili in \textsf{LKp}, passando per un sistema intermedio chiamato calcolo dei sequenti ``ad un lato'' o calcolo dei sequenti di Gentzen-Sch\"utte~(\cite{Sch50, TroSch96}).

\begin{figure}
\begin{tabular}{ccccc}
	$\mt \quad \mrule{ax}$ &
	$\vlinf{}{\mruledn{i}}{\mctx*{C}\mpar{P, \mneg{P}}}{\mctx{C}{\mt}}$ & 
	$\vlinf{}{\mruledn{w}}{\mctx{C}{P}}{\mctx{C}{\mf}}$ &
	$\vlinf{}{\mruledn{c}}{\mctx{C}{P}}{\mctx*{C}\mpar{P, P}}$ &
	$\vlinf{}{\mvlrule{s}}{\mctx*{C}\mpar{\mand{P, Q}, R}}{\mctx*{C}\mand{P, \mpar{Q, R}}}$ \\\\
	\begin{minipage}[c]{6em}\centering(nessuna regola per $\mf$)\end{minipage} &
	$\vlinf{}{\mruleup{i}}{\mctx{C}{\mf}}{\mctx*{C}\mand{P, \mneg{P}}}$ &
	$\vlinf{}{\mruleup{w}}{\mctx{C}{\mt}}{\mctx{C}{P}}$ &
	$\vlinf{}{\mruleup{c}}{\mctx*{C}\mand{P, P}}{\mctx{C}{P}}$ &
	~
\end{tabular}
\caption{Sistema deduttivo \textsf{SKSg}}
\label{fig:sksg_cos}
\end{figure}

Le regole $\mrule{s}$, $\mruledn{w}$ e $\mruledn{c}$ sono chiamate rispettivamente \emph{scambio}, \emph{indebolimento} e \emph{contrazione}. Le duali portano lo stesso nome, con l'aggiunta del prefisso ``co-'', ad esempio $\mruleup{w}$ \`e chiamata \emph{co-indebolimento}. La regola di scambio \`e duale a s\'e stessa, o \emph{auto-duale}. La sua funzione \`e quella di modellare il comportamento duale di congiunzione e disgiunzione. La sua semantica intesa non \`e facile da cogliere: per una spiegazione dettagliata, si rimanda alla Sezione~\ref{sec:lbv_opi} del presente volume.

Mentre \`e immediato osservare la corrispondenza tra $\mruledn{w}$ e $\mruledn{c}$ in \textsf{SKS} e le regole di indebolimento e contrazione nel calcolo dei sequenti, le loro duali non hanno corrispettivi in \textsf{LKp}. Il loro ruolo \`e quello di assicurare la simmetria del Sistema; se non siamo interessati alla simmetria, si pu\`o dimostrare che queste regole (e anche il taglio, cio\`e tutte le regole aventi la freccia rivolta verso l'alto) sono ammissibili. Infatti la nozione di dimostrazione \`e inerentemente asimmetrica: il duale di una dimostrazione \emph{non \`e} una dimostrazione, bens\`i \`e una derivazione che si conclude con l'unit\`a $\mf$, ossia una \emph{refutazione}.

Il primo meta-teorema che andremo a dimostrare, ci d\`a una caratterizzazione del Sistema \textsf{SKSg}, mettendo in relazione il concetto di derivazione con quello di dimostrazione. 

\begin{thm}[Deduzione]~\\
Esiste una derivazione $\;\toks0={3.0}\vlderivation{\vldf{\Psi}{\mbox{\scriptsize{\textsf{SKSg}}}}{Q}{\vlhy{P}}{\the\toks0}}\;$ se e solo se esiste una dimostrazione $\;\toks0={3.0}\vlderivation{\vlpf{\Phi}{\mbox{\scriptsize{\textsf{SKSg}}}}{\mpar{\mneg{P}, Q}}{\the\toks0}}\;$.
\end{thm}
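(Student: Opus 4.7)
Il piano \`e provare le due implicazioni separatamente, sfruttando in modo essenziale le due caratteristiche peculiari di \textsf{SKSg}: l'applicabilit\`a delle regole a profondit\`a arbitraria (grazie ai contesti) e la presenza simmetrica di $\mruledn{i}$ e $\mruleup{i}$. Le due direzioni risulteranno una il \emph{contrappositivo} dell'altra, nel senso del commento fatto dopo la discussione su identit\`a e taglio.

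\textbf{Direzione ($\Rightarrow$).} Supponiamo di avere una derivazione $\Psi$ da $P$ a $Q$. L'idea \`e ``tappare'' la premessa $P$ usando l'identit\`a. Precisamente, la derivazione $\Psi$ pu\`o essere immersa nel contesto $\mctx{C}{\emptyctx} = \mpar{\mneg{P}, \emptyctx}$: poich\'e in deep inference ogni regola pu\`o essere applicata in un contesto arbitrario, si ottiene una derivazione da $\mpar{\mneg{P}, P}$ a $\mpar{\mneg{P}, Q}$. Si costruisce quindi la dimostrazione facendo precedere $\mrule{ax}$ e $\mruledn{i}$ (applicata nel contesto vuoto) alla derivazione cos\`i immersa:
$$
\vlderivation{
  \vlde{}{}{\mpar{\mneg{P}, Q}}{
    \vlin{}{\mruledn{i}}{\mpar{\mneg{P}, P}}{
      \vlhy{\mt}
    }
  }
}
$$
dove la parte derivativa finale \`e proprio $\Psi$ immersa in $\mpar{\mneg{P}, \emptyctx}$. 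La correttezza \`e immediata: basta osservare che ogni passo di $\Psi$ rimane valido in quel contesto.

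\textbf{Direzione ($\Leftarrow$).} Questa \`e la direzione concettualmente pi\`u interessante e sfrutta in modo cruciale la regola di scambio $\mrule{s}$ unita a $\mruleup{i}$. Data una dimostrazione $\Phi$ di $\mpar{\mneg{P}, Q}$, la si immerge nel contesto $\mand{P, \emptyctx}$: poich\'e $\mand{P, \mt} = P$ per l'equivalenza sulle unit\`a, si ottiene una derivazione da $P$ a $\mand{P, \mpar{\mneg{P}, Q}}$. Da qui si costruisce la derivazione richiesta con due passi, che formano la ``dualit\`a contrappositiva'' del caso precedente:
$$
\vlderivation{
  \vlin{}{\mruleup{i}}{Q}{
    \vlin{}{\mvlrule{s}}{\mpar{\mand{P, \mneg{P}}, Q}}{
      \vlde{}{}{\mand{P, \mpar{\mneg{P}, Q}}}{
        \vlhy{P}
      }
    }
  }
}
$$
(sfruttando infine l'equivalenza $\mpar{\mf, Q} = Q$). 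Lo scambio fa ``scendere'' $P$ accanto a $\mneg{P}$, e $\mruleup{i}$ li annichila producendo $\mf$.

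\textbf{Punto critico.} La parte delicata non \`e tecnica ma concettuale: individuare il pattern $\mrule{s}$ seguito da $\mruleup{i}$ nella direzione inversa, che \`e esattamente il meccanismo con cui in deep inference si ``scarica'' un'ipotesi senza dover decomporre strutturalmente le formule. Una volta visto questo, entrambe le direzioni sono brevi e simmetriche tra loro. Vale la pena notare che la dimostrazione non richiede alcuna induzione sulla struttura di $\Phi$ o $\Psi$, a differenza delle analoghe nel calcolo dei sequenti: tutta la forza risiede nell'abilit\`a di immergere una derivazione in un contesto arbitrario.
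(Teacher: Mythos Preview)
La tua dimostrazione \`e corretta e coincide con quella del testo: stessa costruzione in entrambe le direzioni (immersione di $\Psi$ nel contesto $\mpar{\mneg{P},\emptyctx}$ preceduta da $\mruledn{i}$ per $\Rightarrow$; immersione di $\Phi$ in $\mand{P,\emptyctx}$ seguita da $\mrule{s}$ e $\mruleup{i}$ per $\Leftarrow$). Anche l'osservazione finale sull'assenza di induzione \`e pertinente e ben vista.
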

\begin{proof} 
La dimostrazione $\Phi$ pu\`o essere ottenuta, data una derivazione $\Psi$, come segue:
$$
	\toks0={3.0}\vlderivation{
		\vldf{\mpar{\mneg{P}, \Psi}}{\scriptsize{\textsf{SKSg}}}{\mpar{\mneg{P}, Q}}{
			\vlin{}{\mruledn{i}}{\mpar{\mneg{P}, P}}{\vlhy{\mt}}
		}{\the\toks0}
	}
$$
Osserviamo che grazie alla metodologia deep inference, \`e stato possibile racchiudere l'intera derivazione $\Psi$ all'interno del contesto $\mctx{C}{\emptyctx} = \mpar{\mneg{P}, \emptyctx}$.

La derivazione $\Psi$ si ottiene da $\Phi$ come segue:
$$
	\toks0={3.0}\vlderivation{
		\vlin{}{\mruleup{i}}{\mpar{\mf, Q} = Q}{
			\vlin{}{\mrule{s}}{\mpar{\mand{P, \mneg{P}}, Q}}{
				\vldf{\mand{P, \Phi}}{\mbox{\scriptsize{\textsf{SKSg}}}}{\mand{P, \mpar{\mneg{P}, Q}}}{\vlhy{P = \mand{P, \mt}}}{\the\toks0}
			}
		}
	}
$$
Anche in questo caso la trasformazione ha avuto successo perch\'e \`e stato possibile usare la dimostrazione $\Phi$ nel contesto $\mctx{C'}{\emptyctx} = \mand{P, \emptyctx}$.
\end{proof}

\subsection{Localit\`a: il Sistema SKS} 

Le regole d'inferenza che duplicano una quantit\`a illimitata di informazione sono problematiche dal punto di vista della complessit\`a e dell'implementazione, ad esempio, della proof search. Nel calcolo dei sequenti, la regola di contrazione:
\begin{center}
	\AxiomC{$\Gamma, P, P \vdash \Delta$}
	\RightLabel{$\mrule[l]{cont}$}
	\UnaryInfC{$\Gamma, P \vdash \Delta$}
	\DisplayProof{}
\end{center}
quando letta dall'alto al basso duplica una formula $P$ di dimensione arbitraria. Qualunque sia il meccanismo effettivo che compie questa duplicazione, esso necessita di una visione \emph{globale} delle copie di $P$ presenti: se ad esempio pensiamo di implementare la contrazione su un sistema distribuito, in cui ogni processore ha una quantit\`a limitata di memoria locale, la formula $P$ potrebbe essere replicata in processori diversi. In questo caso nessun processore avrebbe una visone globale delle copie di $P$, e bisognerebbe usare un meccanismo \emph{ad hoc} per gestire questa situazione. Chiamiamo \emph{locali} le regole d'inferenza che non necessitano di una visione globale su formule di dimensione arbitraria, e \emph{non-locali} le altre. 

Mentre \`e possibile utilizzare tecniche per risolvere questa situazione nelle implementazioni, una questione interessante \`e trovare un approccio teorico che sia in grado di eliminare le regole non-locali. Questo \`e possibile, riducendo le regole non-locali alla loro forma atomica. Ad esempio, l'identit\`a:
$$
	\vlinf{}{\mruledn{i}}{\mctx*{C}\mpar{P, \mneg{P}}}{\mctx{C}{\mt}}
	\qquad\mbox{\`e sostituita dalla regola}\qquad
	\vlinf{}{\mruledn{ai}}{\mctx*{C}\mpar{a, \mneg{a}}}{\mctx{C}{\mt}}
$$
dove $a$ \`e un simbolo proposizionale.

Operazioni analoghe possono essere fatte anche nel calcolo dei sequenti; l'unica regola problematica \`e, appunto, la contrazione. Essa non pu\`o semplicemente essere ristretta alla forma atomica nel Sistema \textsf{SKSg}. Il problema si risolve inserendo nel Sistema una nuova regola, introdotta in~\cite{BruTiu01} e chiamata \emph{mediale}:
$$
	\vlinf{}{\mvlrule{m}}{\mctx*{C}\mand{\mpar{P,R}, \mpar{Q,S}}}{\mctx*{C}\mpar{\mand{P,Q}, \mand{R,S}}}
$$
Questa regola non ha analoghi nel calcolo dei sequenti, ma \`e chiaramente corretta, poich\'e \`e derivabile da $\{\mruledn{c}, \mruledn{w}\}$:
$$
	\vlderivation{
		\vlin{}{\mruledn{c}}{\mctx*{C}\mand{\mpar{P,R}, \mpar{Q,S}}}{
			\vlin{}{\mruledn{w}}{\mctx*{C}\mpar{\mand{\mpar{P,R}, \mpar{Q,S}}, \mand{\mpar{P,R}, \mpar{Q,S}}}}{
				\vlin{}{\mruledn{w}}{\mpar{\mand{P, \mpar{Q,S}}, \mand{\mpar{P,R}, \mpar{Q,S}}}}{
					\vlin{}{\mruledn{w}}{\mpar{\mand{P, Q}, \mand{\mpar{P,R}, \mpar{Q,S}}}}{
						\vlin{}{\mruledn{w}}{\mpar{\mand{P, Q}, \mand{R, \mpar{Q,S}}}}{
							\vlhy{\mpar{\mand{P, Q}, \mand{R, S}}}
						}
					}
				}
			}
		}
	}
$$

\begin{figure}
\begin{tabular}{ccccccc}
	$\mt \quad \mrule{ax}$ & \quad\quad &
	$\vlinf{}{\mruledn{ai}}{\mctx*{C}\mpar{a, \mneg{a}}}{\mctx{C}{\mt}}$ & \quad\quad &
	$\vlinf{}{\mruledn{aw}}{\mctx{C}{a}}{\mctx{C}{\mf}}$ & \quad\quad &
	$\vlinf{}{\mruledn{ac}}{\mctx{C}{a}}{\mctx*{C}\mpar{a, a}}$ \\\\
	\begin{minipage}[c]{6em}\centering(nessuna regola per $\mf$)\end{minipage} & \qquad\quad &
	$\vlinf{}{\mruleup{ai}}{\mctx{C}{\mf}}{\mctx*{C}\mand{a, \mneg{a}}}$ & \quad\quad &
	$\vlinf{}{\mruleup{aw}}{\mctx{C}{\mt}}{\mctx{C}{a}}$ & \quad\quad &
	$\vlinf{}{\mruleup{ac}}{\mctx*{C}\mand{a, a}}{\mctx{C}{a}}$ 
\end{tabular}
\vspace{1em}
$$
	\vlinf{}{\mvlrule{s}}{\mctx*{C}\mpar{\mand{P, Q}, R}}{\mctx*{C}\mand{P, \mpar{Q, R}}}
	\qquad\qquad\qquad
	\vlinf{}{\mvlrule{m}}{\mctx*{C}\mand{\mpar{P,R}, \mpar{Q,S}}}{\mctx*{C}\mpar{\mand{P,Q}, \mand{R,S}}}
$$
\caption{Regole del Sistema \emph{locale} \textsf{SKS}}
\label{fig:sks_cos}
\end{figure}

Il prossimo teorema ci garantisce la derivabilit\`a del Sistema locale \textsf{KS} in Figura~\ref{fig:sks_cos}:

\begin{thm}\label{thm:sksg_atomic}
Le regole $\mruledn{i}$, $\mruledn{w}$ e $\mruledn{c}$ sono derivabili, rispettivamente da $\{\mruledn{ai}, \mrule{s}\}$, $\{\mruledn{aw}, \mrule{s}\}$, $\{\mruledn{ac}, \mrule{m}\}$. Dualmente, le regole $\mruleup{i}$, $\mruleup{w}$ e $\mruleup{c}$ sono risp. derivabili da $\{\mruleup{ai}, \mrule{s}\}$, $\{\mruleup{aw}, \mrule{s}\}$, $\{\mruleup{ac}, \mrule{m}\}$.
\end{thm}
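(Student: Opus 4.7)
Il piano è procedere per induzione strutturale sulla formula $P$, ripetendo lo schema separatamente per ciascuna delle sei regole generali e sfruttando l'approccio deep inference per poter applicare l'ipotesi induttiva in contesti arbitrari. Il caso base, $P \in \mathcal{A}$, segue immediatamente dalla regola atomica corrispondente ($\mruledn{ai}$, $\mruledn{aw}$ o $\mruledn{ac}$), mentre i casi $P \in \{\mt, \mf\}$ si riducono a istanze banali grazie alle equazioni di unità in Figura~\ref{fig:skseq} (ad esempio $\mpar{\mt, \mneg{\mt}} = \mpar{\mt, \mf} = \mt$, per cui un'istanza di $\mruledn{i}$ con $P = \mt$ collassa). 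Resta il passo induttivo, in cui ciascuna regola generale invoca la sua compagna: $\mrule{s}$ per $\mruledn{i}$ e $\mruledn{w}$, $\mrule{m}$ per $\mruledn{c}$.

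Per il caso più delicato, $\mruledn{i}$ con $P = \mand{Q, R}$, partirò da $\mt = \mand{\mt, \mt}$ e applicherò due volte l'ipotesi induttiva, una in ciascun sotto-contesto, ottenendo $\mand{\mpar{Q, \mneg{Q}}, \mpar{R, \mneg{R}}}$. Due applicazioni di $\mrule{s}$, intervallate da usi della commutatività, riarrangiano prima questa formula come $\mpar{\mand{\mpar{Q, \mneg{Q}}, R}, \mneg{R}}$ e infine come $\mpar{\mand{Q, R}, \mpar{\mneg{Q}, \mneg{R}}}$, che per De Morgan coincide con $\mpar{P, \mneg{P}}$. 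Il caso simmetrico $P = \mpar{Q, R}$ è analogo, oppure può essere ridotto al precedente osservando che $\mneg{P} = \mand{\mneg{Q}, \mneg{R}}$ è una congiunzione, scambiando il ruolo di formula e negazione.

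Per $\mruledn{w}$ con $P = \mand{Q, R}$ (risp.\ $P = \mpar{Q, R}$) sfrutterò l'equazione $\mand{\mf, \mf} = \mf$ (risp.\ $\mpar{\mf, \mf} = \mf$) per duplicare $\mf$ e poi indebolire indipendentemente i due sotto-termini mediante l'ipotesi induttiva. Per $\mruledn{c}$ con $P = \mand{Q, R}$ applicherò la mediale $\mrule{m}$ a $\mpar{\mand{Q, R}, \mand{Q, R}}$, ottenendo $\mand{\mpar{Q, Q}, \mpar{R, R}}$, e contrarrò le due coppie per ipotesi induttiva; quando invece $P = \mpar{Q, R}$, bastano associatività e commutatività per ricondursi a $\mpar{\mpar{Q, Q}, \mpar{R, R}}$ prima di invocare l'ipotesi. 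Le tesi duali per $\mruleup{i}$, $\mruleup{w}$ e $\mruleup{c}$ si ottengono rovesciando ciascuna derivazione e negando ogni formula, operazione ben definita in quanto $\mrule{s}$ e $\mrule{m}$ sono essenzialmente auto-duali, modulo le leggi di De Morgan. L'ostacolo principale sarà la coreografia degli scambi nel passo induttivo di $\mruledn{i}$: bisognerà orchestrare con attenzione la sequenza di applicazioni di $\mrule{s}$ e delle equivalenze strutturali per far emergere esattamente la forma $\mpar{P, \mneg{P}}$ richiesta dalla regola generalizzata.
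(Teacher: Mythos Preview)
Your approach by structural induction on $P$ is exactly the one the paper takes, and your treatment of the inductive steps (two switches for $\mruledn{i}$, duplicating the unit for $\mruledn{w}$, a medial followed by two inductive contractions for $\mruledn{c}$) matches the paper's derivations.

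There is one genuine oversight in your base cases: you claim that all instances with $P \in \{\mt,\mf\}$ collapse to equalities via the unit equations, but this fails for $\mruledn{w}$ when $P = \mt$. In that case the instance reads $\mctx{C}{\mf} \to \mctx{C}{\mt}$, and $\mf$ and $\mt$ are not identified by the equations in Figura~\ref{fig:skseq}. The paper handles precisely this subcase with a single application of $\mrule{s}$, rewriting $\mctx{C}{\mf}$ as $\mctx*{C}\mand{\mf,\mpar{\mt,\mt}}$, switching to $\mctx*{C}\mpar{\mand{\mf,\mt},\mt}$, and collapsing to $\mctx{C}{\mt}$. This is in fact the \emph{only} place where $\mrule{s}$ is used in the derivation of $\mruledn{w}$ (the inductive steps for $\mpar{Q,R}$ and $\mand{Q,R}$ need only the unit equations and the inductive hypothesis), so it is what justifies the presence of $\mrule{s}$ in the set $\{\mruledn{aw},\mrule{s}\}$ in the statement. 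Once you patch this subcase, your proof is complete and coincides with the paper's.
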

\begin{proof}
Data un'istanza di una delle seguenti regole:
$$
	\vlinf{}{\mruledn{i}}{\mctx*{C}\mpar{P, \mneg{P}}}{\mctx{C}{\mt}}
	\qquad,\qquad
	\vlinf{}{\mruledn{w}}{\mctx{C}{P}}{\mctx{C}{\mf}}
	\qquad,\qquad
	\vlinf{}{\mruledn{c}}{\mctx{C}{P}}{\mctx*{C}\mpar{P, P}}
$$
costruiamo una nuova derivazione per induzione strutturale su $P$:
\begin{itemize}
	\item $P$ \`e un atomo. Allora l'istanza di una regola generale \`e anche un'istanza della corrispettiva in forma atomica.
	\item $P = \mt$ o $P = \mf$. Allora l'istanza di una regola generale \`e un'istanza della relazione d'equivalenza, con l'eccezione dell'indebolimento quando $P = \mf$. Allora la regola d'indebolimento generale \`e sostituita da:
	$$
		\vlinf{}{\mvlrule{s}}{\mctx*{C}\mpar{\mand{\mf, \mt}, \mt} = \mctx{C}{\mt}}
			{\mctx{C}{\mf} = \mctx*{C}\mand{\mf, \mpar{\mt, \mt}}}
	$$
	\item $P = \mpar{Q, R}$. Per ipotesi induttiva, usando rispettivamente le sole regole $\{\mruledn{ai}, \mrule{s}\}$, $\{\mruledn{aw}, \mrule{s}\}$ e $\{\mruledn{ac}, \mrule{m}\}$, abbiamo:
	$$
		\vlderd{\Phi_Q^{\mruledn{i}}}{}{\mctx*{C}\mpar{Q, \mneg{Q}}}{\mctx{C}{\mt}}
		\quad
		\vlderd{\Phi_R^{\mruledn{i}}}{}{\mctx*{C}\mpar{R, \mneg{R}}}{\mctx{C}{\mt}}
		\quad,\quad
		\vlderd{\Phi_Q^{\mruledn{w}}}{}{\mctx{C}{Q}}{\mctx{C}{\mf}}
		\quad
		\vlderd{\Phi_R^{\mruledn{w}}}{}{\mctx{C}{R}}{\mctx{C}{\mf}}
		\quad,\quad
		\vlderd{\Phi_Q^{\mruledn{c}}}{}{\mctx{C}{Q}}{\mctx*{C}\mpar{Q, Q}}
		\quad
		\vlderd{\Phi_R^{\mruledn{c}}}{}{\mctx{C}{R}}{\mctx*{C}\mpar{R, R}}
	$$
	da cui \`e possibile derivare:
	$$
		\vlderivation{
			\vlin{}{\mvlrule{s}}{\mctx*{C}\mpar{\mand{\mneg{Q}, \mneg{R}}, \mpar{Q, R}}}{
				\vlin{}{\mvlrule{s}}{\mctx*{C}\mpar{\mand{\mneg{R}, \mpar{\mneg{Q}, Q}}, R}}{
					\vldd{\Phi_Q^{\mruledn{i}}}{}{\mctx*{C}\mand{\mpar{\mneg{Q}, Q}, \mpar{\mneg{R}, R}}}{
						\vldd{\Phi_R^{\mruledn{i}}}{}{\mctx*{C}\mpar{\mneg{R}, R}}
							{\vlhy{\mctx{C}{\mt}}}
					}
				}
			}
		}
		\qquad,\qquad
		\vlderivation{
			\vldd{\Phi_R^{\mruledn{w}}}{}{\mctx*{C}\mpar{Q, R}}{
				\vldd{\Phi_Q^{\mruledn{w}}}{}{\mctx*{C}\mpar{Q, \mf}}{
					\vlhy{\mctx{C}{\mf} = \mctx*{C}\mpar{\mf, \mf}}
				}
			}
		}
		\qquad,\qquad
		\vlderivation{
			\vldd{\Phi_R^{\mruledn{c}}}{}{\mctx*{C}\mpar{Q, R}}{
				\vldd{\Phi_Q^{\mruledn{c}}}{}{\mctx*{C}\mpar{Q, R, R}}{
					\vlhy{\mctx*{C}\mpar{Q, Q, R, R}}
				}
			}
		}
	$$

	\item $P = \mand{Q, R}$. L'ipotesi induttiva \`e identica a quella del caso precedente,	da cui \`e possibile derivare:
	$$
		\vlderivation{
			\vlin{}{\mvlrule{s}}{\mctx*{C}\mpar{\mand{Q, R}, \mpar{\mneg{Q}, \mneg{R}}}}{
				\vlin{}{\mvlrule{s}}{\mctx*{C}\mpar{\mand{R, \mpar{Q, \mneg{Q}}}, \mneg{R}}}{
					\vldd{\Phi_Q^{\mruledn{i}}}{}{\mctx*{C}\mand{\mpar{Q, \mneg{Q}}, \mpar{R, \mneg{R}}}}{
						\vldd{\Phi_R^{\mruledn{i}}}{}{\mctx*{C}\mpar{R, \mneg{R}}}
							{\vlhy{\mctx{C}{\mt}}}
					}
				}
			}
		}
		\qquad,\qquad
		\vlderivation{
			\vldd{\Phi_R^{\mruledn{w}}}{}{\mctx*{C}\mand{Q, R}}{
				\vldd{\Phi_Q^{\mruledn{w}}}{}{\mctx*{C}\mand{Q, \mf}}{
					\vlhy{\mctx{C}{\mf} = \mctx*{C}\mand{\mf, \mf}}
				}
			}
		}
		\qquad,\qquad
		\vlderivation{
			\vldd{\Phi_R^{\mruledn{c}}}{}{\mctx*{C}\mand{Q, R}}{
				\vldd{\Phi_Q^{\mruledn{c}}}{}{\mctx*{C}\mand{Q, \mpar{R, R}}}{
					\vlin{}{\mvlrule{m}}{\mctx*{C}\mand{\mpar{Q, Q}, \mpar{R, R}}}{
						\vlhy{\mctx*{C}\mpar{\mand{Q, R}, \mand{Q, R}}}
					}
				}
			}
		}
	$$
\end{itemize}

I casi duali si dimostrano allo stesso modo, ``girando sottosopra'' le dimostrazioni e negando tutte le formule.
\end{proof}

Questo \`e un risultato molto significativo e difficilmente ottenibile usando il calcolo dei sequenti. Inoltre la dimostrazione \`e costruttiva e modulare, caratteristiche che ci permetteranno in seguito di utilizzare regole generalizzate con la consapevolezza di poterle sempre sostituire con una procedura effettiva con le loro versioni atomiche.

\subsection{Rompere la simmetria: il Sistema KS} 

Dimostriamo che nel Sistema \textsf{SKS} le regole con la freccia rivolta in alto $\mruleup{\rho}$ sono \emph{ammissibili} (e quindi in particolare anche la regola di taglio lo \`e). Il Sistema risultante dall'eliminazione delle regole $\mruleup{\rho}$ \`e chiamato Sistema \textsf{KS}, ed \`e riportato in Figura~\ref{fig:ks_cos}. 

In questa sezione seguiamo la dimostrazione di~\cite{Bru04}, a cui ho apportato alcune modifiche personali di carattere tecnico.

\begin{lem}\label{lem:sks_cut1}
Ogni regola di \textsf{SKS} \`e derivabile usando solo la sua duale, \emph{identit\`a}, \emph{taglio} e \emph{switch}.
\end{lem}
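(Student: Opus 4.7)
Il piano è di dare una costruzione uniforme che, data un'istanza arbitraria di una regola $\rho$ di \textsf{SKS}, produca una derivazione equivalente usando solamente la duale $\rho^\dagger$, l'identità $\mruledn{i}$, il taglio $\mruleup{i}$ e lo switch $\mrule{s}$. Osservo preliminarmente che ogni regola di \textsf{SKS}, nella lettura deep inference, ha la forma ``da $\mctx{C}{P}$ (premessa) a $\mctx{C}{Q}$ (conclusione)'' per un'opportuna coppia di formule $P, Q$; per simmetria, la sua duale $\rho^\dagger$ trasforma $\mctx{D}{\mneg{Q}}$ in $\mctx{D}{\mneg{P}}$ in un arbitrario contesto $\mctx*{D}$.

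La derivazione che ho in mente, letta dall'alto verso il basso, è la seguente: partendo da $\mctx{C}{P} = \mctx*{C}\mand{P, \mt}$, applico $\mruledn{i}$ al $\mt$ interno per introdurre $\mpar{\mneg{Q}, Q}$, ottenendo $\mctx*{C}\mand{P, \mpar{\mneg{Q}, Q}}$; applico poi $\mrule{s}$ per arrivare a $\mctx*{C}\mpar{\mand{P, \mneg{Q}}, Q}$; applico $\rho^\dagger$ \emph{in profondità} sulla sottoformula $\mneg{Q}$ -- passo reso lecito proprio dalla metodologia deep inference, senza bisogno di aprire rami -- per ottenere $\mctx*{C}\mpar{\mand{P, \mneg{P}}, Q}$; infine uso $\mruleup{i}$ per collassare $\mand{P, \mneg{P}}$ in $\mf$, da cui $\mctx*{C}\mpar{\mf, Q} = \mctx{C}{Q}$ per la regola di unità.

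Le regole auto-duali $\mrule{s}$ e $\mrule{m}$ rendono la tesi triviale su di sé, poiché ciascuna è derivabile da se stessa in un solo passo, e lo switch è tra le regole consentite. I casi in cui $P$ o $Q$ coincidono con un'unità $\mt$ o $\mf$ (come accade per $\mruledn{i}$, $\mruleup{i}$, $\mruledn{aw}$, $\mruleup{aw}$) si degenerano in modo pulito grazie alle equazioni di Figura~\ref{fig:skseq}: uno o più passaggi si collassano a mere equivalenze, ma lo scheletro della costruzione resta invariato. Analogamente, i contesti $\mctx*{C}$ possono essere riassorbiti mediante la chiusura contestuale dell'equivalenza.

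L'ostacolo principale, più che tecnico, è concettuale: bisogna riconoscere che la costruzione è genuinamente uniforme, ovvero che non occorre procedere per analisi di casi regola per regola, ma basta osservare che lo scheletro identità/switch/duale/taglio è indipendente dalla natura specifica di $P$ e $Q$, e che la deep inference garantisce l'applicabilità di $\rho^\dagger$ alla profondità richiesta. Il lemma cattura quindi in forma pulita la dualità \emph{contrappositiva} del Sistema \textsf{SKS}: ogni regola è interscambiabile con la propria duale a meno dell'infrastruttura minima data da identità, taglio e switch, un risultato tipicamente inaccessibile al calcolo dei sequenti.
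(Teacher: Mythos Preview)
La tua proposta \`e corretta e coincide essenzialmente con la dimostrazione data nel testo: stessa derivazione uniforme $\mruledn{i}$, poi $\mrule{s}$, poi la regola duale applicata in profondit\`a sul $\mneg{Q}$, infine $\mruleup{i}$, con lo stesso trattamento triviale delle regole auto-duali $\mrule{s}$ e $\mrule{m}$. La tua esposizione \`e solo pi\`u verbosa nel commentare i casi degeneri con le unit\`a e l'uniformit\`a della costruzione, ma l'argomento \`e identico.
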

\begin{proof}
Le regole $\mrule{s}$ e $\mrule{m}$ sono auto-duali, e pertanto banalmente derivabili. Un'istanza di una regola $\vlinf{}{\mruleup{\rho}}{\mctx{C}{Q}}{\mctx{C}{P}}$ pu\`o essere sostituita da:
$$
	\vlderivation{
		\vlin{}{\mruleup{i}}{\mctx{C}{Q}}{
			\vlin{}{\mruledn{\rho}}{\mctx*{C}\mpar{\mand{P, \mneg{P}}, Q}}{
				\vlin{}{\mvlrule{s}}{\mctx*{C}\mpar{\mand{P, \mneg{Q}}, Q}}{
					\vlin{}{\mruledn{i}}{\mctx*{C}{\mand{P, \mpar{\mneg{Q}, Q}}}}{\vlhy{\mctx{C}{P}}}
				}
			}
		}
	}
$$
e lo stesso vale per le regole $\mruledn{\rho}$.
\end{proof}

Prima di proseguire con la cut elimination, occorre stabilire una semplice proposizione, valida per la maggior parte dei sistemi espressi col calcolo delle strutture.

\begin{prop}\label{prop:sks_ctx_ins}
Per ogni struttura $P,Q$ e contesto $\mctx*{C}$, esiste una derivazione $\vlderd{}{\{\mrule{s}\}}{\mpar{\mctx{C}{P}, Q}}{\mctx*{C}\mpar{P,Q}}$.
\end{prop}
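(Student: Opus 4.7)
The plan is to proceed by structural induction on the context $\mctx*{C}$, using the equivalence (associativity, commutativity, units) freely and deploying $\mrule{s}$ exactly once per conjunctive layer of $\mctx*{C}$.

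\textbf{Base case.} If $\mctx*{C} = \emptyctx$, then $\mctx{C}{P} = P$ and $\mctx*{C}\mpar{P,Q} = \mpar{P,Q}$, so the empty derivation (a single formula) does the job.

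\textbf{Inductive step.} A context of order $1$ has, up to commutativity, one of the two forms $\mpar{\mctx*{C'}, R}$ or $\mand{\mctx*{C'}, R}$, where $\mctx*{C'}$ is again a context of order $1$ and $R$ is a formula. By induction there is a derivation
$\vlderd{\Phi'}{\{\mrule{s}\}}{\mpar{\mctx{C'}{P}, Q}}{\mctx*{C'}\mpar{P,Q}}$, which, by contextual closure, can be embedded into any outer context without leaving $\{\mrule{s}\}$.

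In the disjunctive case $\mctx*{C} = \mpar{\mctx*{C'}, R}$, I embed $\Phi'$ in the context $\mpar{\emptyctx, R}$, obtaining a derivation from $\mpar{\mctx*{C'}\mpar{P,Q}, R}$ to $\mpar{\mpar{\mctx{C'}{P}, Q}, R}$; by associativity and commutativity of $\mpar{}$ the latter is equal to $\mpar{\mpar{\mctx{C'}{P}, R}, Q} = \mpar{\mctx{C}{P}, Q}$, as required. In the conjunctive case $\mctx*{C} = \mand{\mctx*{C'}, R}$, I first embed $\Phi'$ in $\mand{\emptyctx, R}$ to go from $\mand{\mctx*{C'}\mpar{P,Q}, R}$ down to $\mand{\mpar{\mctx{C'}{P}, Q}, R}$, and then apply a single switch
$$
\vlinf{}{\mvlrule{s}}{\mpar{\mand{\mctx{C'}{P}, R}, Q}}{\mand{\mpar{\mctx{C'}{P}, Q}, R}}
$$
(using commutativity of $\mand{}$ to match the shape of $\mrule{s}$), which yields $\mpar{\mctx{C}{P}, Q}$.

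\textbf{Main obstacle.} There is no deep difficulty here: the whole content of the lemma is that the switch rule is strong enough to ``extrude'' a disjunct $Q$ past every conjunctive layer of a context. The only real care needed is bookkeeping, making sure that each use of associativity/commutativity is justified by the equivalence of Figure~\ref{fig:skseq} and that each extrusion step is an honest instance of $\mrule{s}$ rather than a hidden appeal to a co-rule. A clean induction on the syntactic build-up of $\mctx*{C}$, as above, keeps exactly one switch per $\mand{}$-node on the spine from the hole to the root, and none at the $\mpar{}$-nodes, so the resulting derivation is entirely within $\{\mrule{s}\}$.
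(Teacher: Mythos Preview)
Your argument is correct and follows essentially the same approach as the paper: structural induction on $\mctx*{C}$, with the empty context trivial, the $\mpar{}$ layer handled by embedding the inductive derivation and rearranging via associativity/commutativity, and the $\mand{}$ layer handled by embedding followed by a single application of $\mrule{s}$. The only cosmetic difference is that the paper writes the context layers as $\mpar{R,\mctx*{C'}}$ and $\mand{R,\mctx*{C'}}$ rather than invoking commutativity upfront, but the content is identical.
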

\begin{proof}
Per induzione sulla dimensione del contesto $\mctx*{C}$.
\begin{enumerate}
	\item Il caso base \`e $\mctx{C}{\emptyctx} = \emptyctx$, da cui si deriva che esiste una derivazione (vuota) per $\mpar{P,Q}$.
	\item $\mctx{C}{\emptyctx} = \mpar{R, \mctx{C'}{\emptyctx}}$. Allora, per ipotesi induttiva, esiste una derivazione: 
	$$
		\vlderd{\Pi}{\{\mrule{s}\}}{\mpar{\mctx{C'}{P}, Q}}{\mctx*{C'}\mpar{P,Q}}
	$$
	che pu\`o essere usata per costruire:
	$$
		\vlderd{\mpar{R, \Pi}}{\{\mrule{s}\}}{\mpar{R, \mctx{C'}{P}, Q}}{\mpar{R, \mctx*{C'}\mpar{P,Q}}}
	$$
	e $\mpar{R, \mctx{C'}{P}, Q}$ \`e proprio uguale a $\mpar{\mctx{C}{P},Q}$.
	\item $\mctx{C}{\emptyctx} = \mand{R, \mctx{C'}{\emptyctx}}$. Qui l'ipotesi induttiva ci d\`a:
	$$
		\vlderd{\Pi}{\{\mrule{s}\}}{\mpar{\mctx{C'}{P}, Q}}{\mctx*{C'}\mpar{P,Q}}
	$$
	che pu\`o essere usata per costruire:
	$$
		\vlderivation{
			\vlin{}{\mrule{s}}{\mpar{\mand{R, \mctx{C'}{P}}, Q} = \mpar{\mctx{C}{P},Q}}{
				\vldd{\mand{R, \Pi}}{\{\mrule{s}\}}{\mand{R, \mpar{\mctx{C'}{P}, Q}}}{
					\vlhy{\mand{R, \mctx*{C'}\mpar{P,Q}}}
				}
			}
		}
	$$
\end{enumerate}
\end{proof}

\begin{dfn}[Taglio atomico di superficie]
Un'istanza della regola di taglio atomica $\mruleup{ai}$ \`e chiamata \emph{shallow} (o \emph{taglio atomico di superficie}) quando \`e della forma:
$$
	\vlinf{}{\mruleup{ai}}{S}{\mpar{S, \mand{a, \mneg{a}}}}
$$
\end{dfn}

\begin{lem}\label{lem:sks_shallow_cut}
La regola di taglio atomica $\mruleup{ai}$ \`e derivabile usando \emph{taglio atomico di superficie} e \emph{switch}.
\end{lem}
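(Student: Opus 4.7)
The plan is to show that any instance of $\mruleup{ai}$, namely $\vlinf{}{\mruleup{ai}}{\mctx{C}{\mf}}{\mctx*{C}\mand{a, \mneg{a}}}$, can be simulated by first extracting the cut-pair $\mand{a,\mneg{a}}$ from the ambient context $\mctx*{C}$ up to the surface using only switch, and then discharging it with a single shallow atomic cut. The fundamental obstacle is precisely the depth of the cut-pair: the shallow form of $\mruleup{ai}$ applies only when $\mand{a,\mneg{a}}$ sits par-connected to the rest of the structure at the top level, so the entire game is to route it there without ever leaving $\{\mrule{s}\} \cup \{\text{shallow }\mruleup{ai}\}$.

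First I would exploit the unit equivalence $\mpar{\mf,P}=P$ to rewrite the premise as $\mctx*{C}\mpar{\mf,\mand{a,\mneg{a}}}$, so that the cut-pair becomes \emph{syntactically} par-attached to a dummy $\mf$ occurring exactly at the saturation point of $\mctx*{C}$. This purely equational step is what opens the door to the proposition just proved.

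Next I would invoke Proposition~\ref{prop:sks_ctx_ins} with $P := \mf$ and $Q := \mand{a,\mneg{a}}$ to obtain a $\{\mrule{s}\}$-derivation from $\mctx*{C}\mpar{\mf,\mand{a,\mneg{a}}}$ down to $\mpar{\mctx{C}{\mf},\mand{a,\mneg{a}}}$. This is the decisive move: $\mrule{s}$ is strong enough to let a par-attached summand migrate out of an arbitrary context, leaving the $\mf$ placeholder behind at the original position. Finally, one application of shallow atomic cut with $S := \mctx{C}{\mf}$ eliminates the exposed $\mand{a,\mneg{a}}$ and yields exactly $\mctx{C}{\mf}$, as required.

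The expected main difficulty, namely climbing out of a context of arbitrary depth without ever invoking a non-shallow cut, is entirely absorbed by Proposition~\ref{prop:sks_ctx_ins}, whose short inductive proof on the size of $\mctx*{C}$ we can use as a black box. Once that tool is available, the derivation fits in three lines: a unit rewriting, an instance of the proposition, and a single shallow $\mruleup{ai}$; no new combinatorial effort is needed.
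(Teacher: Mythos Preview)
Your proposal is correct and follows essentially the same approach as the paper: rewrite the premise via the unit law as $\mctx*{C}\mpar{\mf,\mand{a,\mneg{a}}}$, apply Proposition~\ref{prop:sks_ctx_ins} with $P=\mf$ and $Q=\mand{a,\mneg{a}}$ to extract the cut-pair to the surface, and finish with one shallow $\mruleup{ai}$ taking $S=\mctx{C}{\mf}$. The paper's proof is line-for-line the same.
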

\begin{proof}
Ogni formula $\mctx*{C}\mand{a, \mneg{a}}$ \`e equivalente a $\mctx*{C}\mpar{\mf, \mand{a, \mneg{a}}}$. Per la Proposizione~\ref{prop:sks_ctx_ins}, esiste una derivazione:
$$
	\vlderd{}{\{\mrule{s}\}}{\mpar{\mctx{C}{\mf}, \mand{a, \mneg{a}}}}{\mctx*{C}\mpar{\mf, \mand{a, \mneg{a}}}}
$$
Pertanto basta porre $S = \mctx{C}{\mf}$ per effettuare la trasformazione:
$$
	\vlinf{}{\mruleup{ai}}{\mctx{C}{\mf}}{\mctx*{C}\mand{a, \mneg{a}}}
	\qquad\rightsquigarrow\qquad
	\vlderivation{
		\vlin{}{\mruleup{ai}}{S}{
			\vldd{}{\{\mrule{s}\}}{\mpar{S, \mand{a, \mneg{a}}}}{
				\vlhy{\mctx*{C}\mand{a, \mneg{a}}}
			}
		}
	}
$$
\end{proof}

\begin{lem}\label{lem:ks_alltrue}
Ogni dimostrazione $\vlproofd{}{\mathsf{KS}}{\mctx{C}{a}}$ pu\`o essere trasformata in $\vlproofd{}{\mathsf{KS}}{\mctx{C}{\mt}}$.
\end{lem}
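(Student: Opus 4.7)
Il piano è di procedere per induzione sull'altezza della dimostrazione $\Pi$, distinguendo i casi in base all'ultima regola applicata e alla posizione dell'occorrenza distinta di $a$ nella conclusione. L'intuizione è semplice: se la posizione distinta appartiene al ``contesto ambientale'' di un'applicazione di regola, si può applicare direttamente l'ipotesi induttiva alla premessa (che contiene la stessa occorrenza) e poi riapplicare la medesima regola; se invece l'occorrenza distinta è \emph{appena} introdotta dall'ultima regola, bisogna sostituire tale istanza con una derivazione equivalente che produca $\mt$ al posto di $a$.

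I casi non banali sono pochi. Per $\mruledn{ai}$, se la $a$ distinta è quella appena introdotta, la conclusione desiderata $\mctx*{D}\mpar{\mt, \mneg{a}}$ si ottiene dalla premessa $\mctx{D}{\mt} = \mctx*{D}\mpar{\mt, \mf}$ mediante una singola applicazione di $\mruledn{aw}$ nel contesto $\mctx*{D}\mpar{\mt, \emptyctx}$, che rimpiazza $\mf$ con $\mneg{a}$. Per $\mruledn{aw}$, se la $a$ è quella introdotta, la conclusione voluta $\mctx{D}{\mt}$ si ottiene dalla premessa $\mctx{D}{\mf}$ mediante un singolo passo di $\mrule{s}$, esattamente la derivazione esibita nel caso $P = \mf$ del Teorema~\ref{thm:sksg_atomic}. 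Le regole schematiche $\mrule{s}$, $\mrule{m}$, $\mrule{ax}$ e i casi di ``contesto ambientale'' sono immediati.

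Il caso veramente delicato è $\mruledn{ac}$ quando la $a$ distinta è quella appena contratta: risalendo nella premessa $\mctx*{D}\mpar{a, a}$, entrambe le occorrenze di $a$ sono ``antenate'' della posizione distinta nella conclusione. Per gestirlo si rafforza leggermente l'ipotesi induttiva, permettendo di sostituire simultaneamente un \emph{insieme} di occorrenze di $a$ nella conclusione: applicata alla premessa con entrambe le occorrenze marcate, essa produce una dimostrazione di $\mctx*{D}\mpar{\mt, \mt} = \mctx{D}{\mt}$, che coincide già con la conclusione voluta, rendendo superflua l'istanza di $\mruledn{ac}$.

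L'ostacolo principale è proprio questo: individuare la giusta formulazione dell'induzione. Senza il rafforzamento dell'ipotesi induttiva, la duplicazione delle occorrenze operata dalla contrazione impedirebbe di chiudere il ragionamento, perché applicare la versione ``a una sola occorrenza'' due volte consecutive richiederebbe di invocare l'IH su una dimostrazione non più strettamente più piccola dell'originale. Accettata invece la versione ``a più occorrenze marcate'', tutti i casi si chiudono applicando l'IH alla premessa ed aggiustando l'ultimo passo, e il lemma originale segue come caso particolare con una sola posizione marcata.
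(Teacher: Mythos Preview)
Your proposal is correct and follows essentially the same approach as the paper: trace the distinguished occurrence of $a$ together with all its copies produced by contraction, replace them by $\mt$, and observe that $\mrule{s}$ and $\mrule{m}$ stay intact, $\mruledn{ac}$ becomes an instance of the equivalence $\mpar{\mt,\mt}=\mt$, while $\mruledn{ai}$ and $\mruledn{aw}$ on the tracked atom are replaced by exactly the derivations you describe. Your explicit strengthening of the induction hypothesis to a \emph{set} of marked occurrences is precisely what the paper's phrase ``l'occorrenza di $a$ e le sue copie prodotte per contrazione'' is doing informally.
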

\begin{proof}
Risalendo la dimostrazione, sostituiamo nelle regole l'occorrenza di $a$ e le sue copie prodotte per contrazione, con l'unit\`a $\mt$. Le istanze delle regole $\mrule{s}$ e $\mrule{m}$ rimangono intatte, le istanze di $\mruledn{ac}$ si riducono ad applicazioni della relazione d'equivalenza $=$. Le altre applicazioni vengono sostituite dalle seguenti derivazioni:
$$
\begin{array}{ccc}
	\vlinf{}{\mruledn{aw}}{\mctx{C}{a}}{\mctx{C}{\mf}} & \rightsquigarrow & \vlinf{}{\mrule{s}}{\mctx*{C}\mpar{\mand{\mf,\mt},\mt} = \mctx{C}{\mt}}{\mctx{C}{\mf} = \mctx*{C}\mand{\mf, \mpar{\mt, \mt}}} \\
	~ & \qquad\qquad\qquad & ~ \\
	\vlinf{}{\mruledn{ai}}{\mctx*{C}\mpar{a, \mneg{a}}}{\mctx{C}{\mt}} & \rightsquigarrow & \vlinf{}{\mruledn{aw}}{\mctx*{C}\mpar{\mt, \mneg{a}}}{\mctx{C}{\mt} = \mctx*{C}\mpar{\mt, \mf}}
\end{array}
$$
\end{proof}

\begin{figure}{thbs}
$$
	\mt \quad \mrule{ax}
	\qquad\qquad
	\vlinf{}{\mruledn{ai}}{\mctx*{C}\mpar{a, \mneg{a}}}{\mctx{C}{\mt}}
	\qquad\qquad
	\vlinf{}{\mruledn{aw}}{\mctx{C}{a}}{\mctx{C}{\mf}}
	\qquad\qquad
	\vlinf{}{\mruledn{ac}}{\mctx{C}{a}}{\mctx*{C}\mpar{a, a}}
$$
~\\
$$
	\vlinf{}{\mvlrule{s}}{\mctx*{C}\mpar{\mand{P, Q}, R}}{\mctx*{C}\mand{P, \mpar{Q, R}}}
	\qquad\qquad\qquad
	\vlinf{}{\mvlrule{m}}{\mctx*{C}\mand{\mpar{P,R}, \mpar{Q,S}}}{\mctx*{C}\mpar{\mand{P,Q}, \mand{R,S}}}
$$
\caption{Regole del Sistema \textsf{KS}}
\label{fig:ks_cos}
\end{figure}

\begin{thm}
Ogni dimostrazione $\vlproofd{}{\mathsf{SKS}}{P}$ pu\`o essere trasformata in una dimostrazione $\vlproofd{}{\mathsf{KS}}{P}$.
\end{thm}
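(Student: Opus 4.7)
La strategia \`e eliminare progressivamente tutte le up-rules dalla dimostrazione in \textsf{SKS} procedendo in tre fasi: (i) ridurre tutte le up-rules diverse dal taglio a tagli atomici; (ii) ridurre i tagli atomici a tagli atomici di superficie; (iii) eliminare questi ultimi uno per volta, dall'alto al basso. Dopo ciascuna fase l'insieme delle up-rules presenti si semplifica, fino a scomparire del tutto al termine della fase (iii), lasciando una dimostrazione puramente in \textsf{KS}.

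Le prime due fasi si appoggiano ai lemmi gi\`a dimostrati. Per la \emph{fase (i)}, il Lemma~\ref{lem:sks_cut1} permette di sostituire ogni istanza di $\mruleup{aw}$ e $\mruleup{ac}$ con una derivazione che usa la sua duale, identit\`a, taglio e switch; le eventuali occorrenze non-atomiche di identit\`a e taglio introdotte da queste sostituzioni si riducono a loro volta alle forme atomiche tramite il Teorema~\ref{thm:sksg_atomic}. Per la \emph{fase (ii)}, il Lemma~\ref{lem:sks_shallow_cut} rimpiazza ogni $\mruleup{ai}$ rimasto con un taglio atomico di superficie preceduto da opportune applicazioni di switch.

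La \emph{fase (iii)} \`e il cuore della prova. Fissato il taglio atomico di superficie pi\`u alto:
$$
	\vlinf{}{\mruleup{ai}}{S}{\mpar{S, \mand{a, \mneg{a}}}}
$$
la sotto-dimostrazione $\Pi$ sopra di esso \`e, per scelta, interamente in \textsf{KS}. Previa rinomina dell'atomo del taglio ad uno fresco che non compaia in $S$, sostituiamo uniformemente in $\Pi$ ogni occorrenza di $a$ con $\mf$ e ogni occorrenza di $\mneg{a}$ con $\mt$: la conclusione di $\Pi$ diventa $\mpar{S, \mand{\mf, \mt}} = \mpar{S, \mf} = S$, eliminando il taglio in maniera del tutto analoga a quanto farebbe la cut elimination ``classica'', ma a livello di atomi e sulle formule direttamente. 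Con un'analisi per casi sulle regole di \textsf{KS}, nello stile del Lemma~\ref{lem:ks_alltrue}, si verifica che ciascuna istanza di regola in $\Pi$ rimane valida dopo la sostituzione: i casi $\mruledn{ai}$ sulla coppia $a, \mneg{a}$, $\mruledn{aw}$ su $a$, e $\mruledn{ac}$ su $a$ o $\mneg{a}$ collassano in equivalenze banali, mentre il caso $\mruledn{aw}$ su $\mneg{a}$ va rimpiazzato da una derivazione di indebolimento generale da $\mctx{C}{\mf}$ a $\mctx{C}{\mt}$ (disponibile in \textsf{KS} per il Teorema~\ref{thm:sksg_atomic}); le regole $\mrule{s}$ e $\mrule{m}$, puramente strutturali, restano invariate. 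L'ostacolo principale \`e la formulazione precisa di questo lemma di sostituzione: in particolare il tracciamento coerente della coppia $(a, \mneg{a})$ attraverso le contrazioni (entrambe le polarit\`a dell'atomo del taglio vanno inseguite simultaneamente risalendo $\Pi$, come nel Lemma~\ref{lem:ks_alltrue}) e la garanzia che la freschezza di $a$ venga preservata lungo tutto il procedimento. Ripetendo la costruzione per ciascun taglio atomico di superficie rimanente, sempre scegliendo il pi\`u alto, si perviene infine alla dimostrazione cercata in \textsf{KS}.
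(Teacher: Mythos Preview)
La tua fase (iii) contiene una lacuna reale. Il passo ``previa rinomina dell'atomo del taglio ad uno fresco che non compaia in $S$'' non \`e in generale possibile: l'occorrenza di $a$ nel taglio pu\`o essere stata introdotta da un'istanza di $\mruledn{ai}$ il cui partner $\mneg{a}$ finisce in $S$ e non nel taglio (e simmetricamente per $\mneg{a}$). Rinominare quell'istanza di $\mruledn{ai}$ ad un atomo fresco cambierebbe anche $S$. Un esempio concreto: $\Pi$ dimostra $\mpar{\mneg{a}, \mand{a,\mneg{a}}}$ partendo da $\mt = \mand{\mt,\mt}$, introducendo $\mpar{a,\mneg{a}}$ a sinistra via $\mruledn{ai}$, poi $\mneg{a}$ a destra via $\mruledn{aw}$, e infine usando $\mrule{s}$; qui l'$a$ del taglio \`e legato via $\mruledn{ai}$ all'$\mneg{a}$ in $S$.

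Senza la rinomina, la sostituzione tracciata $a \mapsto \mf$ fallisce proprio su questo caso: l'istanza $\mctx{C}{\mt} \to \mctx*{C}\mpar{a,\mneg{a}}$ diventa $\mctx{C}{\mt} \to \mctx*{C}\mpar{\mf,\mneg{a}} = \mctx{C}{\mneg{a}}$, passo che non \`e derivabile in \textsf{KS} (servirebbe un co-indebolimento). La tua analisi per casi considera solo ``$\mruledn{ai}$ sulla coppia $a,\mneg{a}$'', cio\`e quando entrambi i partner sono tracciati; manca il caso misto. La prova nel testo aggira il problema in due mosse: applica il Lemma~\ref{lem:ks_alltrue} (che sostituisce con $\mt$, non con $\mf$, rendendo il caso $\mruledn{ai}$ gestibile via $\mruledn{aw}$) separatamente ai due atomi del taglio, ottenendo $\Pi_1$ di $\mpar{R,a}$ e $\Pi_2$ di $\mpar{R,\mneg{a}}$; poi in $\Pi_1$ sostituisce l'$a$ residuo con l'intera formula $R$ (usando le versioni generali di $\mruledn{c}$ e $\mruledn{w}$) e rimpiazza le istanze di $\mruledn{ai}$ problematiche innestando l'intera dimostrazione $\Pi_2$. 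Questo innesto di prova \`e l'idea che ti manca.
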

\begin{proof}
Grazie al Lemma~\ref{lem:sks_cut1}, sappiamo che l'unica regola da eliminare \`e il taglio $\mruleup{ai}$. Grazie al Lemma~\ref{lem:sks_shallow_cut} possiamo sostituire tutti i tagli con tagli di superficie. Partendo dall'alto, selezioniamo la prima istanza della regola di taglio:
$$
	\vlderivation{
		\vldd{\Phi}{\mathsf{KS} \cup \{\mruleup{ai}\}}{P}{
			\vlin{}{\mruleup{ai}}{R}{
				\vlpd{\Pi}{\mathsf{KS}}{\mpar{R, \mand{a, \mneg{a}}}}
			}
		}
	}
$$
Applicando due volte il Lemma~\ref{lem:ks_alltrue} a $\Pi$, otteniamo:
$$
	\vlproofd{\Pi_1}{\mathsf{KS}}{\mpar{R, a}}
	\qquad,\qquad
	\vlproofd{\Pi_2}{\mathsf{KS}}{\mpar{R, \mneg{a}}}
$$

Partendo dalla conclusione e risalendo la dimostrazione $\Pi_1$, sostituiamo l'occorrenza di $a$ e le sue copie prodotte per contrazione, con la formula $R$. Le istanze delle regole $\mrule{m}$ e $\mrule{s}$ rimangono intatte, mentre le istanze di $\mruledn{ac}$ e $\mruledn{aw}$ vengono sostituite dalle loro versioni generalizzate:
$$
\begin{array}{ccc}
	\vlinf{}{\mruledn{ac}}{\mctx{C}{a}}{\mctx*{C}\mpar{a,a}} & \rightsquigarrow & \vlinf{}{\mruledn{c}}{\mctx{C}{R}}{\mctx*{C}\mpar{R,R}} \\
	~ & \qquad\qquad\qquad & ~ \\
	\vlinf{}{\mruledn{aw}}{\mctx{C}{a}}{\mctx{C}{\mf}} & \rightsquigarrow & \vlinf{}{\mruledn{w}}{\mctx{C}{R}}{\mctx{C}{\mf}}
\end{array}
$$
Le istanze di $\mruledn{ai}$ sono sostituite da $\mctx{C}{\Pi_2}$:
$$
	\vlinf{}{\mruledn{ai}}{\mctx*{C}\mpar{a, \mneg{a}}}{\mctx{C}{\mt}}
	\quad\quad\rightsquigarrow\quad\quad\quad
	\vlderd{\mctx{C}{\Pi_2}}{\mathsf{KS}}{\mctx*{C}\mpar{R, \mneg{a}}}{\mctx{C}{\mt}}
$$
Il risultato di questa sostitituzione di $\Pi_2$ dentro $\Pi_1$ \`e una dimostrazione $\Pi_3$, grazie alla quale possiamo costruire:
$$
	\vlderivation{
		\vldd{\Phi}{\mathsf{KS} \cup \{\mruleup{ai}\}}{P}{
			\vlin{}{\mruledn{c}}{R}{
				\vlpd{\Pi_3}{\mathsf{KS}}{\mpar{R, R}}
			}
		}
	}
$$

Ora basta procedere induttivamente verso il basso per rimuovere le rimanenti istanze di $\mruleup{ai}$. Alla fine di questo procedimento, le regole generalizzate possono essere rimosse usando la procedura descritta nella dimostrazione del Teorema~\ref{thm:sksg_atomic}.
\end{proof}

\chapter{Logica lineare}

La logica lineare \`e un'estensione della logica classica ideata da Jean-Yves Girard verso la fine degli anni '80~(\cite{Gir87, GirLafTay89, Gir95}). La caratteristica peculiare della logica lineare \`e che tratta l'implicazione come \emph{fenomeno causale} anzich\'e (com'\`e pratica comune in matematica) come \emph{concetto stabile}:
$$
	\mbox{se }A\mbox{ e }A{\Rightarrow}B\mbox{ allora }B\mbox{, \emph{ma $A$ \`e ancora valida.}}
$$
Un'implicazione causale non pu\`o essere reiterata, poich\'e le condizioni iniziali sono modificate dopo il suo utilizzo; questo processo di modifica delle premesse (condizioni) \`e noto in fisica come \emph{reazione}\footnote{Quello di reazione \`e un concetto base anche della teoria dei modelli concorrenti, vedi ad esempio~\cite{Mil92, SanWal01}.}. Per esempio, se $A$ \`e ``spendere una moneta nel distributore automatico di bevande (o DAB)'' e $B$ \`e ``prendere un caff\`e'', la moneta viene persa nel processo, che quindi non si pu\`o ripetere una seconda volta. Esistono tuttavia casi, sia in matematica che nella vita reale, in cui le reazioni non esistono o sono trascurabili: ad esempio un lemma che resta sempre vero, o un tecnico che possiede la chiave del DAB e pu\`o recuperare ogni volta la sua moneta. Questi sono i casi che Girard chiama \emph{situazioni}, cio\`e condizioni durature e immutevoli (o {verit\`a stabili}), e sono comunque gestibili in logica lineare tramite speciali connettivi (gli \emph{esponenziali}, ``!'' e ``?''). Gli esponenziali esprimono la reiterabilit\`a di un'azione, ossia l'assenza di reazioni; tipicamente $!A$ significa ``spendere quante monete si vogliono''. Usiamo il simbolo $\multimap$ per denotare l'implicazione causale (o \emph{implicazione lineare}); vale la seguente equazione:
$$
	A \Rightarrow B \quad=\quad (!A) \multimap B
$$
cio\`e $B$ \`e causato da un certo numero d'iterazioni di $A$.

Una \emph{azione di tipo $A$} consister\`a nel tirare fuori una certa moneta dalla tasca di qualcuno (ci potrebbero essere diverse azioni di questo tipo, poich\'e potremmo disporre di diverse monete). Analogamente saranno disponibili un certo numero di caff\`e nel distributore automatico, perci\`o ci saranno diverse \emph{azioni di tipo $B$}.

La logica lineare apre nuovi interessanti scenari sulla visione dei connettivi classici: ad esempio esistono \emph{due} congiunzioni ($\otimes$ o ``per'', inteso in senso di moltiplicazione, ed $\with$ o ``con'') corrispondenti a due usi radicalmente differenti della parola ``e''. Ambedue le congiunzioni esprimono la disponibilit\`a di due azioni; ma nel caso di $\otimes$, saranno fatte tutt'e due, mentre nel caso di $\with$, solo una delle due sar\`a eseguita (ma noi potremo decidere quale). Ad esempio, siano $A$, $B$, $C$:
\begin{center}
\begin{tabular}{lll}
	$A$ & : & spendere una moneta nel DAB \\
	$B$ & : & prendere un caff\`e \\
	$C$ & : & prendere un t\`e 
\end{tabular}
\end{center}
Data un'azione di tipo $A \multimap B$ e una di tipo $A \multimap C$, non sar\`a possibile formare un'azione di tipo $A \multimap B \otimes C$, poich\'e per una moneta non si potr\`a mai avere ci\`o che ne costa due (sar\`a invece possibile formare un'azione di tipo $A \otimes A \multimap B \otimes C$, cio\`e avere due bevande in cambio di due monete). Comunque potremmo sempre produrre un'azione di tipo $A \multimap B \with C$ come sovrapposizione delle due. Per eseguire quest'ultima azione dovremmo prima scegliere tra le possibili azioni che vogliamo produrre e in seguito effettuare quella scelta. Questo \`e analogo a quanto accade col costrutto~$\mathsf{if}~\dots~\mathsf{then}~\dots~\mathsf{else}~\dots$~ben noto in informatica: infatti, sia la parte~$\mathsf{then}$~\dots~che quella~$\mathsf{else}$~\dots~sono disponibili, ma solo una di esse verr\`a eseguita. Per quanto ``$\with$'' abbia delle ovvie caratteristiche disgiuntive, sarebbe tecnicamente errato vederlo come disgiunzione: infatti in logica lineare sia $A \with B \multimap A$, sia $A \with B \multimap B$ sono dimostrabili.

In logica lineare, in maniera del tutto speculare, abbiamo due disgiunzioni, che sono $\oplus$ o ``pi\`u'', e $\parr$ o ``par'' (mnemonico per \emph{parallelo}). $\oplus$ \`e il duale di ``$\with$'' ed esprime la presenza di due opzioni: in questo caso per\`o, non sar\`a possibile scegliere quale delle due eseguire. La differenza tra $\with$ e $\oplus$ \`e la stessa che c'\`e in informatica tra nondeterminismo esterno ed interno. Infine $\parr$ \`e il duale di $\otimes$.

Il pi\`u importante connettivo lineare \`e la \emph{negazione lineare} $\;\mneg{\cdot}\;$ o ``nil''. Poich\'e l'implicazione lineare si pu\`o sempre riscrivere come $\mneg{A} \parr B$, ``nil'' \`e l'unica operazione negativa della logica lineare. La negazione lineare si comporta come la trasposizione in algebra lineare, esprime cio\`e \emph{dualit\`a}, ovverosia un cambio di prospettiva:
\begin{center}
	\emph{azione di tipo} $A$ = \emph{reazione di tipo} $\mneg{A}$
\end{center}
La propriet\`a principale di ``nil'' \`e che, come accade in logica classica, $\mneg{\mneg{A}}$ pu\`o essere identificato con $A$ stesso. A differenza della logica classica per\`o, la logica lineare gode di una \emph{semplice interpretazione costruttiva}. Il carattere involutivo di ``nil'' assicura il comportamento \emph{alla De Morgan} per tutti i connettivi ed i quantificatori, ad esempio:
$$
	\exists x.A \quad=\quad \mneg{(\forall x.\mneg{A})}
$$
che pu\`o sembrare insolito ad un primo sguardo, specialmente se consideriamo che l'esistenziale in logica lineare \`e un operatore \emph{effettivo}: tipicamente si dimostra $\exists x.A$ dimostrando $A[t/x]$ per un certo termine $t$. Questo comportamento di ``nil'' deriva dal fatto che $\mneg{A}$ nega (cio\`e \emph{reagisce con}) una singola azione di tipo $A$, mentre la negazione classica nega solo alcune (non specificate) iterazioni di $A$, che tipicamente porta ad una disgiunzione di lunghezza non specificata. La negazione lineare \`e da un lato pi\`u primitiva, e dall'altro pi\`u forte (e anche pi\`u difficile da trattare) di quella classica.

Grazie alla presenza degli esponenziali, la logica lineare \`e espressiva quanto quella classica o quella intuizionista. Di fatto \`e pi\`u espressiva. Qui bisogna essere cauti: \`e lo stesso problema della logica intuizionista, che \`e anch'essa ``pi\`u espressiva'' di quella classica. Tecnicamente il potere espressivo \`e equivalente: ma i connettivi della logica lineare possono esprimere in maniera primitiva cose che in logica classica possono essere espresse solo tramite complesse traduzioni \emph{ad hoc}. L'introduzione di nuovi connettivi \`e quindi la chiave di volta verso formalizzazioni pi\`u semplici ed efficaci; la restrizione a vari frammenti apre le frontiere a linguaggi con specifico potere espressivo, ad esempio con una complessit\`a computazionale nota~(\cite{Gir98, Laf02, DalBai06}).

Un notevole problema aperto \`e quello di trovare una versione convincente di logica lineare non-commutativa. Anche se molti convengono sul fatto che la non-commutativit\`a ha ragione d'esser considerata a questo livello (esistono svariati esempi di problemi intrinsecamente non-commutativi, si pensi all'operatore di prefisso del $\pi$-calcolo), semantiche non trivali di non-commutativit\`a non sono note. Unite all'introduzione di una semantica naturale, le metodologie per raggiungere un sistema non-commutativo potrebbero comportare un effettivo guadagno di potere espressivo, in relazione al caso commutativo.

\section{Calcolo dei sequenti lineari}

Definiamo la sintassi della logica lineare classica (o \textsf{CLL}, acronimo di Classical Linear Logic):

\begin{dfn}[Linguaggio \textsf{CLL}] Sia $\mathcal{P}$ un insieme infinito enumerabile di \emph{simboli proposizionali}. L'insieme degli \emph{atomi} $\mathcal{A}$ \`e cos\`i definito:
$$
	\mathcal{A} = \{ p, \mneg{p} \:|\: p \in \mathcal{P} \}
$$
dove $\mneg{\cdot}$ \`e una \emph{funzione di negazione primitiva sui simboli proposizionali}. La negazione si estende facilmente a tutti gli atomi definendo $\mneg{\mneg{p}} = p$ per ogni simbolo proposizionale negato $\mneg{p}$.

Siano $\mone, \bot, \mzero, \top \not\in \mathcal{A}$ simboli costanti o \emph{unit\`a}, e sia $a \in \mathcal{A}$. Il \emph{linguaggio \textsf{CLL} delle formule lineari classiche} \`e cos\`i definito:
$$
\begin{array}{llll}
	T & ::= & \mone \:|\: \bot \:|\: \mzero \:|\: \top \:|\: a & \quad\mbox{(termini)} \\
	P & ::= & T \:|\: P \oplus P \:|\: P \with P \:|\: P \otimes P \:|\: P \parr P \:|\: \oc P \:|\: \wn P & \quad\mbox{(formule)} 
\end{array}
$$
\end{dfn}
I connettivi $\otimes$, $\parr$, $\multimap$, insieme agli elementi neutri $\mone$ (relativamente a $\otimes$) e $\bot$ (relativamente a $\parr$) sono chiamati \emph{moltiplicativi}; i connettivi $\with$, $\oplus$, insieme agli elementi neutri $\top$ (relativamente a $\with$) e $\mzero$ (relativamente a $\oplus$) sono chiamati \emph{additivi}; i connettivi $\oc$ e $\wn$ sono chiamati \emph{esponenziali}. Questa notazione \`e stata scelta perch\'e facile da memorizzare: infatti essa suggerisce che $\otimes$ sia moltiplicativo e congiuntivo, con elemento neutro $\mone$, mentre $\oplus$ \`e additivo e disgiuntivo, con elemento neutro $\mzero$; inoltre, anche la distributivit\`a di $\otimes$ su $\oplus$ \`e suggerita dalla notazione.

La negazione si estende alle formule, come mostrato in Figura~\ref{fig:eq_lin}; inoltre l'implicazione lineare \`e definita con l'ausilio di negazione e connettivo ``par''.

\begin{figure}
$$
\begin{array}{rclcrcl}
	\mneg{\mone} &=& \bot & \qquad\qquad\qquad\qquad & \mneg{\bot} &=& \mone \\
	\mneg{\top} &=& \mzero & & \mneg{\mzero} &=& \top \\
	\mneg{P \otimes Q} &=& \mneg{P} \parr \mneg{Q} & & \mneg{P \parr Q} &=& \mneg{P} \otimes \mneg{Q} \\
	\mneg{P \with Q} &=& \mneg{P} \oplus \mneg{Q} & & \mneg{P \oplus Q} &=& \mneg{P} \with \mneg{Q} \\
	\mneg{\oc P} &=& \wn \mneg{P} & & \mneg{\wn P} &=& \oc \mneg{P}
\end{array}
$$
~\\
\centering$P \multimap Q = \mneg{P} \parr Q$
\caption{Definizione di negazione e implicazione lineari}
\label{fig:eq_lin}
\end{figure}

Procediamo mostrando in Figura~\ref{fig:ded_lin} un sistema deduttivo per la logica lineare reminiscente il calcolo dei sequenti di~\cite{Gen35}. Come visto in precedenza, un sequente \`e un'espressione $\Gamma \vdash \Delta$, in cui $\Gamma = P_1, \dots, P_n$ e $\Delta = Q_1, \dots, Q_m$ sono sequenze finite di formule. Il significato inteso di $\Gamma \vdash \Delta$ \`e:
$$
	P_1\mbox{ e }\dots\mbox{ e }P_n \quad\mbox{ implica }\quad Q_1\mbox{ oppure }\dots\mbox{ oppure }Q_m
$$
dove il senso di ``e'', ``implica'' e ``oppure'' devono essere specificati formalmente. I sequenti lineari sono ad un lato, cio\`e della forma $\vdash \Gamma$; sequenti nella forma generale $\Gamma \vdash \Delta$ si possono ``mimare'' usando $\vdash \mneg{\Gamma}, \Delta$.

\begin{figure}[t!]
\textbf{Identit\`a / taglio} \\\\
\begin{minipage}[m]{.5\textwidth}
	\centering$\vdash P, \mneg{P} \quad \mrule{id}$
\end{minipage}
\begin{minipage}[m]{.5\textwidth}
	\AxiomC{$\vdash \Gamma, P$}
	\AxiomC{$\vdash \mneg{P}, \Delta$}
	\RightLabel{$\mrule{cut}$}
	\BinaryInfC{$\vdash \Gamma, \Delta$}
	\centering\DisplayProof{} 
\end{minipage}\\\\
\textbf{Regole strutturali}
\begin{center}
	\AxiomC{$\vdash \Gamma, P, Q, \Delta$}
	\RightLabel{$\mrule{perm}$}
	\UnaryInfC{$\vdash \Gamma, Q, P, \Delta$}
	\DisplayProof{}
\end{center}
\textbf{Regole logiche} \\\\
\begin{minipage}[m]{.5\textwidth}
	\centering$\vdash \mone \quad \mrule{one}$
\end{minipage}
\begin{minipage}[m]{.5\textwidth}
	\AxiomC{$\vdash \Gamma$}
	\RightLabel{$\mrule{false}$}
	\UnaryInfC{$\vdash \Gamma, \bot$} 
	\centering\DisplayProof{}
\end{minipage}\\\\
\begin{minipage}[m]{.5\textwidth}
	\AxiomC{$\vdash \Gamma, P$}
	\AxiomC{$\vdash \Delta, Q$}
	\RightLabel{$\mrule{\otimes}$}
	\BinaryInfC{$\vdash \Gamma, \Delta, P \otimes Q$}
	\centering\DisplayProof{} 
\end{minipage}
\begin{minipage}[m]{.5\textwidth}
 	\AxiomC{$\vdash \Gamma, P, Q$}
	\RightLabel{$\mrule{\parr}$}
	\UnaryInfC{$\vdash \Gamma, P \parr Q$} 
	\centering\DisplayProof{}
\end{minipage}\\\\
\begin{minipage}[m]{.5\textwidth}
	\centering$\vdash \Gamma, \top \quad \mrule{true}$ 
\end{minipage}
\begin{minipage}[m]{.5\textwidth}
	\centering\textit{(nessuna regola per $\mzero$)}
\end{minipage}\\\\
\begin{minipage}[m]{.5\textwidth}
	\AxiomC{$\vdash \Gamma, P$}
	\AxiomC{$\vdash \Gamma, Q$}
	\RightLabel{$\mrule{\with}$}
	\BinaryInfC{$\vdash \Gamma, P \with Q$}
	\centering\DisplayProof{}
\end{minipage}
\begin{minipage}[m]{.5\textwidth}
	\begin{center}
		\AxiomC{$\vdash \Gamma, P$}
		\RightLabel{$\mrule[l]{\oplus}$}
		\UnaryInfC{$\vdash \Gamma, P \oplus Q$}
		\DisplayProof{} \\\vspace{.5em}
		\AxiomC{$\vdash \Gamma, Q$}
		\RightLabel{$\mrule[r]{\oplus}$}
		\UnaryInfC{$\vdash \Gamma, P \oplus Q$}
		\DisplayProof{}
	\end{center}
\end{minipage}\\~\vspace{.8em}~\\
\begin{minipage}[m]{.5\textwidth}
	\AxiomC{$\vdash \wn \Gamma, P$}
	\RightLabel{$\mrule{\oc}$}
	\UnaryInfC{$\vdash \wn \Gamma, \oc P$}
	\centering\DisplayProof{}
\end{minipage}
\begin{minipage}[m]{.5\textwidth}
	\AxiomC{$\vdash \Gamma$}
	\RightLabel{$\mrule{weak}$}
	\UnaryInfC{$\vdash \Gamma, \wn P$}
	\centering\DisplayProof{}
\end{minipage}\\~\vspace{.3em}~\\
\begin{minipage}[m]{.5\textwidth}
	\AxiomC{$\vdash \Gamma, P$}
	\RightLabel{$\mrule{drlc}$}
	\UnaryInfC{$\vdash \Gamma, \wn P$}
	\centering\DisplayProof{}
\end{minipage}
\begin{minipage}[m]{.5\textwidth}
	\AxiomC{$\vdash \Gamma, \wn P, \wn P$}
	\RightLabel{$\mrule{cntr}$}
	\UnaryInfC{$\vdash \Gamma, \wn P$}
	\centering\DisplayProof{}
\end{minipage}
\caption{Sistema deduttivo per \textsf{CLL}}
\label{fig:ded_lin}
\end{figure}

Nel calcolo dei sequenti lineari abbiamo rimosso le regole strutturali di indebolimento (\`e sempre possibile aggiungere una formula nella premessa o nella conclusione del sequente) e contrazione (la molteplicit\`a di una formula non conta) in virt\`u delle critiche mosse dalla scuola lineare. La possibilit\`a di utilizzare queste operazioni \`e tuttavia ripristinata grazie all'introduzione degli operatori $\oc$ e $\wn$.

Identit\`a e taglio restano invariate rispetto al Sistema \textsf{LKp}, cos\`i come la regola di permutazione. 

La situazione \`e diversa per quanto riguarda la congiunzione: come avveniva in precedenza, per dimostrare una congiunzione tra $P$ e $Q$ bisogna aver dimostrato separatamente sia $P$ che $Q$, ma in assenza della regola d'indebolimento, possiamo distinguere il caso in cui le dimostrazioni di $P$ e $Q$ siano fatte nello stesso ambiente ($\Gamma$ nella regola $\mrule{\with}$), o in ambienti diversi ($\Gamma$ e $\Delta$ in $\mrule{\otimes}$).

Un ragionamento analogo vale per la disgiunzione: in logica lineare possiamo infatti distinguere il caso in cui, nel dimostrare la disgiunzione di $P$ e $Q$, disponiamo solo di $P$ (regola $\mrule[l]{\oplus}$), solo di $Q$ (regola $\mrule[r]{\oplus}$), e quello in cui abbiamo ambedue (regola $\mrule{\parr}$).

\`E possibile introdurre nuove formule, indebolendo il sequente, a patto che queste siano ``marcate'' con l'operatore $\oc$; per questa classe di formule (chiamate formule ``perch\'e non'' o ``why not'') la molteplicit\`a non \`e rilevante: inoltre ogni formula pu\`o essere trasformata in una \emph{why not} grazie alla regola di \emph{derelizione} $\mrule{drlc}$. La regola di \emph{promozione} $\mrule{\oc}$ permette ``aumentare'' la molteplicit\`a di una formula di una quantit\`a arbitraria.

Il Sistema cos\`i ottenuto gode di buone propriet\`a, oltre ad avere una granularit\`a pi\`u fine rispetto alla logica classica. Per \textsf{CLL} \`e possibile dimostrare la \emph{cut elimination}:
\begin{thm}[Hauptsatz lineare]
La regola di taglio lineare \`e eliminabile da \textsf{CLL}.
\end{thm}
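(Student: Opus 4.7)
Il piano è seguire da vicino la struttura della dimostrazione del Gentzen Hauptsatz per \textsf{LKp} (Teorema~\ref{thm:haupt_lk}), adattandola ai tratti peculiari della logica lineare. Anzitutto estenderei le nozioni di \emph{grado} di una formula e di \emph{grado} di un'applicazione della regola di taglio, per coprire i connettivi moltiplicativi, additivi ed esponenziali. Procederei quindi attraverso tre lemmi ausiliari analoghi ai Lemmi~\ref{lem:cut_LK1}, \ref{lem:cut_LK2} e \ref{lem:cut_LK3}: il primo tratta il caso \emph{chiave}, in cui la formula di taglio è appena stata introdotta in entrambe le premesse da una regola logica; per ciascuna coppia di connettivi duali ($\otimes$/$\parr$, $\with$/$\oplus$, $\oc$/$\wn$) si mostra come sostituire l'applicazione di $\mrule{cut}$ con una o più applicazioni di taglio su sottoformule proprie, riducendone così il grado. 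Si osservi che le coppie moltiplicativa e additiva sono in perfetto stile Gentzen: ad esempio, il taglio tra $\vdash \Gamma, P \otimes Q$ e $\vdash \Delta, \mneg{P} \parr \mneg{Q}$ si spezza in due tagli consecutivi su $P$ e $Q$.

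Il secondo lemma affronta i \emph{casi commutativi}: per induzione sulla somma delle altezze delle due sotto-dimostrazioni, si permuta il taglio verso l'alto attraverso le altre regole (assioma, permutazione, regole logiche che non introducono la formula di taglio). Questi casi sono notevolmente più semplici che in \textsf{LKp}, poiché l'assenza di contrazione e indebolimento per formule arbitrarie impedisce le duplicazioni spurie tipiche del caso classico; in compenso occorre prestare attenzione al fatto che le regole moltiplicative spezzano il contesto, e quindi nella permutazione il contesto del taglio dovrà seguire una sola delle due sotto-dimostrazioni coerentemente con tale partizione. Il terzo lemma, analogo al Lemma~\ref{lem:cut_LK3}, mostra che ogni dimostrazione di grado $d > 0$ può essere trasformata in una di grado strettamente inferiore; iterandolo si ottiene il teorema.

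Lo scoglio principale sono i \emph{casi esponenziali}. Quando il taglio è tra $\oc P$ e $\wn \mneg{P}$, la conclusione della promozione ha necessariamente la forma $\vdash \wn \Gamma, \oc P$ (tutte le formule del contesto devono essere prefissate da $\wn$), mentre la dimostrazione sul lato destro può terminare con derelizione, indebolimento o contrazione su $\wn \mneg{P}$. Nel caso della derelizione, il taglio esponenziale si riduce direttamente a un taglio su $P$ di grado strettamente inferiore; nel caso dell'indebolimento, la dimostrazione della promozione viene scartata e il sequente conclusivo è indebolito opportunamente con le formule di $\wn \Gamma$; nel caso della contrazione, invece, l'intera dimostrazione della promozione deve essere \emph{duplicata}, fatto che produce la ben nota esplosione iper-esponenziale nella complessità dell'algoritmo. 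La parte delicata consiste nell'adattare la misura induttiva affinché tale duplicazione non pregiudichi la terminazione: tipicamente si introduce un \emph{taglio generalizzato} (o \emph{multi-cut}) che agisce simultaneamente su tutte le copie di una formula esponenziale contemporaneamente, ragionando per induzione lessicografica sulla coppia grado-altezza. Completata questa analisi per casi, l'iterazione del terzo lemma conclude la dimostrazione.
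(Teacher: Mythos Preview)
La tua proposta \`e corretta e segue esattamente l'approccio del testo: anche l\`i la dimostrazione \`e dichiarata ``del tutto analoga'' a quella del Teorema~\ref{thm:haupt_lk}, con le semplificazioni dovute all'assenza delle regole strutturali ordinarie. In realt\`a la tua trattazione \`e pi\`u accurata di quella del testo, che si limita a una riga e non menziona il punto delicato degli esponenziali n\'e la necessit\`a del multi-cut per gestire la contrazione su $\wn$, che tu invece identifichi correttamente.
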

\begin{proof}
La dimostrazione segue un argomento del tutto analogo a quello visto per la logica classica nel Teorema~\ref{thm:haupt_lk}, con alcune semplificazioni dovute al fatto di non dover trattare le usuali regole strutturali.
\end{proof}

Nuovamente, la dimostrazione risultante dalla procedura di cut elimination non \`e univocamente determinata, a causa della \emph{permutazione delle regole}. Ad esempio, nella derivazione:
\begin{center}
	\AxiomC{$\vdash \Gamma, P$}
	\RightLabel{$\mrule{\rho}$}
	\UnaryInfC{$\vdash \Gamma', P$}
	\AxiomC{$\vdash \mneg{P}, \Delta$}
	\RightLabel{$\mrule{\sigma}$}
	\UnaryInfC{$\vdash \mneg{P}, \Delta'$}
	\RightLabel{$\mrule{cut}$}
	\BinaryInfC{$\vdash \Gamma', \Delta'$}
	\DisplayProof{}
\end{center}
non c'\`e nessun modo ovvio di eliminare l'applicazione di $\mrule{cut}$, poich\'e le regole $\mrule{\rho}$ e $\mrule{\sigma}$ non agiscono su $P$ e $\mneg{P}$. Quindi l'idea \`e di ``spingere il cut verso l'alto'':
\begin{center}
	\AxiomC{$\vdash \Gamma, P$}
	\AxiomC{$\vdash \mneg{P}, \Delta$}
	\RightLabel{$\mrule{cut}$}
	\BinaryInfC{$\vdash \Gamma, \Delta$}
	\RightLabel{$\mrule{\rho}$}
	\UnaryInfC{$\vdash \Gamma', \Delta$}
	\RightLabel{$\mrule{\sigma}$}
	\UnaryInfC{$\vdash \Gamma', \Delta'$}
	\DisplayProof{}
\end{center}
ma cos\`i facendo abbiamo arbitrariamente privilegiato la regola $\mrule{\rho}$ rispetto alla $\mrule{\sigma}$, mentre l'altra scelta:
\begin{center}
	\AxiomC{$\vdash \Gamma, P$}
	\AxiomC{$\vdash \mneg{P}, \Delta$}
	\RightLabel{$\mrule{cut}$}
	\BinaryInfC{$\vdash \Gamma, \Delta$}
	\RightLabel{$\mrule{\sigma}$}
	\UnaryInfC{$\vdash \Gamma, \Delta'$}
	\RightLabel{$\mrule{\rho}$}
	\UnaryInfC{$\vdash \Gamma', \Delta'$}
	\DisplayProof{}
\end{center}
sarebbe stata altrettanto legittima. La scelta compiuta in questo passo della cut elimination \`e in generale irreversibile: a meno che $\mrule{\rho}$ o $\mrule{\sigma}$ non siano successivamente eliminate, non sar\`a pi\`u possibili scambiarle. Per eliminare questa fonte di non-determinismo, fu introdotto in~\cite{Gir87} un nuovo formalismo, basato sulla teoria dei grafi, e chiamato \emph{Proof Nets}.

Il Sistema \textsf{CLL} non \`e l'unico rappresentante della classe delle logiche lineari. Vista l'ampia gamma di regole che possiede, questo Sistema pu\`o essere suddiviso in moduli con interessanti propriet\`a computazionali: il punto \`e proprio che la logica lineare \`e in grado di trattare naturalmente con le risorse (rappresentate dalla \emph{molteplicit\`a} delle formule), e per questo ci si riferisce ad essa con l'appellativo \emph{resource-conscious}; in informatica avere coscienza delle risorse significa saper distinguere varie classi di complessit\`a.

Tra i vari sottosistemi, quelli che maggiormente divergono dalla logica classica (e intuizionista), sono chiamati \textsf{LLL} (Light Linear Logic) ed \textsf{ELL} (Elementary Linear Logic)~--~\cite{Gir95a, DanJoi01}. Essi seguono dalla scoperta che, in assenza degli esponenziali, la procedura di eliminazione dei tagli pu\`o essere eseguita in tempo lineare.

Per i nostri scopi, ci occuperemo esclusivamente del frammento moltiplicativo: questo \`e il pi\`u semplice ed il pi\`u piccolo frammento di logica lineare (fu anche il primo che venne trasposto nelle Proof Nets, per via della sua semplicit\`a). Nella fattispecie tratteremo d'ora in avanti il Sistema in Figura~\ref{fig:sys_mllmix}, chiamato \textsf{MLL+mix}, cio\`e \emph{Multiplicative Linear Logic} con l'aggiunta della regola \textsf{mix} che ``fonde'' i sequenti provenienti da due diversi sottoalberi di derivazione. La negazione \`e definita dalle leggi di De Morgan:
\begin{eqnarray*}
	\mneg{P \otimes Q} & = & \mneg{P} \parr \mneg{Q} \\
	\mneg{P \parr Q} & = & \mneg{P} \otimes \mneg{Q}
\end{eqnarray*}

\begin{figure}[t!]
\centering\textbf{Grammatica di \textsf{MLL+mix}}\\
\begin{minipage}[m]{.48\textwidth}
$$
	P ::= a \:|\: \mneg{P} \:|\: P \otimes P \:|\: P \parr P
$$
\end{minipage}
\begin{minipage}[m]{.48\textwidth}
	\vspace{1.5em}
	\centering\footnotesize{(con $a \in \mathcal{A}$ infinit\`a numerabile \\ di simboli proposizionali)}
	\vspace{1em}
\end{minipage}
\begin{center}
	\textbf{Sistema deduttivo}\\
	~\\
	$\vdash P, \mneg{P} \; \mrule{id}$
	\AxiomC{$\vdash \Gamma, P$}
	\AxiomC{$\vdash Q, \Delta$}
	\RightLabel{$\mrule{\otimes}$}
	\BinaryInfC{$\vdash \Gamma, \Delta, P \otimes Q$} 
	\DisplayProof{} 
 	\AxiomC{$\vdash \Gamma, P, Q$}
	\RightLabel{$\mrule{\parr}$}
	\UnaryInfC{$\vdash \Gamma, P \parr Q$} 
	\DisplayProof{}
	\AxiomC{$\vdash \Gamma$}
	\AxiomC{$\vdash \Delta$}
	\RightLabel{$\mrule{mix}$}
	\BinaryInfC{$\vdash \Gamma, \Delta$}
	\DisplayProof{}
\end{center}
\caption{Sistema \textsf{MLL+mix}}
\label{fig:sys_mllmix}
\end{figure}

Avremo modo di osservare una natuale corrispondenza di questo Sistema ed il suo corrispettivo in deep inference: il Sistema \textsf{LBV} di~\cite{Gug02}.
\newpage

\section{Sistema LBV}

\`E il pi\`u semplice Sistema deep inference concepibile: un calcolo proposizionale composto da \emph{due operatori duali} (del tutto simili a congiunzione e disgiunzione classici), una \emph{negazione auto-duale} alla De Morgan e una \emph{unit\`a logica}. Come nel caso classico, le formule sono considerate uguali modulo una relazione di equivalenza. Le regole sono l'assioma $\mrule{ax}$ per l'unit\`a e la regola di scambio $\mrule{s}$; inoltre la regola d'identit\`a (chiamata anche \emph{regola d'interazione}) e quella di taglio (o \emph{regola di co-interazione}), nella versione generalizzata:
$$
	\vlinf{}{\mruledn{i}}{\mctx*{C}\mpar{P, \mneg{P}}}{\mctx{C}{\circ}}
	\qquad\qquad
	\vlinf{}{\mruleup{i}}{\mctx{C}{\circ}}{\mctx*{C}\mand{P, \mneg{P}}}
$$
che tuttavia, come prima, possono essere ridotte alla loro forma atomica, dando origine al Sistema di Figura~\ref{fig:sys_lbv}.

\begin{thm}[Localit\`a di \textsf{LBV+cut}]\label{thm:loc_lbv}
La regola $\mruledn{i}$ \`e derivabile da $\{\mruledn{ai}, \mrule{s}\}$. Dualmente, la regola $\mruleup{i}$ \`e derivabile da $\{\mruleup{ai}, \mrule{s}\}$.
\end{thm}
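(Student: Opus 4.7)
Il piano è procedere per induzione strutturale sulla formula $P$ che compare nell'istanza generalizzata di $\mruledn{i}$, ricalcando fedelmente la porzione dedicata all'identità della dimostrazione del Teorema~\ref{thm:sksg_atomic}. Il contesto di \textsf{LBV} è in realtà più semplice di quello di \textsf{SKSg}, poiché sono assenti le regole di contrazione, indebolimento e mediale: l'argomento induttivo attraversa quindi i casi senza richiedere alcuna regola ausiliaria oltre a $\mrule{s}$.

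I casi base sono immediati. Se $P$ è un atomo, l'istanza della regola generalizzata è già un'istanza della regola atomica $\mruledn{ai}$, quindi non c'è nulla da dimostrare. Se $P = \circ$, essendo $\mneg{\circ} = \circ$, la premessa $\mctx*{C}\mpar{\circ, \circ}$ e la conclusione $\mctx{C}{\circ}$ coincidono modulo le equazioni sull'unità, e la derivazione richiesta è vuota.

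Per il passo induttivo distinguiamo $P = \mpar{Q, R}$ e $P = \mand{Q, R}$. In entrambi i casi l'ipotesi induttiva fornisce due derivazioni $\Phi_Q$ e $\Phi_R$ in $\{\mruledn{ai}, \mrule{s}\}$ che realizzano l'identità su $Q$ e su $R$ in un contesto arbitrario. Per $P = \mpar{Q, R}$, poiché $\mneg{P} = \mand{\mneg{Q}, \mneg{R}}$, si parte da $\mctx{C}{\circ}$ e si applica $\Phi_R$ per produrre $\mctx*{C}\mpar{\mneg{R}, R}$; quindi, immergendo $\Phi_Q$ nel contesto $\mctx*{C}\mand{\emptyctx, \mpar{\mneg{R}, R}}$, si arriva a $\mctx*{C}\mand{\mpar{\mneg{Q}, Q}, \mpar{\mneg{R}, R}}$; due applicazioni di $\mrule{s}$ riarrangiano quest'ultima formula in $\mctx*{C}\mpar{\mand{\mneg{Q}, \mneg{R}}, \mpar{Q, R}}$, che è proprio la conclusione voluta. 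Il caso $P = \mand{Q, R}$ è perfettamente simmetrico, con i ruoli di $\mand$ e $\mpar$ scambiati.

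La parte duale per $\mruleup{i}$ segue ``girando sottosopra'' le derivazioni appena costruite e negando ogni formula: le occorrenze di $\mruledn{ai}$ si trasformano in occorrenze di $\mruleup{ai}$, mentre $\mrule{s}$ è auto-duale. Non prevedo ostacoli concettuali rilevanti: l'unico punto tecnico di attenzione è verificare che le due istanze di $\mrule{s}$ usate per separare le identità atomiche annidate abbiano effettivamente la forma $\mctx*{C}\mpar{\mand{X, Y}, Z}/\mctx*{C}\mand{X, \mpar{Y, Z}}$, cosa garantita dalla struttura bipolare delle formule prodotte dall'ipotesi induttiva.
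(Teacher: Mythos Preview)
La tua proposta è corretta e segue esattamente lo stesso schema della dimostrazione del paper: induzione strutturale su $P$, casi base per l'unità e gli atomi, e nei casi induttivi $\mpar{Q,R}$ e $\mand{Q,R}$ si compongono le due ipotesi induttive dentro un copar per poi estrarre la conclusione con due applicazioni di $\mrule{s}$. Anche il trattamento del caso duale per $\mruleup{i}$ coincide con quello del paper.
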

\begin{proof}
Data l'istanza $\:\vlinf{}{\mruledn{i}}{\mctx*{C}\mpar{P, \mneg{P}}}{\mctx{C}{\circ}}\:$ procediamo per induzione strutturale su $P$. Il caso duale $\mruleup{i}$ si dimostra allo stesso modo.
\begin{description}
	\item[Casi base] ~
	\begin{enumerate}
		\item $P = \circ$. Ovvio, poich\'e $\mctx*{C}\mpar{P, \mneg{P}} = \mctx{C}{\circ}$.
		\item $P$ \`e un atomo: Allora $\mruledn{i}$ \`e un'istanza di $\mruledn{ai}$.
	\end{enumerate}
	\item[Casi induttivi] ~
	\begin{enumerate}[resume]
		\item $P = \mpar{R, S}$. Per ipotesi induttiva, abbiamo due derivazioni $\Phi_R$ e $\Phi_S$:
		$$
			\vlderd{\Phi_R}{\{\mruledn{ai}, \mvlrule{s}\}}{\mctx*{C}\mpar{R, \mneg{R}}}{\mctx{C}{\circ}}
			\qquad\qquad
			\vlderd{\Phi_S}{\{\mruledn{ai}, \mvlrule{s}\}}{\mctx*{D}\mpar{S, \mneg{S}}}{\mctx{D}{\circ}}
		$$
		da cui \`e possibile ottenere:
		$$
			\vlderivation {
				\vlin{}{\mvlrule{s}}{\mctx*{C}\mpar{R, S, \mand{\mneg{R}, \mneg{S}}}}{
					\vlin{}{\mvlrule{s}}{\mctx*{C}\mpar{R, \mand{\mneg{R}, \mpar{S, \mneg{S}}}}}{
						\vldd{\Phi_S}{\{\mruledn{ai}, \mvlrule{s}\}}{\mctx*{C}\mand{\mpar{R, \mneg{R}}, \mpar{S, \mneg{S}}}}{
							\vldd{\Phi_R}{\{\mruledn{ai}, \mvlrule{s}\}}{\mctx*{C}\mpar{R, \mneg{R}} = \mctx*{C}\mand{\mpar{R, \mneg{R}}, \circ}}{
								\vlhy{\mctx{C}{\circ}}
							}
						}
					}
				}
			}
		$$
		\item Infine, il caso $P = \mand{R, S}$ \`e analogo al precedente.
	\end{enumerate}
\end{description}
\end{proof}

\begin{figure}[h!]
\vspace{0.5em}
\textbf{Sintassi} \\
\begin{minipage}[m]{.48\textwidth}
	\vspace{1em}
	\centering $P ::= \circ \:|\: a \:|\: \mneg{a} \:|\: \mpar{P, P} \:|\: \mand{P, P}$
\end{minipage}
\begin{minipage}[m]{.48\textwidth}
\begin{center}
	\vspace{1em}
	\footnotesize{(con $a \in \mathcal{A}$ infinit\`a numerabile \\ di simboli proposizionali)}
\end{center}
\end{minipage}
\\\\\textbf{Sistema deduttivo}
\begin{center}
	$\circ \quad \mrule{ax}$
	\qquad
	$\vlinf{}{\mruledn{ai}}{\mctx*{C}\mpar{a, \mneg{a}}}{\mctx{C}{\circ}}$
	\qquad
	$\vlinf{}{\mruleup{ai}}{\mctx{C}{\circ}}{\mctx*{C}\mand{a, \mneg{a}}}$
	\qquad
	$\vlinf{}{\mvlrule{s}}{\mctx*{C}\mpar{\mand{P, Q}, R}}{\mctx*{C}\mand{P, \mpar{Q, R}}}$
\end{center}
\vspace{.25em}
\begin{minipage}[t]{.5\textwidth}
\textbf{Associativit\`a}
\begin{eqnarray*}
	\mpar{\mpar{P, Q}, R} & = & \mpar{P, \mpar{Q, R}} \\
	\mand{\mand{P, Q}, R} & = & \mand{P, \mand{Q, R}} 
\end{eqnarray*}
\textbf{Commutativit\`a}
\begin{eqnarray*}
	\mpar{P, Q} & = & \mpar{Q, P} \\
	\mand{P, Q} & = & \mand{Q, P}
\end{eqnarray*}
\textbf{Unit\`a}
$$
	\mpar{P, \circ} = \mand{P, \circ} = P
$$
\end{minipage}
\begin{minipage}[t]{.49\textwidth}
\textbf{Negazione}
\begin{eqnarray*}
	\mneg{\circ} & = & \circ \\
	\mneg{\mpar{P, Q}} & = & \mand{\mneg{P}, \mneg{Q}} \\
	\mneg{\mand{P, Q}} & = & \mpar{\mneg{P}, \mneg{Q}} \\
	\mneg{\mneg{P}} & = & P
\end{eqnarray*}
\textbf{Congruenza}
\begin{center}
	$P = P$ \qquad
	\AxiomC{$P = R$}\AxiomC{$R = Q$}\BinaryInfC{$P = Q$}\DisplayProof{}
\end{center}
\begin{center}
	\AxiomC{$P = Q$}\UnaryInfC{$Q = P$}\DisplayProof{} \qquad
	\AxiomC{$P = Q$}\UnaryInfC{$\mctx{C}{P} = \mctx{C}{Q}$}\DisplayProof{} 
\end{center}
\end{minipage}
\vspace{.45em}
\caption{Sistema \textsf{LBV+cut}, equivalenza tra formule e negazione}
\label{fig:sys_lbv}
\end{figure}

Esiste una corrispondenza 1:1 tra Sistema \textsf{MLL+mix} e Sistema \textsf{LBV}.

\begin{dfn}[Trasformazioni \textsf{LBV}$\leftrightarrow$\textsf{MLL}]
$$
\begin{array}{rclcrcl}
	\multicolumn{3}{c}{\transV{\cdot} \::\: \mathscr{L}_\mathsf{MLL}\rightarrow\mathscr{L}_\mathsf{LBV}} & \qquad\qquad & \multicolumn{3}{c}{\transL{\cdot} \::\: \mathscr{L}_\mathsf{LBV}\rightarrow\mathscr{L}_\mathsf{MLL}} \\
	\transV{a} & = & a & ~ & \transL{a} & = & a \\
	\transV{P \parr Q} & = & \mpar{\transV{P}, \transV{Q}} & ~ & \transL{\mpar{P, Q}} & = & \transL{P} \parr \transL{Q} \\
	\transV{P \otimes Q} & = & \mand{\transV{P}, \transV{Q}} & ~ & \transL{\mand{P,Q}} & = & \transL{P} \otimes \transL{Q}
\end{array}
$$
Inoltre la definizione di $\transV{\cdot}$ si estende facilmente ai sequenti:
$$
	\transV{\vdash P_1, \ldots, P_n} = \mpar{\transV{P_1}, \ldots, \transV{P_n}}
$$
per $n > 0$; per $n=0$ si pone $\transV{\vdash} = \circ$.
\end{dfn}

\begin{thm}[Equivalenza di \textsf{LBV} e \textsf{MLL+mix}]~\\
\begin{enumerate}[label=\roman*)]
	\item Se il sequente $\vdash P$ \`e dimostrabile in \textsf{MLL+mix}, allora la struttura $\transV{P}$ \`e dimostrabile in \textsf{LBV}.
	\item Se la struttura $P$ (in forma normale, con $P \not= \circ$) \`e dimostrabile in \textsf{LBV}, allora il sequente $\vdash \transL{P}$ \`e dimostrabile in \textsf{MLL+mix}.
\end{enumerate}
\end{thm}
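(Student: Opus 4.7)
La dimostrazione dei due versi del teorema procede con strategie speculari, entrambe per induzione, ma con una diversa distribuzione della complessit\`a tecnica.

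Per il punto \emph{(i)}, il piano \`e procedere per induzione strutturale sulla dimostrazione in \textsf{MLL+mix}, esibendo per ciascuna regola una derivazione che la simuli in \textsf{LBV}. L'assioma $\vdash P, \mneg{P}$ corrisponde alla struttura $\mpar{\transV{P}, \mneg{\transV{P}}}$, ottenuta mediante la versione generale $\mruledn{i}$, a sua volta derivabile in \textsf{LBV} per il Teorema~\ref{thm:loc_lbv}. La regola $\mrule{\parr}$ non richiede alcun passo aggiuntivo, poich\'e la traduzione identifica direttamente la virgola del sequente col connettivo $\mpar{\cdot, \cdot}$. Per $\mrule{mix}$, date per ipotesi induttiva due dimostrazioni in \textsf{LBV} di $\transV{\Gamma}$ e $\transV{\Delta}$, si parte dall'equivalenza $\circ = \mpar{\circ, \circ}$ e si applicano entrambe nei due sotto-contesti, ottenendo $\mpar{\transV{\Gamma}, \transV{\Delta}}$. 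Il caso pi\`u interessante \`e $\mrule{\otimes}$: sfruttando l'equivalenza $\circ = \mand{\circ, \circ}$, si applicano in parallelo le due derivazioni entro i due lati del tensor, arrivando a $\mand{\mpar{\transV{\Gamma}, \transV{P}}, \mpar{\transV{Q}, \transV{\Delta}}}$; due applicazioni della regola di scambio $\mrule{s}$ (opportunamente intercalate da passi di commutativit\`a) riarrangiano la struttura in $\mpar{\transV{\Gamma}, \mand{\transV{P}, \transV{Q}}, \transV{\Delta}}$, come voluto.

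Per il punto \emph{(ii)}, la strategia naturale \`e l'induzione sulla lunghezza della dimostrazione in \textsf{LBV}. Il caso base non banale \`e una singola applicazione di $\mruledn{ai}$ all'unit\`a, che produce $\mpar{a, \mneg{a}}$, traducibile come $\vdash a \parr \mneg{a}$ mediante $\mrule{id}$ seguita da $\mrule{\parr}$. Nel passo induttivo si suppone di disporre di una dimostrazione in \textsf{MLL+mix} della struttura al passo precedente, e la si trasforma in una per quella al passo successivo, distinguendo due casi sulla regola di \textsf{LBV} applicata: un'istanza di $\mruledn{ai}$ introduce una nuova coppia $\mpar{a, \mneg{a}}$ in un contesto, componibile con la dimostrazione esistente via $\mrule{mix}$ o $\mrule{\otimes}$ a seconda della forma del contesto; un'istanza di $\mrule{s}$ esegue il riarrangiamento $\mand{P, \mpar{Q, R}} \leadsto \mpar{\mand{P, Q}, R}$, che in \textsf{MLL+mix} riflette esattamente la forma della regola $\mrule{\otimes}$ combinata con $\mrule{\parr}$.

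L'ostacolo principale di \emph{(ii)} risiede proprio nella gestione dei contesti arbitrari. Mentre in \textsf{LBV} ogni regola pu\`o essere applicata a qualsiasi livello di profondit\`a grazie alla metodologia deep inference, in \textsf{MLL+mix} le regole agiscono soltanto sul connettivo principale del sequente. Ci\`o impone un'ulteriore induzione annidata sulla struttura del contesto $\mctx*{C}$, analizzandolo dall'esterno verso l'interno: per un contesto con una disgiunzione al bordo del buco si sfrutta $\mrule{mix}$ (eventualmente seguita da $\mrule{\parr}$) per ricomporre il sequente; per un contesto con una congiunzione al bordo si apre il tensor via $\mrule{\otimes}$, si applica la trasformazione al componente appropriato per ipotesi induttiva, e si ricompone il risultato. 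Questa riformulazione \emph{shallow} della dinamica deep inference di \textsf{LBV} \`e il cuore tecnico della dimostrazione, e testimonia come, nonostante l'evidente divario metodologico, i due sistemi risultino equivalenti sul frammento moltiplicativo.
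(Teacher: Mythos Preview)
Il testo non fornisce una propria dimostrazione del teorema: si limita a rinviare a Guglielmi (2002). Valuto quindi la tua proposta direttamente.

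Il punto \emph{(i)} \`e corretto e corrisponde all'argomento standard: ogni regola di \textsf{MLL+mix} si simula localmente in \textsf{LBV}, e i casi che descrivi ($\mrule{id}$ via $\mruledn{i}$ generalizzata, $\mrule{\parr}$ immediato, $\mrule{mix}$ via $\circ = \mpar{\circ,\circ}$, $\mrule{\otimes}$ via $\circ = \mand{\circ,\circ}$ seguito da due applicazioni di $\mrule{s}$) sono esattamente quelli giusti.

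Il punto \emph{(ii)} ha un'impostazione ragionevole ma sorvola su un passaggio non banale. Quando scrivi che, per un contesto con congiunzione al bordo, ``si apre il tensor via $\mrule{\otimes}$'', stai implicitamente assumendo una forma di invertibilit\`a di $\otimes$ in \textsf{MLL+mix}: dato un taglio-libero $\vdash \Gamma, A \otimes B$, poter sempre estrarre $\vdash \Gamma_1, A$ e $\vdash \Gamma_2, B$. Questo \emph{non} \`e un'invertibilit\`a sintattica immediata (a differenza di $\mrule{\parr}$); \`e un vero lemma di \emph{splitting} per \textsf{MLL+mix}, che richiede di risalire la dimostrazione fino all'istanza di $\mrule{\otimes}$ che crea quel tensor, permutando con $\mrule{mix}$ e con le altre regole, e presuppone assiomi in forma atomica (altrimenti l'assioma $\vdash A\otimes B, \mneg{A}\parr\mneg{B}$ va prima decomposto). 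La tua induzione annidata sul contesto funziona solo se hai gi\`a in mano questo lemma; cos\`i com'\`e scritta, la frase ``si apre il tensor'' nasconde proprio la parte difficile.

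Un'alternativa pi\`u pulita, ed \`e quella adottata nella letteratura, \`e: (a) dimostrare separatamente che per ogni regola $\vlinf{}{}{\mctx{C}{Q}}{\mctx{C}{P}}$ di \textsf{LBV} il sequente $\vdash \transL{\mneg{P}}, \transL{Q}$ \`e derivabile in \textsf{MLL+mix}; (b) stabilire una volta per tutte che se $\vdash \transL{\mctx{C}{P}}$ e $\vdash \transL{\mneg{P}}, \transL{Q}$ sono derivabili allora lo \`e anche $\vdash \transL{\mctx{C}{Q}}$, usando l'ammissibilit\`a del taglio in \textsf{MLL+mix}. In questo modo il lavoro sul contesto viene assorbito dall'Hauptsatz lineare, che il testo ha gi\`a enunciato, e non serve ripetere un argomento di permutazione ad hoc.
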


Questo teorema (dimostrato per la prima volta in~\cite{Gug02}) stabilisce una correlazione tra calcolo dei sequenti e calcolo delle strutture; come gi\`a accennato, \`e possibile conseguire un risultato analogo per il Sistema \textsf{SKS}, ma, ad esempio, la propriet\`a di localit\`a non vale per la logica classica proposizionale nel calcolo dei sequenti.

\newpage

\subsection{Eliminazione del taglio} 

L'argomento calssico per dimostrare l'eliminazione del taglio nel calcolo dei sequenti, risiede nel fatto che, quando le formule principali del taglio sono introdotte in entrambi i rami, esse determinano che regole saranno applicate immediatamente sopra a quella di taglio. Questo \`e conseguenza del fatto che le formule hanno un connettivo principale, e le regole logiche si basano solo su quello, e su nessun'altra propriet\`a delle formule.

Questo fatto non vale nel calcolo delle strutture. Per dimostrare la cut elimination nel Sistema \textsf{LBV}, occorre appoggiarsi ad un'altra propriet\`a, scoperta in~\cite{Gug02}, e chiamata \emph{scissione} o \emph{splitting}. Essa \`e una generalizzazione della tecnica vista nella dimostrazione di eliminazione del taglio per il sistema \textsf{SKS}. Si consideri la dimostrazione del sequente:
$$
	\vdash \mctx{C}{P \otimes Q}, \Gamma
$$
dove $\mctx{C}{P \otimes Q}$ \`e una formula contenente la sottoformula $P \otimes Q$. Sappiamo per certo che nella dimostrazione ci deve essere un'istanza della regola $\mrule{\otimes}$ che scinde $P$ da $Q$ assieme ai rispettivi contesti. Siamo nella seguente situazione:
$$
	\vlderivation{
		\vltr{\Psi}{\vdash \mctx{C}{P \otimes Q}, \Gamma}{\vlhy{\quad}}{
			\vliin{}{\mvlrule{\otimes}}{\vdash P \otimes Q, \Gamma', \Gamma''}{
				\vltr{\Pi_1}{\vdash P, \Gamma'}{\vlhy{~}}{\vlhy{\quad}}{\vlhy{~}}
			}{
				\vltr{\Pi_2}{\vdash Q, \Gamma''}{\vlhy{~}}{\vlhy{\quad}}{\vlhy{~}}
			}
		}{\vlhy{~}}
	}
	\qquad\mbox{corrispondente a}\qquad
	\vlderivation{
		\vldd{\Psi}{}{\mpar{\mctx*{C}\mand{P,Q}, \Gamma}}{
			\vlin{}{\mvlrule{s}}{\mpar{\mand{P,Q},\Gamma',\Gamma''}}{
				\vlin{}{\mvlrule{s}}{\mpar{\mand{\mpar{P,\Gamma'},Q},\Gamma''}}{
					\vldd{\Pi_1}{}{\mand{\mpar{P,\Gamma'},\mpar{Q,\Gamma''}}}{
						\vlpr{\Pi_2}{}{\mpar{Q,\Gamma''}}
					}
				}
			}
		}
	}
$$
La derivazione $\Psi$ implementa lo splitting, che \`e ottenuto in due passi:
\begin{enumerate}
	\item riduzione del contesto: se $\mctx{C}{P}$ \`e dimostrabile, allora $\mctx*{C}$ pu\`o essere ridotto, risalendo nella dimostrazione, ad un contesto $\mpar{\emptyctx, S}$, per un $S$ opportuno, tale che $\mpar{P, S}$ \`e dimostrabile. Nell'esempio sopra, $\mpar{\mctx{C}{\emptyctx}, \Gamma}$ \`e ridotto a $\mpar{\emptyctx, S}$ per un certo $S$;
	\item scissione di superficie: se $\mpar{\mand{R,T},P}$ \`e dimostrabile, allora $P$ pu\`o essere ridotto, risalendo nella dimostrazione, ad una struttura $\mpar{P_1, P_2}$ tali che $\mpar{R,P_1}$ e $\mpar{T,P_2}$ sono dimostrabili. Nell'esempio, $S$ \`e scisso in $\mpar{\Gamma', \Gamma''}$.
\end{enumerate}

Grazie al Teorema di splitting, abbiamo la capacit\`a di scindere un copar in due dimostrazioni, una per ogni rispettiva sottoformula: l'importanza di tale capacit\`a ai fini della cut elimination, diventa chiara se consideriamo la regola di taglio nel Sistema \textsf{LBV}:
$$
	\vlinf{}{\mruleup{ai}}{\mctx{C}{\circ}}{\mctx*{C}\mand{a, \mneg{a}}}
$$
Il contesto $\mctx*{C}$ viene scisso in due componenti $S_1$ e $S_2$ tali che esistono le dimostrazioni $\:\vlproofd{\Pi_1}{}{\mpar{a, S_1}}\:$ e $\:\vlproofd{\Pi_2}{}{\mpar{\mneg{a}, S_2}}\:$. Ora possiamo sfruttare il fatto che gli atomi $a$ e $\mneg{a}$ possono essere introdotti, rispettivamente nelle conclusioni di $\Pi_1$ e $\Pi_2$, solo mediante applicazioni della regola $\mruledn{ai}$ (fatto non vero, ad esempio, nel Sistema \textsf{KS}). A questo punto siamo in grado di isolare il segmento di dimostrazione che introduce gli atomi che verranno in seguito rimossi dal taglio, e possiamo pertanto trasformare questa sezione per bloccare il flusso degli atomi diretti al taglio sul nascere.

\begin{thm}[Shallow splitting]\label{thm:shallow_split}
Se $\mpar{\mand{R,T},P}$ \`e dimostrabile in \textsf{LBV}, allora esistono $P_1$ e $P_1$ tali che:
$$
	\vlderd{}{\mathsf{LBV}}{P}{\mpar{P_1,P_2}}
$$
e $\mpar{R,P_1}$ e $\mpar{T,P_2}$ siano entrambi dimostrabili in \textsf{LBV}.
\end{thm}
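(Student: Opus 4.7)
The plan is to proceed by structural induction on the size (number of rule instances) of the given proof $\Pi$ of $\mpar{\mand{R,T},P}$ in \textsf{LBV}. I would first mark the displayed occurrence of $\mand{R,T}$ in the conclusion as distinguished, and trace it upward through $\Pi$; every rule instance either acts entirely inside $P$, entirely inside $R$, entirely inside $T$, or interacts with the distinguished copar at its root. Rules of the first three kinds can be pulled out of the way via the inductive hypothesis: one applies IH to the proof just above the bottom-most rule, obtaining $P_1', P_2'$ with the required derivations, and then reinserts the rule either inside the derivation of $P$ from $\mpar{P_1', P_2'}$, or inside the proof of $\mpar{R, P_1'}$ or $\mpar{T, P_2'}$. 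Care is needed when the rule straddles a putative boundary between $P_1'$ and $P_2'$ in $P$; in that case one uses the commutativity/associativity of par to rebracket before reapplying the rule.

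The interesting case is when the bottom-most rule is a switch $\mrule{s}$ acting on the distinguished copar. Since $\mand{R,T}$ has a conjunction at the root, no other \textsf{LBV} rule can operate on it without first decomposing its inner structure via a switch. So the premise has one of the shapes
$$\vlinf{}{\mvlrule{s}}{\mpar{\mand{R,T}, \mpar{X,Y}}}{\mpar{\mand{R, \mpar{T, X}}, Y}}$$
or its symmetric counterpart with $R$ and $T$ swapped, or a variant in which the switch absorbs material the other way around. In each configuration, the premise again exhibits a distinguished copar $\mand{R, T')}$ or $\mand{R', T}$ of smaller proof-height, so IH applies and yields splits whose components can be recombined by reapplying the very switch that was peeled off, producing the required derivation from $\mpar{P_1, P_2}$ to $P$.

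The main technical obstacle will be the bookkeeping in this interactive case: after a switch absorbs some material $X$ from $P$ into the copar, the inductive application will return a split where $X$ has been distributed between $P_1$ and $P_2$, and I will have to verify that the two resulting proofs $\mpar{R, P_1}$ and $\mpar{T, P_2}$ really exist and that the "reverse" switch recovering $P$ is well-typed, i.e.\ that the bracketing matches. Since \textsf{LBV} is commutative and associative up to $=$, and since by Theorem~\ref{thm:loc_lbv} every atom of $R$ and $T$ is introduced by an atomic identity that ultimately must interact with its dual through a chain of switches, this bookkeeping can be carried out, but keeping the invariant tight across the case split on the shape of the switch is where most of the delicate work lies. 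I would probably also need the auxiliary context-reduction step informally described before the statement, used either as a preliminary lemma or inlined into the axiom base case where $\Pi$ is a single identity instance involving one of the atoms of $R$ or $T$.
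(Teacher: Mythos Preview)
Your overall strategy---induction on the proof, case analysis on the bottom rule, reinsert the rule after applying the inductive hypothesis---matches the paper's. But there is a genuine gap in the induction measure, and it bites exactly in the ``interesting case'' you single out.

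You propose a single induction on the size (height) of $\Pi$. This works fine when the bottom rule lies entirely inside $R$, $T$, or $P$. It also works in the particular switch shape you wrote down, where the redex copar is $\mand{R,T}$ itself and one of $R,T$ stays intact. But the switch can act more generally: up to the AC equations one may have $R=\mand{R',R''}$, $T=\mand{T',T''}$, and the switch extracts the \emph{mixed} sub-copar $\mand{R',T'}$, with premise
\[
\mpar{\mand{\mpar{\mand{R',T'},P'},R'',T''},\,P''}\;.
\]
Applying your IH to this premise splits $P''$ and yields proofs of $\mpar{\mand{R',T'},P',P_1'}$ and of $\mpar{\mand{R'',T''},P_2'}$. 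To finish you must split \emph{each} of these again (to separate $R'$ from $T'$, and $R''$ from $T''$). But these proofs were produced by the existential conclusion of the IH: you have no control whatsoever over their heights, so a height-only induction cannot be reapplied. The same phenomenon occurs in the dual configuration where the redex copar lies inside $P$ and the switch pulls $\mand{R,T}$ into it; your four-way classification (inside $R$/inside $T$/inside $P$/at the root of the copar) does not obviously cover that case either.

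The paper's fix is to use the lexicographic order on pairs (formula size, proof height): the second and third IH applications above go through because the formulas $\mpar{\mand{R',T'},P',P_1'}$ and $\mpar{\mand{R'',T''},P_2'}$ are strictly smaller than the original conclusion (the switch instance being non-trivial). Without this primary size measure the argument does not close. Your remark about needing context reduction in the base case is also off: context reduction (Theorem~\ref{thm:ctx_reduction}) is proved \emph{from} shallow splitting, not used in it; the base cases here are handled directly by the trivialities $R=\circ$, $T=\circ$, or $P=\circ$.
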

\begin{proof}
Consideriamo l'ordinamento lessicografico sui naturali:
$$
	(m', n') \prec (m, n) \quad\mbox{ sse }\quad m' < m \mbox{ oppure } (m' = m \mbox{ e } n' < n)
$$
Vogliamo procedere per induzione completa su due quantit\`a: l'altezza della dimostrazione di $\mpar{\mand{R,T},P}$ e la \emph{lunghezza delle formule}, definita induttivamente da:
$$
	\begin{array}{ccccccc}
		\msize{\circ} & = & 0 & \qquad\qquad & \msize{\mpar{P,Q}} & = & \msize{P} + \msize{Q} \\
		\msize{a} & = & 1 & \qquad\qquad & \msize{\mand{P,Q}} & = & \msize{P} + \msize{Q} \\
		\msize{\mneg{P}} & = & \msize{P}
	\end{array}
$$
Dato il meta-enunciato:
$$
\begin{array}{lll}
	C(m,n) & = & \forall R,T,P. \\
	& ~ &\quad \forall (m',n') \preceq (m,n). \\
	~ & ~ &\quad\quad \left(m' = \msizeb{\mpar{\mand{R,T},P}} \quad\bigwedge\quad \mbox{esiste } \vlproofd{}{}{\mpar{\mand{R,T},P}} \mbox{ con altezza $n'$}\;\right) \\
	~ & ~ &\quad\quad \mbox{\scalebox{1.5}{$\Rightarrow$}}\quad \exists P_1,P_2.\left(\;\vlderd{}{}{P}{\mpar{P_1,P_2}} \quad\bigwedge\:\quad \vlproofd{}{}{\mpar{R,P_1}} \:\quad\bigwedge\:\quad \vlproofd{}{}{\mpar{T,P_2}} \;\right)
\end{array}
$$
il teorema \`e equivalente a $\forall m,n.C(m,n)$. Per ipotesi induttiva possiamo suppore di avere una dimostrazione di $C(m',n')$ per ogni $(m',n') \prec (m,n)$.

La lunghezza di $\mpar{\mand{R,T},P}$ \`e $m$ e l'altezza della sua dimostrazione \`e $n$. Consideriamo l'istanza dell'ultima regola di questa dimostrazione:
$$
	\vlderivation{
		\vlin{}{\mvlrule{\rho}}{\mpar{\mand{R,T},P}}{
			\vlpd{}{}{Q}
		}
	}
$$
Procediamo per casi su $\mrule{\rho}$ (assumiamo sempre $P \not= \circ$ e $R \not= T \not= \circ$, perch\'e in questi casi il teorema vale banalmente):
\begin{enumerate}
	\item $\mrule{\rho} = \mruledn{ai}$. Questa regola si pu\`o applicare in tre diversi modi:
	\begin{enumerate}[label=\arabic{enumi}.\arabic*.]
		\item all'interno di $R$, cio\`e:
		$$
			\vlderivation{
				\vlin{}{\mruledn{ai}}{\mpar{\mand{R,T},P}}{
					\vlpd{}{}{\mpar{\mand{R',T},P}}
				}
			}
		$$
		Per ipotesi induttiva esistono $P_1$, $P_2$ tali che:
		$$
			\vlderd{}{}{P}{\mpar{P_1, P_2}} \qquad,\qquad \vlproofd{}{}{\mpar{R',P_1}} \qquad,\qquad \vlproofd{}{}{\mpar{T,P_2}}
		$$
		\`E sufficiente applicare $\mruledn{ai}$ in coda alla dimostrazione di $\mpar{R',P_1}$ per ottenere:
		$$
			\vlderivation{
				\vlin{}{\mruledn{ai}}{\mpar{R,P_1}}{
					\vlpd{}{}{\mpar{R',P_1}}
				}
			}
		$$
		\item all'interno di $T$. Analogo al caso precedente.
		\item all'interno di $P$, cio\`e:
		$$
			\vlderivation{
				\vlin{}{\mruledn{ai}}{\mpar{\mand{R,T},P}}{
					\vlpd{}{}{\mpar{\mand{R,T},P'}}
				}
			}
		$$
		Anche in questo caso procediamo per induzione diretta, a meno di un'applicazione di $\mruledn{ai}$ in:
		$$
			\vlderivation{
				\vlin{}{\mruledn{ai}}{P}{
					\vldd{}{\;\mbox{\footnotesize{(per I.H.)}}}{P'}{\vlhy{\mpar{P_1,P_2}}}
				}
			}
		$$
	\end{enumerate}
	\item $\mrule{\rho} = \mrule{s}$. Se la regola si applica all'interno di $R$, $T$ o $P$, procediamo in maniera del tutto analoga a quanto fatto nel caso $\mrule{\rho} = \mruledn{ai}$. Esistono altre due possibilit\`a:
	\begin{enumerate}[label=\arabic{enumi}.\arabic*.]
		\item $R=\mand{R',R''}$, $T=\mand{T',T''}$, $P=\mpar{P',P''}$ e:
		$$
			\vlderivation{
				\vlin{}{\mvlrule{s}}{\mpar{\mand{R',R'',T',T''}, P', P''}}{
					\vlpd{}{}{\mpar{\mand{\mpar{\mand{R', T'}, P'}, R'', T''}, P''}}
				}
			}
		$$
		Possiamo applicare l'ipotesi induttiva per ottenere:
		$$
			\vlderd{}{}{P''}{\mpar{P_1', P_2'}} \qquad,\qquad \vlproofd{\Pi_1}{}{\mpar{\mand{R',T'}, P', P_1'}} \qquad,\qquad \vlproofd{\Pi_2}{}{\mpar{\mand{R'', T''}, P_2'}}
		$$
		Ora possiamo applicare nuovamente l'ipotesi induttiva su $\Pi_1$ e su $\Pi_2$: infatti, anche se non conosciamo l'altezza di queste due dimostrazioni, sappiamo che la loro dimensione \`e inferiore a:
		$$
			\msizeb{\mpar{\mand{R',R'',T',T''}, P', P''}}
		$$
		perch\'e per ipotesi, l'istanza di $\mrule{s}$ non \`e triviale. Pertanto abbiamo per ipotesi induttiva:
		$$
			\begin{array}{ccccc}
				\vlderd{}{}{\mpar{P',P_1'}}{\mpar{P_1'',P_2''}} &\quad,\quad& \vlproofd{}{}{\mpar{R',P_1''}} &\quad,\quad& \vlproofd{}{}{\mpar{T',P_2''}} \\\\
				\vlderd{}{}{P_2'}{\mpar{P_1''',P_2'''}} &,& \vlproofd{}{}{\mpar{R'',P_1'''}} &,& \vlproofd{}{}{\mpar{T'',P_2'''}}
			\end{array}
		$$
		Ora, ponendo $P_1 = \mpar{P_1'', P_1'''}$ e $P_2 = \mpar{P_2'', P_2'''}$ otteniamo:
		$$
			\vlderivation{
				\vlde{}{}{\mpar{P',P''}}{
					\vlde{}{}{\mpar{P',P_1',P_2'}}{
						\vlde{}{}{\mpar{P_1'',P_2'',P_2'}}{
							\vlhy{\mpar{P_1'',P_2'',P_1''',P_2'''}}
						}
					}
				}
			}
			\quad,\quad
			\vlderivation{
				\vlin{}{\mrule{s}}{\mpar{\mand{R',R''}, P_1'', P_1'''}}{
					\vldd{}{}{\mpar{\mand{\mpar{R',P_1''}, R''}, P_1'''}}{
						\vlpd{}{}{\mpar{R'',P_1'''}}
					}
				}
			}
			\quad,\quad
			\vlderivation{
				\vlin{}{\mrule{s}}{\mpar{\mand{T',T''}, P_2'', P_2'''}}{
					\vldd{}{}{\mpar{\mand{\mpar{T',P_2''}, T''}, P_2'''}}{
						\vlpd{}{}{\mpar{T'',P_2'''}}
					}
				}
			}
		$$

		\item $P = \mpar{\mand{P',P''},U',U''}$ e:
		$$
			\vlderivation{
				\vlin{}{\mrule{s}}{\mpar{\mand{R,T},\mand{P',P''},U',U''}}{
					\vlpd{}{}{\mpar{\mand{\mpar{\mand{R,T},P',U'},P''},U''}}
				}
			}
		$$
		Per ipotesi induttiva abbiamo:
		$$
			\vlderd{}{}{U''}{\mpar{U_1,U_2}} \qquad,\qquad \vlproofd{}{\Pi}{\mpar{\mand{R,T},P',U',U_1}} \qquad,\qquad \vlproofd{}{}{\mpar{P'',U_2}}
		$$
		\`E di nuovo \`e possibile applicare l'ipotesi induttiva su $\Pi$ poich\'e:
		$$
			\msizeb{\mpar{\mand{R,T},P',U',U_1}} < \mpar{\mand{\mpar{\mand{R,T},P',U'},P''},U''}
		$$
		per ottenere:
		$$
			\vlderd{}{}{\mpar{P',U',U_1}}{\mpar{P_1,P_2}} \qquad,\qquad \vlproofd{}{}{\mpar{R,P_1}} \qquad,\qquad \vlproofd{}{}{\mpar{T,P_2}}
		$$
		Ora possiamo costruire:
		$$
			\vlderivation{
				\vldd{}{}{\mpar{\mand{P',P''},U',U'']}}{
					\vlin{}{\mrule{s}}{\mpar{\mand{P',P''},U',U_1,U_2}}{
						\vldd{}{}{\mpar{\mand{P',\mpar{P'',U_2}},U',U_1}}{
							\vldd{}{}{\mpar{P',U',U_1}}{\vlhy{\mpar{P_1,P_2}}}
						}
					}
				}
			}
		$$
	\end{enumerate}
\end{enumerate}
\end{proof}

\begin{thm}[Riduzione del contesto]\label{thm:ctx_reduction}
Per ogni struttura $P$ ed ogni contesto $\mctx*{C}$ tale che $\mctx{C}{P}$ \`e dimostrabile in \textsf{LBV}, esiste una struttura $C$ tale che, per ogni struttura $X$ esistono le derivazioni:
$$
	\vlderd{}{\mathsf{LBV}}{\mctx{C}{X}}{\mpar{C,X}}
	\qquad\mbox{e}\qquad
	\vlproofd{\mathsf{LBV}}{}{\mpar{C,P}}
$$
\end{thm}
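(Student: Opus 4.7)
The plan is to prove the theorem by induction on the size of the context $\mctx*{C}$, leveraging the Shallow Splitting Theorem (Theorem~\ref{thm:shallow_split}) to handle the copar layers.

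For the base case $\mctx*{C} = \emptyctx$, I take $C = \circ$: the required derivation $\vlderd{}{\mathsf{LBV}}{X}{\mpar{\circ, X}}$ is trivial by the unit equivalence $\mpar{\circ, X} = X$, and $\mpar{\circ, P} = P = \mctx{C}{P}$ is provable by hypothesis.

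For the inductive step, I analyze the outermost layer of $\mctx*{C}$. In the \emph{par case}, $\mctx*{C} = \mpar{S, \mctx{C'}{\emptyctx}}$, the structure $S$ can be absorbed directly into the extracted material: assuming the inductive hypothesis yields a suitable $C'$ for the sub-problem on $\mctx*{C'}$, I take $C = \mpar{S, C'}$ and lift the IH derivation inside the par-context $\mpar{S, \emptyctx}$ using context-closure of the rules; the proof of $\mpar{C, P}$ then follows by composing the IH proof of $\mpar{C', P}$ with $S$ via par. In the \emph{copar case}, $\mctx*{C} = \mand{S, \mctx{C'}{\emptyctx}}$, the hypothesis gives $\mand{S, \mctx{C'}{P}}$ provable. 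Rewriting this as $\mpar{\mand{S, \mctx{C'}{P}}, \circ}$ and invoking Theorem~\ref{thm:shallow_split} with $R := S$ and $T := \mctx{C'}{P}$ yields structures $Q_1, Q_2$ with a derivation $\vlderd{}{}{\circ}{\mpar{Q_1, Q_2}}$ together with proofs of $\mpar{S, Q_1}$ and $\mpar{\mctx{C'}{P}, Q_2}$. A key observation is that downward rules of \textsf{LBV} never decrease the size of a formula, so the derivation from $\mpar{Q_1, Q_2}$ to $\circ$ forces $Q_1 = Q_2 = \circ$; this yields \emph{separate} proofs of $S$ and $\mctx{C'}{P}$. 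Applying the IH to the strictly smaller context $\mctx*{C'}$ (now with $\mctx{C'}{P}$ provable) furnishes $C'$, from which $C$ is assembled by combining $C'$ with $S$; the derivation $\vlderd{}{}{\mand{S, \mctx{C'}{X}}}{\mpar{C, X}}$ is built by invoking the switch rule to re-introduce the copar layer around the result of the IH derivation, using the proof of $S$ provided by splitting.

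The main obstacle will be ensuring the correct assembly of $C$ in the copar case so that the derivation to $\mctx{C}{X}$ holds \emph{uniformly} in $X$: a careful invariant must be maintained at each inductive step, and the switch rule plays the pivotal role in threading the sub-derivations together. A secondary subtlety arises in the par case, where the IH's provability premise is not immediately available from $\mpar{S, \mctx{C'}{P}}$ being provable alone; I intend to resolve this by iterating the par-peeling until either the base case or a copar layer is reached—at which point shallow splitting delivers the missing provability, collapsing the chain of par cases into a single well-founded recursion.
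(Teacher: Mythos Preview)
Your overall architecture---induction on the size of the context, with shallow splitting doing the heavy lifting---is the right one, and your copar case is in fact cleaner than the paper's: applying Theorem~\ref{thm:shallow_split} with outer structure $\circ$ and arguing by size that $Q_1 = Q_2 = \circ$ rigorously produces the separate proofs of $S$ and $\mctx{C'}{P}$, whereas the paper simply asserts that a proof of a top-level copar decomposes.

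The gap is in the par case. Your first description is not merely awaiting a fix, it is wrong on its own terms: even granting a $C'$ from the IH, the line ``the proof of $\mpar{C,P}$ then follows by composing the IH proof of $\mpar{C',P}$ with $S$ via par'' does not go through, because in \textsf{LBV} you cannot manufacture a proof of $\mpar{S,C',P}$ from a proof of $\mpar{C',P}$ without a separate proof of $S$, and you have none. More seriously, your proposed repair---peel the par layers, reach a copar, and then ``shallow splitting delivers the missing provability''---does not mesh with the copar argument you set up. Once the copar sits under accumulated par material $S$, the outer structure fed to Theorem~\ref{thm:shallow_split} is $S$, not $\circ$; the size trick no longer forces $Q_1 = Q_2 = \circ$, so you do \emph{not} recover provability of $\mctx{C'}{P}$ and the IH still cannot be applied to $\mctx*{C'}$.

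What actually works (and what the paper does) is to treat par-over-copar as a single case: write $\mctx{C}{\emptyctx} = \mpar{\mand{\mctx{C''}{\emptyctx},Q'},Q}$, split the provable $\mpar{\mand{\mctx{C''}{P},Q'},Q}$ to obtain $Q_1,Q_2$ with a derivation $\vlderd{}{}{Q}{\mpar{Q_1,Q_2}}$ and proofs of $\mpar{\mctx{C''}{P},Q_1}$ and $\mpar{Q',Q_2}$, and then apply the IH to the \emph{new} context $\mpar{\mctx{C''}{\emptyctx},Q_1}$. This is strictly smaller because $|Q_1| \le |Q| < |Q| + |Q'|$. The resulting $C$ is the one furnished by that recursive call; the derivation to $\mctx{C}{X}$ is reassembled using the proof of $\mpar{Q',Q_2}$, one instance of $\mrule{s}$, and the derivation $\mpar{Q_1,Q_2}\to Q$. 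Your sentence about ``collapsing the chain of par cases into a single well-founded recursion'' gestures at this, but the recursive call must land on $\mpar{\mctx{C''}{\emptyctx},Q_1}$, not on $\mctx*{C'}$, and $C$ is \emph{not} $\mpar{S,C'}$.
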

\begin{proof}
Per induzione sulla dimensione di $\mctx{C}{\emptyctx}$. Il caso base \`e triviale, $C=\circ$. I casi induttivi sono:
\begin{enumerate}
	\item $\mctx{C}{\emptyctx} = \mand{\mctx{C'}{\emptyctx}, Q}$, per qualche $Q \not= \circ$. Se $\mctx{C}{P}$ \`e dimostrabile, allora devono esistere le dimostrazioni di $\mctx{C'}{P}$ e di $Q$. Applicando l'ipotesi induttiva su $\mctx{C'}{P}$, otteniamo $C$ tale che, per ogni $X$:
	$$
		\vlderivation{
			\vlde{}{}{\mand{\mctx{C'}{X}, Q}}{
				\vlde{}{}{\mctx{C'}{X}}{\vlhy{\mpar{C, X}}}
			}
		}
	$$
	e tale che $\mpar{C,P}$ \`e dimostrabile in \textsf{LBV}. Lo stesso argomento si applica quando $\mctx{C}{\emptyctx} = \mand{Q, \mctx{C'}{\emptyctx}}$ con $Q \not= \circ$.
	\item $\mctx{C}{\emptyctx} = \mpar{\mctx{C'}{\emptyctx}, Q}$, per qualche $Q \not= \circ$. Assumiamo che $\mctx{C'}{\emptyctx}$ non sia un par: questa ipotesi non \`e limitativa, perch\'e \`e sempre possibile far ``rientrare'' il parallelo in $Q$, lasciando $\mctx*{C'}$ come copar. Se alla fine di questo processo otteniamo $\mctx{C'}{\emptyctx} = \emptyctx$, il teorema \`e banalmente provato. Quindi $\mctx{C'}{\emptyctx} = \mand{\mctx{C''}{\emptyctx},Q'}$ con $Q' \not= \circ$. Per il Teorema~\ref{thm:shallow_split}, esistono:
	$$
		\vlderd{}{\mathsf{LBV}}{Q}{\mpar{Q_1,Q_2}}
		\qquad,\qquad
		\vlproofd{\Pi}{\mathsf{LBV}}{\mpar{\mctx{C''}{P}, Q_1}}
		\qquad,\qquad
		\vlproofd{}{\mathsf{LBV}}{\mpar{Q', Q_2}}
	$$
	Ora, applicando l'ipotesi induttiva su $\Pi$, otteniamo:
	$$
		\vlderivation{
			\vldd{}{}{\mpar{\mand{\mctx{C''}{X}, Q'}, Q} = \mand{\mctx{C'}{X}, Q} = \mctx{C}{X}} {
				\vlin{}{\mrule{s}}{\mpar{\mand{\mctx{C''}{X}, Q'}, Q_1, Q_2}}{
					\vldd{}{}{\mpar{\mand{\mpar{Q',Q_2}, \mctx{C''}{X}}, Q_1}}{
						\vldd{}{}{\mpar{\mctx{C''}{X}, Q_1}}{\vlhy{\mpar{C, X}}}
					}
				}
			}
		}
		\qquad\mbox{e}\quad
		\vlproofd{}{}{\mpar{C, P}}
	$$
	Analogamente si dimostra $\mctx{C'}{\emptyctx} = \mand{Q',\mctx{C''}{\emptyctx}}$ e si usa lo stesso argomento per $\mctx{C}{\emptyctx} = \mpar{Q,\mctx{C'}{\emptyctx}}$.
\end{enumerate}
\end{proof}

\begin{cor}[Splitting]\label{cor:splitting}
Per ogni struttura $P$ e $Q$ e contesto $\mctx*{C}$, se $\mctx*{C}\mand{P,Q}$ \`e dimostrabile in \textsf{LBV}, allora esistono due strutture $S_1$ e $S_2$ tali che, per ogni struttura $X$, esistono le derivazioni:
$$
	\vlderd{}{\mathsf{LBV}}{\mctx{C}{X}}{\mpar{X,S_1,S_2}}
	\qquad,\qquad
	\vlproofd{}{\mathsf{LBV}}{\mpar{P, S_1}}
	\qquad,\qquad
	\vlproofd{}{\mathsf{LBV}}{\mpar{Q, S_2}}
$$
\end{cor}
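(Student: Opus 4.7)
Il piano è ottenere la scissione desiderata componendo i due risultati precedenti: il Teorema~\ref{thm:ctx_reduction} di riduzione del contesto eliminerà il contesto ambiente $\mctx*{C}$, riducendolo ad una singola struttura ausiliaria $C$ che interagisce con la formula saturante solo attraverso un par; successivamente il Teorema~\ref{thm:shallow_split} di scissione di superficie separerà il copar $\mand{P,Q}$ che a quel punto si troverà in superficie. L'idea è dunque di trasformare un problema di scissione ``in profondità'' in due problemi più semplici affrontati in sequenza.

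Per prima cosa, applicherei Context Reduction a $\mctx{C}{\mand{P,Q}}$, considerando $\mand{P,Q}$ come la formula che satura il contesto. Questo ci fornisce una struttura $C$ con due proprietà: $\mpar{C, \mand{P,Q}}$ è dimostrabile in \textsf{LBV}, e per ogni struttura $X$ esiste una derivazione $\vlderd{}{\mathsf{LBV}}{\mctx{C}{X}}{\mpar{C,X}}$. Il secondo punto è cruciale: è proprio l'\emph{uniformità} in $X$ a garantire la forma parametrica richiesta dall'enunciato del corollario.

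Successivamente, osserverei che $\mpar{C, \mand{P,Q}} = \mpar{\mand{P,Q}, C}$ è esattamente nella forma richiesta dalla Shallow Splitting (con $R := P$, $T := Q$, e la formula esterna che gioca il ruolo di $P$ nel Teorema~\ref{thm:shallow_split} uguale a $C$). L'applicazione del teorema restituisce strutture $S_1$ e $S_2$ assieme ad una derivazione $\vlderd{}{\mathsf{LBV}}{C}{\mpar{S_1,S_2}}$ e a dimostrazioni di $\mpar{P,S_1}$ e $\mpar{Q,S_2}$: queste ultime due sono già esattamente due delle tre conclusioni richieste dal corollario.

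Rimane da costruire, per $X$ arbitrario, la derivazione $\mctx{C}{X} \leadsto^* \mpar{X,S_1,S_2}$. La ottengo per composizione: prima uso la derivazione da riduzione del contesto $\mctx{C}{X} \leadsto^* \mpar{C,X}$, poi applico la derivazione $C \leadsto^* \mpar{S_1,S_2}$ immersa nel contesto par $\mpar{\emptyctx, X}$, ottenendo $\mpar{C,X} \leadsto^* \mpar{S_1,S_2,X}$. L'unico punto sottile — e il solo che richiede qualche cautela — è proprio la legittimità di questa immersione: essa è gratuita precisamente perché la metodologia deep inference permette l'applicazione delle regole a profondità arbitraria, per cui ogni passo della derivazione $C \leadsto^* \mpar{S_1,S_2}$ resta ammissibile anche dentro al contesto $\mpar{\emptyctx, X}$. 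Non prevedo ostacoli significativi: il lavoro duro è stato svolto dai due teoremi precedenti, e il corollario è essenzialmente una composizione di bookkeeping fra di essi.
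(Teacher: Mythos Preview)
La tua proposta \`e corretta e segue sostanzialmente lo stesso approccio del testo: combinare la riduzione del contesto con lo shallow splitting. Il testo, nella sua dimostrazione di una riga, elenca i due teoremi nell'ordine opposto (``prima il Teorema~\ref{thm:shallow_split}, poi il Teorema~\ref{thm:ctx_reduction}''), ma l'ordine che adotti tu \`e quello naturale che fa funzionare l'argomento, e il passaggio di composizione che espliciti \`e esattamente ci\`o che la dimostrazione laconica lascia implicito.
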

\begin{proof}
Prima si applica il Teorema~\ref{thm:shallow_split}, poi il Teorema~\ref{thm:ctx_reduction}.
\end{proof}

Infine, prima di passare alla cut elimination, occorre enunciare un ultimo semplice risultato.

\begin{prop}\label{prop:lbv_ctx_ins}
Per ogni struttura $P, Q$ e contesto $\mctx*{C}$, esiste una derivazione:
$$
	\vlderd{}{\{\mvlrule{s}\}}{\mpar{\mctx{C}{P}, Q}}{\mctx*{C}\mpar{P, Q}}
$$
\end{prop}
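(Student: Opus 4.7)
The plan is a structural induction on the context $\mctx*{C}$, mirroring exactly the argument used for Proposizione~\ref{prop:sks_ctx_ins} in the \textsf{SKS} setting. The statement is tailor-made for this induction because the only rule we are allowed to use is the switch $\mrule{s}$, which is precisely what lets a formula ``escape'' from under a copar constructor.

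For the base case $\mctx{C}{\emptyctx} = \emptyctx$, the premise $\mctx*{C}\mpar{P,Q}$ and the conclusion $\mpar{\mctx{C}{P},Q}$ coincide up to the equivalence relation, so the empty derivation suffices. For the inductive cases I would split on the outermost constructor of $\mctx*{C}$. If $\mctx{C}{\emptyctx}$ has shape $\mpar{R,\mctx{C'}{\emptyctx}}$ (or symmetrically, $\mpar{\mctx{C'}{\emptyctx},R}$, reduced to this case by commutativity), the induction hypothesis gives a derivation
$$
    \vlderd{\Pi}{\{\mvlrule{s}\}}{\mpar{\mctx{C'}{P},Q}}{\mctx*{C'}\mpar{P,Q}}
$$
and the associativity of $\mpar{\cdot,\cdot}$ lets us place $\mpar{R,\Pi}$ inside the required par context, since $\mpar{R,\mpar{\mctx{C'}{P},Q}} = \mpar{\mctx{C}{P},Q}$. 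No rule application is needed in this case.

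The only interesting step is when $\mctx{C}{\emptyctx} = \mand{R,\mctx{C'}{\emptyctx}}$ (the symmetric variant is reduced by commutativity). Again, the induction hypothesis provides $\Pi$ as above, which can be placed inside the copar context as $\mand{R,\Pi}$, producing a derivation from $\mctx*{C'}\mand{R,\mpar{P,Q}} = \mand{R,\mctx*{C'}\mpar{P,Q}}$ down to $\mand{R,\mpar{\mctx{C'}{P},Q}}$. A single instance of the switch rule then delivers the desired conclusion:
$$
    \vlderivation{
        \vlin{}{\mvlrule{s}}{\mpar{\mand{R,\mctx{C'}{P}},Q} = \mpar{\mctx{C}{P},Q}}{
            \vldd{\mand{R,\Pi}}{\{\mvlrule{s}\}}{\mand{R,\mpar{\mctx{C'}{P},Q}}}{
                \vlhy{\mctx*{C}\mpar{P,Q} = \mand{R,\mctx*{C'}\mpar{P,Q}}}
            }
        }
    }
$$

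There is no real obstacle here: the statement is essentially the combinatorial fact that switch alone is strong enough to push any ``extra'' structure $Q$ out through a copar, and the induction cleanly isolates the single non-trivial step. The only point that requires a little care is ensuring that the decomposition of $\mctx*{C}$ into outermost par/copar constructors is exhaustive; this follows from the grammar of contexts together with associativity, commutativity and the unit equations, so that any context can be assumed (up to $=$) to be either empty, a par, or a copar at its root.
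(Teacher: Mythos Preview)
Your proof is correct and follows exactly the same approach as the paper, which simply refers back to the identical induction on the context carried out in Proposizione~\ref{prop:sks_ctx_ins} for \textsf{SKS}. The base case, the par case, and the copar case with the single application of $\mrule{s}$ are all handled precisely as in that earlier proof.
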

\begin{proof}
Per induzione sulla dimensione del contesto $\mctx*{C}$. Questa dimostrazione \`e uguale a quella della Proposizione~\ref{prop:sks_ctx_ins}, che stabilisce la stessa proposizione per il Sistema \textsf{KS}.
\end{proof}

\begin{thm}[Cut elimination]
La regola $\mruleup{ai}$ \`e ammissibile in \textsf{LBV}.
\end{thm}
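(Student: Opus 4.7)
The strategy follows the splitting-based argument sketched in the paragraphs preceding the theorem. Select a topmost instance of $\mruleup{ai}$ in a given proof; the subproof $\Pi$ immediately above it is then already cut-free, with conclusion $\mctx*{C}\mand{a,\mneg{a}}$. Applying Corollary~\ref{cor:splitting} to $\Pi$ yields structures $S_1$ and $S_2$, a cut-free derivation $\Psi$ sending $\mctx{C}{X}$ to $\mpar{X, S_1, S_2}$ for every $X$, and cut-free proofs $\Pi_1$ of $\mpar{a, S_1}$ and $\Pi_2$ of $\mpar{\mneg{a}, S_2}$. Specializing $X = \circ$ in $\Psi$ produces a cut-free derivation from $\mpar{S_1, S_2}$ to $\mctx{C}{\circ}$, the original conclusion of the eliminated cut. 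Substituting it into the ambient proof reduces the task to exhibiting a cut-free proof of $\mpar{S_1, S_2}$.

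The heart of the argument is therefore the following key lemma: from cut-free proofs $\Pi_1$ of $\mpar{a, S_1}$ and $\Pi_2$ of $\mpar{\mneg{a}, S_2}$ one must construct a cut-free proof of $\mpar{S_1, S_2}$. The crucial observation, emphasized in the paragraphs motivating the theorem, is that in a cut-free \textsf{LBV} derivation every atom appearing in the conclusion must originate from an $\mruledn{ai}$ instance; this ``atomic flow'' property is what distinguishes \textsf{LBV} from systems such as \textsf{KS}. Using this fact one can trace the conclusion-atom $a$ of $\Pi_1$ back to a specific $\mruledn{ai}$ step, at which it was introduced together with a paired $\mneg{a}$ that necessarily ends up somewhere in $S_1$. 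The lemma is then proved by induction on the size of $\Pi_1$, with a case analysis on its last rule: if it is $\mrule{s}$, or an $\mruledn{ai}$ instance not involving the target $a$, we permute it downwards and invoke the induction hypothesis on a strictly smaller subproof; if it is the $\mruledn{ai}$ that creates precisely the target $a$ together with its matching $\mneg{a}$, we splice $\Pi_2$ into $\Pi_1$ at that point, using Proposition~\ref{prop:lbv_ctx_ins} to slide the structure $S_2$ out across the intermediate contexts, thereby obtaining a cut-free derivation of $\mpar{S_1, S_2}$ directly.

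The main obstacle will be the key lemma itself: the atomic-flow argument must be made precise enough to pinpoint the $\mruledn{ai}$ instance responsible for the target $a$, and the splicing step must be verified to produce a legitimate cut-free derivation. Each rule-permutation subcase also has to be checked to strictly decrease the inductive measure. Once the lemma is established, the theorem follows by an outer induction on the number of cut instances in the proof: each iteration of the plan above eliminates one topmost $\mruleup{ai}$ without introducing any new ones, so the procedure terminates with a fully cut-free proof of the original conclusion.
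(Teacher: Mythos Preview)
Your overall strategy---select a topmost cut, apply splitting, and reduce to constructing a cut-free proof of $\mpar{S_1,S_2}$ from cut-free proofs $\Pi_1$ of $\mpar{a,S_1}$ and $\Pi_2$ of $\mpar{\mneg{a},S_2}$---matches the paper exactly. The difference, and the gap, lies in how you handle the key lemma.

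Your proposed induction on the last rule of $\Pi_1$ does not go through as stated. When the last rule is an instance of $\mrule{s}$ that actually moves the target atom $a$---for example, a step with conclusion $\mpar{\mand{P,Q},a}$ (so $S_1=\mand{P,Q}$) and premise $\mand{P,\mpar{Q,a}}$---the premise no longer has the shape $\mpar{a,S_1'}$, so you cannot ``permute it downwards and invoke the induction hypothesis on a strictly smaller subproof''. In general the distinguished $a$ may sit arbitrarily deep inside copars along the derivation, so an induction phrased only in terms of proofs of $\mpar{a,\text{something}}$ is too weak.

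The paper avoids rule permutation entirely by a substitution argument. Having located (via the atomic-flow observation you correctly mention) the $\mruledn{ai}$ in $\Pi_1$ that introduces the distinguished conclusion-occurrence of $a$ together with its partner $\mneg{a}$ (which necessarily lands inside $S_1$, so $S_1=\mctx{C'}{\mneg{a}}$ for some context $\mctx*{C'}$), one replaces those two traced occurrences by $\circ$ throughout $\Pi_1$. Every $\mrule{s}$ instance remains a valid $\mrule{s}$ instance under this replacement, and the target $\mruledn{ai}$ step becomes trivial; the result is a cut-free proof of $\mctx{C'}{\circ}$. The same substitution applied to $\Pi_2$ gives a cut-free proof of $\mctx{D'}{\circ}$ with $S_2=\mctx{D'}{a}$. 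One then reassembles: from the proof of $\mctx{C'}{\circ}$, derive $\mctx{C'}{\mctx{D'}{\circ}}$ using the proof of $\mctx{D'}{\circ}$ inside the context, apply a single $\mruledn{ai}$ to obtain $\mctx{C'}{\mctx*{D'}\mpar{a,\mneg{a}}}$, and use Proposition~\ref{prop:lbv_ctx_ins} twice to reach $\mpar{\mctx{C'}{\mneg{a}},\mctx{D'}{a}}=\mpar{S_1,S_2}$. This requires no case analysis on $\mrule{s}$ at all and is what you should use in place of the rule-permutation induction.
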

\begin{proof}
Consideriamo la dimostrazione:
$$
	\vlderivation{
		\vlin{}{\mruleup{ai}}{\mctx{C}{\circ}}{
			\vlpd{}{\mathsf{LBV}}{\mctx*{C}\mand{a, \mneg{a}}}
		}
	}
$$
Per il Corollario~\ref{cor:splitting}, esistono $S_1$ e $S_2$ tali che esistono le derivazioni:
$$
	\vlderd{}{\mathsf{LBV}}{\mctx{C}{\circ}}{\mpar{S_1,S_2}}
	\qquad,\qquad
	\vlproofd{\Pi_1}{\mathsf{LBV}}{\mpar{a, S_1}}
	\qquad,\qquad
	\vlproofd{\Pi_2}{\mathsf{LBV}}{\mpar{\mneg{a}, S_2}}
$$
Vogliamo individuare, nella dimostrazione $\Pi_1$, il punto in cui l'atomo $a$ viene introdotto. Certamente deve esistere un contesto $\mctx*{C'}$ tale che $S_1 = \mctx{C'}{\mneg{a}}$. Inoltre, deve esistere un contesto $\mctx*{C''}$ tale che:
$$
	\vlderivation{
		\vldd{\Pi_1'}{}{\mpar{a, \mctx{C'}{\mneg{a}}}}{
			\vlin{}{\mruledn{ai}}{\mctx*{C''}\mpar{a,\mneg{a}}}{
				\vlpd{\Pi_1''}{}{\mctx{C''}{\circ}}
			}
		}
	}
$$
sia la dimostrazione $\Pi_1$ in cui abbiamo individuato l'applicazione della regola $\mruledn{ai}$. Ora, sostituendo in $\Pi_1'$ le occorrenze di $a$ e $\mneg{a}$ con $\circ$, otteniamo una dimostrazione $\Psi_1'$, grazie alla quale \`e possibile dimostrare:
$$
	\vlderivation{
		\vldd{\Psi_1'}{}{\mctx{C'}{\circ}}{
			\vlpd{\Pi_1''}{}{\mctx{C''}{\circ}}
		}
	}
$$
Analogamente possiamo trasformare la dimostrazione di $\Pi_2$ in una dimostrazione di $\mctx{D'}{\circ}$ dove $S_2 = \mctx{D'}{a}$. Ora possiamo concludere, esibendo la seguente dimostrazione:
$$
	\vlderivation{
		\vlde{}{}{\mctx{C}{\circ}}{
			\vldd{\Phi}{}{\mpar{\mctx{C'}{\mneg{a}}, \mctx{D'}{a}}}{
				\vlin{}{\mruledn{ai}}{\mctx{C'}{\mctx*{D'}\mpar{a, \mneg{a}}}}{
					\vlde{}{}{\mctx{C'}{\mctx{D'}{\circ}}}{
						\vlpr{}{}{\mctx{C'}{\circ}}
					}
				}
			}
		}
	}
$$
in cui $\Phi$ \`e ottenuta applicando due volte la Proposizione~\ref{prop:lbv_ctx_ins}.

Possiamo ripetere induttivamente l'argomento per ogni dimostrazione di $\mathsf{LBV} \cup \{\mruleup{ai}\}$, partendo dall'alto, ed eliminare una per una tutte le istanze di $\mruleup{ai}$.
\end{proof}

\subsection{Un'interpretazione operazionale}\label{sec:lbv_opi}

Quando lette dal basso all'alto, cio\`e nel verso della \emph{proof search}, le regole d'inferenza possono essere direzionate, trasponendole in \emph{regole di riscrittura}. Questo \`e reso possibile dal fatto che, nel calcolo delle strutture, ogni regola ha sempre \emph{al pi\`u una premessa}. Riscriviamo le regole del Sistema \textsf{LBV} sotto questa nuova prospettiva; la regola d'interazione atomica diventa:
$$
	\mpar{a, \mneg{a}} \rightarrow \circ \qquad \mrule{ai}
$$
che dice che due atomi di polarit\`a opposta messi in parallelo possono interagire. Abbiamo omesso la chiusura contestuale; nei sistemi di riscrittura si \`e soliti fattorizzare la chiusura contestuale con una regola:
\begin{center}
	\AxiomC{$P \rightarrow Q$}
	\RightLabel{$\mrule{di}$}
	\UnaryInfC{$\mctx{C}{P} \rightarrow \mctx{C}{Q}$}
	\DisplayProof{}
\end{center}
che per noi corrisponde moralmente all'impiego della metodologia deep inference.

L'assioma del Sistema \textsf{LBV} non viene trasposto: infatti in questa interpretazione, significa solo che l'unit\`a non pu\`o essere riscritta, e che essa rappresenta l'unico \emph{valore} del Sistema. Procedendo nel processo di riscrittura, potremo in generale imbatterci in situazioni in cui nessuna regola \`e applicabile; l'unico caso ``accettato'' \`e quello di una computazione che termina sul simbolo $\circ$. Negli altri casi, il processo di riscrittura non avr\`a individuato una dimostrazione, bens\`i una derivazione \emph{bloccata}.

L'equivalenza tra formule, definita come in Figura~\ref{fig:sys_lbv}, \`e nota nel mondo dei sistemi di riscrittura, come \emph{riscrittura modulo} una certa relazione d'equivalenza, come riportato in~\cite{BaaNip98}. Un modo di esprimerla \`e usare la regola:
\begin{center}
	\AxiomC{$P = P'$}
	\AxiomC{$P' \rightarrow Q'$}
	\AxiomC{$Q' = Q$}
	\RightLabel{$\mrule{eq}$}
	\TrinaryInfC{$P \rightarrow Q$}
	\DisplayProof{}
\end{center}

La riscrittura modulo associativit\`a, commutativit\`a e identit\`a \`e delicata, perch\'e non \`e sempre terminante. Per garantire la terminazione occorre imporre dei vincoli sull'applicabilit\`a della regola $\mrule{eq}$, come mostrato in~\cite{BaiPetWil89}. Inoltre~\cite{Kah06} ha sviluppato una tecnica per ridurre il non-determinismo dettato dalla fine grana delle regole di \textsf{LBV}.

Se la regola di switch \`e di difficile comprensione quando la si considera come regola d'inferenza, la sua trasposizione in regola di riscrittura offre una prospettiva molto pi\`u intuitiva. Nell'approccio operazionale, consideriamo le due formule all'interno di un ``par'' come due processi paralleli, che girano simultaneamente e che si ``conoscono'' a vicenda (cio\`e che hanno modo di individuarsi, ad esempio possiedono le reciproche coordinate all'interno di una rete), e pertanto sono in grado di interagire. I processi in ``copar'' girano anch'essi in parallelo, ma non hanno la possibilit\`a di comunicare, perch\'e non possiedono l'informazione sulle reciproche coordinate. Oltre all'interazione, l'unica altra operazione possibile per i processi, \`e ``dimenticarsi'': un processo all'interno di un ``par'' pu\`o decidere di eliminare l'informazione sulle coordinate di un'altro processo, o in altre parole, \emph{disconoscerlo}. L'effetto di questo comportamento, \`e inibire ogni possibilit\`a di interazione tra i processi coinvolti.

Questo meccanismo d\`a luogo a una serie di casi. Siano $P$ e $Q$ due processi in un'ambiente ``par'':
$$
	\mpar{P, Q}{\;}\rightarrow{\;}?
$$
Cosa possono fare $P$ e $Q$ a parte interagire?
\begin{enumerate}[label=\arabic*.]
	\item possono lavorare per conto loro, cio\`e, ai fini del comportamento concorrente, non fare niente. Questo caso va contemplato per completezza dell'interpretazione: $\mpar{P, Q} \rightarrow \mpar{P, Q}$;
	\item $P$ pu\`o disconoscere $Q$ o viceversa: in entrambi in casi $\mpar{P, Q} \rightarrow \mand{P,Q}$;
	\item\label{lbv_opi_sw1} se $P$ \`e una composizione parallela $\mpar{P_1, P_2}$:
	\begin{enumerate}[label=\arabic{enumi}.\arabic*.]
		\item $P_1$ pu\`o disconoscere $Q$: $\mpar{P_1, P_2, Q} \rightarrow \mpar{\mand{P_1,Q}, P_2}$;
		\item $P_2$ pu\`o disconoscere $Q$: $\mpar{P_1, P_2, Q} \rightarrow \mpar{\mand{P_2,Q}, P_1}$;
		\item se $P_1$ \`e una composizione parallela $\mpar{P_{11}, P_{12}}$ \ldots
		\item se $P_1$ \`e un ``copar'' $\mand{P_{11}, P_{12}}$ \ldots
		\item se $P_2$ \`e una composizione parallela $\mpar{P_{21}, P_{22}}$ \ldots
		\item se $P_2$ \`e un ``copar'' $\mand{P_{21}, P_{22}}$ \ldots
	\end{enumerate}
	\item\label{lbv_opi_sw2} se $P$ \`e un ``copar'' $\mand{P_1, P_2}$:
	\begin{enumerate}[label=\arabic{enumi}.\arabic*.]
		\item $P_1$ pu\`o disconoscere $Q$: $\mpar{\mand{P_1, P_2}, Q} \rightarrow \mand{\mpar{P_2, Q}, P_1}$;
		\item $P_2$ pu\`o disconoscere $Q$: $\mpar{\mand{P_1, P_2}, Q} \rightarrow \mand{\mpar{P_1, Q}, P_2}$;
		\item se $P_1$ \`e una composizione parallela $\mpar{P_{11}, P_{12}}$ \ldots
		\item \ldots
	\end{enumerate}
	\item se $P$ \`e una composizione parallela $\mpar{P_1, P_2}$, si procede in maniera simmetrica rispetto al caso \ref{lbv_opi_sw1};
	\item se $P$ \`e un ``copar'' $\mand{P_1, P_2}$, simmetricamente rispetto a \ref{lbv_opi_sw2}.
\end{enumerate}

\begin{figure}[t!]
\vspace{.7em}
\begin{minipage}[t]{.7\textwidth}\end{minipage}
\begin{minipage}[t]{.93\textwidth}
$\mmerge{P}{Q} = \{ \mpar{P, Q}, \mand{P, Q} \} \cup \mmerge*{P}{Q} \cup \mmerge*{Q}{P}$ \
\vspace{.6em} \\
dove: \
\vspace{.4em} \\
\begin{tabular}{lrcl}
	& $\mmerge*{\circ}{Q} = \mmerge*{a}{Q}$ & $=$ & $\varnothing$ \\
	& $\mmerge*{\mpar{S, T}}{Q}$ & $=$ & $\{ \mpar{S, X} \:|\: X \in \mmerge{T}{Q} \} \cup \{ \mpar{X, T} \:|\: X \in \mmerge{S}{Q} \}$ \\
	& $\mmerge*{\mand{S, T}}{Q}$ & $=$ & $\{ \mand{S, X} \:|\: X \in \mmerge{T}{Q} \} \cup \{ \mand{X, T} \:|\: X \in \mmerge{S}{Q} \}$
\end{tabular}
\end{minipage}
\vspace{.4em}
\caption{Definizione dell'operatore di \emph{merge}}
\label{fig:merge}
\end{figure}

Questa struttura, chiaramente ricorsiva, porta alla definizione di~\cite{Gug02} di \emph{merge set}, riportata in Figura~\ref{fig:merge}. Due processi $P$ e $Q$ immersi in un contesto parallelo, possono muovere ad un processo $R$ appartenente al merge set, come prescritto dalla \emph{regola di merge}:
$$
	\mpar{P, Q} \rightarrow R \;\mbox{ se } R \in \mmerge{P}{Q} \qquad \mrule{g}
$$

Questa regola \`e abbastanza pesante, poich\'e necessita il ricalcolo del merge set ogni volta che dev'essere applicata. Ecco perch\'e facciamo ricorso alla regola di switch:
$$
	\mpar{\mand{P, Q}, R} \rightarrow \mand{P, \mpar{Q, R}} \qquad \mrule{s}
$$

\begin{thm}
La regola di merge \`e eliminabile in presenza di $\mrule{s}$.
\end{thm}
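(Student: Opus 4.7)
Il piano \`e di procedere per induzione strutturale sulla definizione ricorsiva del merge set di Figura~\ref{fig:merge}, mostrando che per ogni $R \in \mmerge{P}{Q}$ esiste una derivazione $\vlderd{}{\{\mrule{s}\}}{\mpar{P,Q}}{R}$ (implicitamente modulo l'equivalenza di Figura~\ref{fig:sys_lbv}). Ogni applicazione di $\mrule{g}$ potr\`a allora essere rimpiazzata dalla corrispondente sequenza di istanze di $\mrule{s}$, e la chiusura contestuale delle regole di riscrittura garantir\`a l'estensione a contesti arbitrari.

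I casi base della definizione di $\mmerge{P}{Q}$ sono due. Per $R = \mpar{P,Q}$ basta la derivazione vuota. Il caso $R = \mand{P,Q}$ \`e invece il vero punto nevralgico, perch\'e lo switch richiede un copar gi\`a annidato dentro un par e a prima vista sintetizzarne uno al livello pi\`u esterno sembra impossibile. Qui per\`o interviene l'equivalenza d'unit\`a $P = \mand{P, \circ}$ e $\mpar{\circ, Q} = Q$, che permette di riscrivere $\mpar{P,Q}$ nella forma $\mpar{\mand{P, \circ}, Q}$; una singola istanza di $\mrule{s}$ produce allora la premessa $\mand{P, \mpar{\circ, Q}}$, che per equivalenza coincide con $\mand{P, Q}$.

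I casi induttivi si riducono, per simmetria fra $P$ e $Q$, al trattamento di $R \in \mmerge*{P}{Q}$, con ulteriore distinzione sulla forma di $P$. Se $P = \mpar{S, T}$ e $R = \mpar{S, X}$ con $X \in \mmerge{T}{Q}$, l'associativit\`a riscrive $\mpar{\mpar{S,T}, Q}$ come $\mpar{S, \mpar{T, Q}}$ e l'ipotesi induttiva fornisce una derivazione di soli switch con conclusione $\mpar{T,Q}$ e premessa $X$, che la chiusura contestuale trasporta dentro $\mpar{S, \emptyctx}$. Se invece $P = \mand{S, T}$ e $R = \mand{S, X}$ con $X \in \mmerge{T}{Q}$, una prima istanza di $\mrule{s}$ ``apre'' il copar passando da $\mpar{\mand{S,T}, Q}$ alla premessa $\mand{S, \mpar{T, Q}}$, dopodich\'e l'ipotesi induttiva sostituisce $\mpar{T, Q}$ con $X$ all'interno del contesto $\mand{S, \emptyctx}$. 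I sottocasi rimanenti ($R = \mpar{X, T}$, $R = \mand{X, T}$, e quelli duali in $\mmerge*{Q}{P}$) si trattano analogamente sfruttando la commutativit\`a di par e copar. L'unico vero ostacolo concettuale rimane dunque il caso base $R = \mand{P,Q}$: \`e proprio la presenza di $\circ$ come elemento neutro, assieme alla chiusura modulo equivalenza, a fornire l'ingrediente che lo switch da solo non saprebbe offrire.
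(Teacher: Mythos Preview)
La tua dimostrazione \`e corretta e segue essenzialmente lo stesso argomento del paper: induzione sulla costruzione ricorsiva del merge set, con il trucco dell'unit\`a $P = \mand{P,\circ}$ per il caso base $R = \mand{P,Q}$ e uno switch (eventualmente preceduto da associativit\`a) per aprire i casi ricorsivi. La tua organizzazione \`e anzi leggermente pi\`u pulita: il paper aggiunge casi base ridondanti per $R = \circ$ e $R = a$ e tratta esplicitamente solo i sottocasi provenienti da $\mmerge*{P}{Q}$, lasciando impliciti quelli simmetrici da $\mmerge*{Q}{P}$ che tu invece menzioni.
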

\begin{proof}
Siano $P$, $Q$ ed $R$ processi, tali che $R \in \mmerge{P}{Q}$. Allora:
$$
	\mpar{P, Q} \rightarrow R \qquad \mrule{g}
$$
Vogliamo dimostrare che esiste un cammino:
$$
	\mpar{P, Q} \rightarrow^* R
$$
composto di sole applicazioni di $\mrule{s}$, di $\mrule{eq}$ e di $\mrule{di}$.

Procediamo per induzione strutturale su $R$, usando la definizione di merge set di Figura~\ref{fig:merge}:
\begin{enumerate}[label=\arabic*.]
	\item $R = \circ$. Allora dev'essere $P = Q = \circ$, e quindi $\mpar{P,Q} = R$;
	\item $R = a$. Allora dev'essere $P = a$ e $Q = \circ$ o $P = \circ$ e $Q = a$. In entrambi i casi $\mpar{P, Q} = a = R$;
	\item $R = \mpar{P, Q}$ vale banalmente, in quanto $\mpar{P, Q} \rightarrow^* \mpar{P, Q}$ per ogni $P$, $Q$;
	\item $R = \mand{P, Q}$. Allora:
	\begin{center}
		\AxiomC{$\mpar{P, Q} = \mpar{\mand{P, \circ}, Q}$}
		\AxiomC{$\mpar{\mand{P, \circ}, Q} \rightarrow \mand{P, \mpar{\circ, Q}}$}
		\AxiomC{$\mand{P, \mpar{\circ, Q}} = \mand{P, Q}$}
		\TrinaryInfC{$\mpar{P, Q} \rightarrow \mand{P, Q}$}
		\DisplayProof{}
	\end{center}
	\item $R = \mpar{R', R''}$. Ci sono due sottocasi da considerare:
	\begin{enumerate}[label=\arabic{enumi}.\alph*.]
		\item $P = \mpar{R', P'}$ e $R'' \in \mmerge{P'}{Q}$. Per ipotesi induttiva, sappiamo che $\mpar{P', Q} \rightarrow^* R''$ senza usare la regola $\mrule{g}$. Allora possiamo concludere, esibendo:
		\begin{center}
			\AxiomC{$\mpar{P', Q} \rightarrow^* R''$\;\;\:}
			\RightLabel{$\mrule{di}$}
			\UnaryInfC{$\mpar{R', P', Q} \rightarrow^* \mpar{R', R''}$}
			\DisplayProof{}
		\end{center}
		e osservando che $\mpar{P, Q} = \mpar{R', P', Q}$;
		\item $P = \mpar{P'', R''}$ e $R' \in \mmerge{P''}{Q}$. Allora, come prima:
		\begin{center}
			\AxiomC{$\mpar{P'', Q} \stackrel{\mathsf{I.H.}}{\rightarrow^*} R'$\;\;\;}
			\RightLabel{$\mrule{di}$}
			\UnaryInfC{$\mpar{P'', Q, R''} \rightarrow^* \mpar{R', R''}$}
			\DisplayProof{}
		\end{center}
		e $\mpar{P, Q} = \mpar{P'', Q, R''}$.
	\end{enumerate}
	\item $R = \mand{R', R''}$. Di nuovo, seguendo la definizione di merge set, ci sono due sottocasi possibili:
	\begin{enumerate}[label=\arabic{enumi}.\alph*.]
		\item $P = \mand{R', P'}$ e $R'' \in \mmerge{P'}{Q}$. Poich\'e $\mpar{P,Q} = \mpar{\mand{R', P'}, Q}$, possiamo applicare la regola di switch per ottenere:
		$$
			\mpar{\mand{R', P'}, Q} \rightarrow \mand{R', \mpar{P', Q}}
		$$ 
		da cui, per ipotesi induttiva immersa nel contesto $\mand{R', \emptyctx}$, possiamo concludere esibendo:
		\begin{center}
			\AxiomC{$\mpar{P', Q} \stackrel{\mathsf{I.H.}}{\rightarrow^*} R''$}
			\RightLabel{$\mrule{di}$}
			\UnaryInfC{$\mand{R', \mpar{P', Q}} \rightarrow^* \mand{R', R''}$}
			\DisplayProof{}
		\end{center}
		\item $P = \mand{P'', R''}$ e $R' \in \mmerge{P''}{Q}$. Allora, grazie alle regole $\mrule{s}$ ed $\mrule{eq}$:
		$$
			\mpar{P, Q} = \mpar{\mand{R'', P''}, Q} \rightarrow \mand{R'', \mpar{P'', Q}} = \mand{\mpar{P'', Q}, R''}
		$$
		da cui concludiamo, grazie all'ipotesi induttiva e a:
		\begin{center}
			\AxiomC{$\mpar{P'', Q} \stackrel{\mathsf{I.H.}}{\rightarrow^*} R'$}
			\RightLabel{$\mrule{di}$}
			\UnaryInfC{$\mand{\mpar{P'', Q}, R''} \rightarrow^* \mand{R', R''}$}
			\DisplayProof{}
		\end{center}
	\end{enumerate}
\end{enumerate}
\end{proof}

Grazie a questo risultato, possiamo appoggiarci sulla pi\`u pratica (nonch\'e locale) regola di switch, eliminando quella di merge. Il sistema cos\`i ottenuto, costituisce una personale interpretazione operazionale del Sistema \textsf{LBV}, reminiscente le algebre di processo, che pu\`o costituire un (ulteriore) ponte tra il mondo della proof theory e quello dei modelli concorrenti, oltre ad un punto di partenza per indagini riguardanti le propriet\`a di complessit\`a della proof search.

\clearpage{\pagestyle{empty}\cleardoublepage}
\phantomsection\addcontentsline{toc}{chapter}{Conclusioni}
\chapter*{Conclusioni}
\rhead[\fancyplain{}{\bfseries CONCLUSIONI}]{\fancyplain{}{\bfseries\thepage}}
\lhead[\fancyplain{}{\bfseries\thepage}]{\fancyplain{}{\bfseries CONCLUSIONI}}

La deep inference offre una prospettiva nuova e moderna in teoria della dimostrazione. Grazie a questa metodologia, il lavoro strutturale svolto dagli alberi in shallow inference, viene collassato nell'uso dei contesti, che sono un concetto fondamentale in questo approccio. A questo proposito, \`e interessante osservare come questo metta in relazione diretta il modo di operare tipico in proof theory (con alberi di derivazione) con il mondo dei sistemi di riscrittura. Infatti le derivazione nel calcolo delle strutture possono essere linearizzate, leggendole dal basso in alto (verso della proof search), e le regole d'inferenza si possono vedere come regole di riscrittura; quali corrispondenze si possono trovare in questo senso? A quali sistemi di riscrittura corrispondono i sistemi in calcolo delle strutture, e di quali propriet\`a godono?

Inoltre, osserviamo come, nelle procedure di cut elimination per il calcolo delle strutture, l'attenzione sia posta sugli atomi da eliminare, a conseguenza del fatto che questi sistemi godono di localit\`a. Una sotto-procedura invariante nella cut elimination \`e la discesa nella dimostrazione alla ricerca del taglio, per poi risalire seguendo il flusso degli atomi coinvolti. Da questa osservazione nasce un nuovo filone di ricerca in deep inference che tratta i cosiddetti ``flussi atomici'' o ``atomic flows''; per una introduzione all'argomento, vedere~\cite{Gund09}.

Infine, esistono molti problemi rilevanti riguardanti la complessit\`a delle dimostrazioni, alcuni dei quali ancora aperti, altri gi\`a risolti, ad esempio in~\cite{Jer09, BruGug09, BruGugGunPar09}. Il calcolo delle strutture \`e un formalismo molto espressivo, ma difficile da trattare a causa del forte non-determinismo che comporta la fine grana (i.e. la vasta applicabilit\`a) delle sue regole; \cite{Kah06}~ha ideato una tecnica capace di ridurre questo non-determinismo.

Tutte le fonti e le informazioni riguardanti le ricerche in deep inference, sono reperibili online sulla pagina di Alessio Guglielmi, uno dei maggiori promotori di questo approccio, all'indirizzo:
\begin{center}
	\url{http://alessio.guglielmi.name/res/cos/}
\end{center}


\clearpage{\pagestyle{empty}\cleardoublepage}
\rhead[\fancyplain{}{\bfseries BIBLIOGRAFIA}]{\fancyplain{}{\bfseries\thepage}}
\lhead[\fancyplain{}{\bfseries\thepage}]{\fancyplain{}{\bfseries BIBLIOGRAFIA}}
\phantomsection\addcontentsline{toc}{chapter}{Bibliografia}
\bibliographystyle{humanbio}
\bibliography{iidi}


\clearpage{\pagestyle{empty}\cleardoublepage}
\chapter*{Ringraziamenti}
\thispagestyle{empty}

Ringrazio anzitutto i mei genitori Carmen e Roberto, senza i quali tutto questo non sarebbe stato possibile. Con loro, ringrazio tutta la mia famiglia per l'amore che mi hanno dato dacch\'e sono al mondo.

Ringrazio i miei amici per le ore passate a discutere insieme, per aver ascoltato pazientemente i miei vaneggianti sproloqui, ma soprattutto per avermi dato la certezza di aver sempre qualcuno su cui contare.

Infine ringrazio i miei professori, per avermi ascoltato e per la pazienza che hanno avuto nel sopportare questo tremendo rompiscatole. Senza i vostri insegnamenti, ma non solo, senza il vostro esempio, non sarei quello che sono.

Grazie di cuore a tutti quanti, grazie a chi ha sempre creduto in me, grazie a chi non ci ha creduto mai, grazie agli amici ma anche ai nemici, grazie al contributo di tutti perch\'e mi \`e stato indispensabile per raggiungere, oggi, questo risultato.

\end{document}